\newtheorem{theorem}{Theorem}[section]
\newtheorem{lemma}{Lemma}[section]
\newtheorem{proposition}{Proposition}[section]
\newtheorem{corollary}{Corollary}[section]
\newcommand{\mytitle}{Hyperboloidal initial data without logarithmic singularities}
\title{\mytitle}
\author{{Károly Csukás\,\orcidlink{0000-0002-2408-1103}\footnote{E-mail address: {\tt csukas.karoly@wigner.hu}}}\ \ \ and \ 
{István Rácz\,\orcidlink{0000-0002-1991-0472}\footnote{E-mail address: {\tt racz.istvan@wigner.hu}}}}
\affil{HUN-REN Wigner RCP, H-1121 Budapest, Konkoly Thege Mikl\'{o}s \'{u}t  29-33, Hungary}
\begin{document}
  \maketitle

  \begin{abstract}
  	Andersson and Chru\'sciel showed that generic asymptotically hyperboloidal initial data sets admit polyhomogeneous expansions, and that only a non-generic subclass of solutions of the conformal constraint equations is free of logarithmic singularities. The purpose of this work is twofold. First, within the evolutionary framework of the constraint equations, we show that the existence of a well-defined Bondi mass brings the asymptotically hyperboloidal initial data sets into a subclass whose Cauchy development guaranteed to admit a smooth boundary, by virtue of the results of Andersson and Chru\'sciel. Second, by generalizing a recent result of Beyer and Ritchie, we show that the existence of well-defined Bondi mass and angular momentum, together with some mild restrictions on the free data, implies that the generic solutions of the parabolic-hyperbolic form of the constraint equations are completely free of logarithmic singularities. We also provide numerical evidence to show that in the vicinity of Kerr, asymptotically hyperboloidal initial data without logarithmic singularities can indeed be constructed.
  \end{abstract}

  \tableofcontents

	\parskip 5pt

  \section{Introduction}
  \label{sec:intro}
  
  Like many other useful concepts in physics, an isolated self-gravitating system is also an abstraction. An isolated self-gravitating system may consist of stars and black holes, and it is assumed to be so far away from other such systems that we can essentially ignore their influence, except perhaps for their gravitational radiation effects.
  There was a long evolution of the underlying ideas until Penrose finally introduced the precise mathematical model of asymptotically simple spacetimes, given as follows \cite{Penrose:1962ij, Frauendiener:2000mk, Kroon:2016ink}.

  Consider a  smooth spacetime $(M,g)$ representing a self-gravitating isolated system. Such a spacetime is called {\it asymptotically simple} if there exists a smooth spacetime $(\widetilde M,\widetilde g)$ with boundary ${\mathscr{I}}\not=\emptyset$ such that $M$ can be diffeomorphically identified with the interior, $\widetilde M\setminus \mathscr{I}$, of $\widetilde M$  so that
  	\vskip-.2cm
  	\begin{equation}
  		\widetilde g= \Omega^2 g \ {\rm on} \ M\,,
  	\end{equation}
  	where $\Omega$ is a smooth boundary defining function on $\widetilde M$, i.e.
  	\vskip-.2cm
  	\begin{equation}
  	\Omega>0 \  {\rm on} \ M\,, \ \  {\rm and\ also}  \quad  \Omega=0 \ \ {\rm and} \ \ d\Omega\not=0 \  {\rm on}\  \mathscr{I}\,.
  	\end{equation}
  	
  	Of course, this characterization of self-gravitating isolated systems does not refer to special coordinate systems, and the real strength lies in making the key geometric structures transparent. It also follows from this definition that $g$-null geodesics are complete in those directions in which they approach $\mathscr{I}$.
  	The degree of smoothness of the rescaled metric $\widetilde g$ is critical, since the falloff behavior of the physical fields depends on this smoothness \cite{Penrose:1962ij,Friedrich:1991nn, Friedrich:2015bax}.
  	
  	Having made the above general determination, natural questions come to mind. Are the asymptotic conditions compatible with the behavior of the gravitational field in a ``sufficiently large'' class of physically realistic situations? Is it reasonable to ask whether there are solutions to Einstein's vacuum equations other than Minkowski spacetime that satisfy these conditions?
  	
  	In the course of answering these questions, the hyperboloidal initial value problem turned out to be the most appropriate tool for doing so \cite{Frauendiener:2000mk, Kroon:2016ink}. Friedrich developed a powerful formalism for studying asymptotically simple spacetimes. His conformal field equations have been used to study the evolution of suitably regular hyperboloidal data. Friedrich also proved that smooth data evolve into solutions which satisfy the conditions in the definition of asymptotically simple spacetimes. Moreover, the developments admit a conformally regular point $i^+$ at time-like infinity, analogous to the point $i^+$ in the case of Minkowski space, if the data are sufficiently close to Minkowskian hyperboloidal data  \cite{Friedrich:1991nn, Friedrich:2015bax}.

  	In the hyperboloidal initial value problem, data are prescribed on a spacelike hypersurface in an asymptotically simple spacetime that extends to future null infinity.
  	
  	If one wants to study solutions of the conformal field equations, the first task is to construct hyperboloidal initial data. However, constructing asymptotically hyperboloidal initial data in such a way that they are regular at future null infinity, $\scri$, is a non-trivial task. To see this,  it is worth recalling that by virtue of the provision used in \cite{Andersson:1992yk,Andersson:1993we,Andersson:1994ng,Andersson:1996xd}, $(\Sigma, h_{ab}, K_{ab})$ is an {\it asymptotically hyperboloidal data set} (not necessarily a solution of the vacuum constraints) if there exists a triple $(\widetilde{\Sigma},\widetilde{\omega},\Phi)$ such that:
  	\begin{itemize}
  		\item[(i)] $\widetilde{\Sigma}$ is a manifold with boundary $\partial\widetilde{\Sigma}$, and there exists an embedding $\Phi$ of $\Sigma$ into $\widetilde{\Sigma}$ such that $\Phi$ is a diffeomorphism between $\Sigma$ and $\Phi(\Sigma)=\widetilde{\Sigma}\setminus \partial\widetilde{\Sigma}$. 
  		\item[(ii)] $\widetilde{\omega}:\widetilde{\Sigma}\rightarrow \mathbb{R} $ is a smooth boundary defining function such that $\widetilde{\omega}>0$ on $\widetilde{\Sigma}\setminus\partial\widetilde{\Sigma}$ and $\widetilde{\omega}=0$, $d\widetilde{\omega}\not=0$ on  $\partial\widetilde{\Sigma}$.
  		\item[(iii)] The embedding $\Phi$ is such that $\widetilde{\omega}^2 \Phi{}{}^* h$ is a Riemannian metric on $\widetilde{\Sigma}\setminus\partial\widetilde{\Sigma}$ which extends as a regular Riemannian metric to  $\partial\widetilde{\Sigma}$\,.
  		\item[(iv)] The push forward of the trace $K=h^{ab}K_{ab}$ is bounded away from zero near $\partial\widetilde{\Sigma}$.
  		\item[(v)] Let $L_{ab}$ be the trace-free part of $K_{ab}$
  		. Then the field $\widetilde{\omega}^{} \Phi^{}{}^* L_{ab}$,
  		defined on $\widetilde{\Sigma}\setminus\partial\widetilde{\Sigma}$, extends as a regular tensor field to $\partial\widetilde{\Sigma}$\,.
  	\end{itemize}

  	We note here that the specific smoothness requirements depend on the application. We also want to emphasize that the above definition says nothing about solving the constraints. To obtain an asymptotically hyperboloidal {\it initial data} set, we choose some free data that satisfies the above requirements and then solve the constraints for the relevant constrained variables.
  	  	
  	Using the conformal method, Andersson and Chru\'sciel showed \cite{Andersson:1992yk,Andersson:1993we,Andersson:1994ng, Andersson:1996xd} that generic solutions of the elliptic form of the constraints (applying a constant mean curvature slicing) admit polyhomogeneous expansions at $\partial\widetilde{\Sigma}$, i.e. the data have asymptotic expansions in terms of powers of $\widetilde{\omega}$ and $\log \widetilde{\omega}$. They also showed that to avoid the presence of logarithmic terms in the asymptotic expansions, it is not enough to require that the free data be smoothly expandable over $\partial\widetilde{\Sigma}$. In order to get rid of the logarithmic singularities, it is also necessary to impose rather strong conditions on the conformally rescaled extrinsic curvature. The main implication of these observations is that generic initial data constructed by the conformal method are not regular enough to avoid the involvement of logarithmic terms in the asymptotic expansions.
  	
  	It is then natural to ask whether the assumption that the conformal field equations hold up to and including $\mathscr{I}$ is consistent with the smoothness assumption made in the definition of asymptotically simple spacetimes. As pointed out by Friedrich in \cite{Friedrich:2015bax} (see also the discussions in \cite{Andersson:1993we}): The question is not whether $C^\infty$ should be replaced by $C^k$ for some large $k$. The question is whether solutions of the field equations admit conformal extensions of class $C^k$, where $k$ can be chosen large enough to make the concept of asymptotically simple spacetimes meaningful, and to guarantee that the conformal Weyl tensor tends to zero at the conformal boundary.
  	If one has initial data whose asymptotic expansion contains logarithmic terms, then the evolving metric should also be contaminated by  these type of logarithmic terms \cite{Andersson:1992yk,Andersson:1993we,Andersson:1994ng, Andersson:1996xd}. This would discredit the numerical simulations, which in general cannot handle the non-smoothness of the null boundary, and also because there would be no way to decompose the physical metric into a smooth ``non-physical metric'' and a conformal factor.

  	It is worth noting that there are numerical adaptations of the conditions presented in \cite{Andersson:1992yk,Andersson:1993we,Andersson:1994ng,Andersson:1996xd} which led to various types of initial data within the conformal framework that are regular at future null infinity. For instance, the authors in \cite{Buchman:2009ew,Bardeen:2011pd} presented conformally flat binary black hole initial data requiring constancy of mean curvature, while \cite{Schinkel:2013zm} constructed perturbed Kerr initial data using the more permissive asymptotically constant mean curvature gauge. Friedrich's conformal field equations were employed in numerical investigations of asymptotically flat spacetimes in the vicinity of null infinity \cite{Frauendiener:1997zc,Frauendiener:1997ze,Frauendiener:1998ud,Hubner:1999th,Hubner:1998hn,Frauendiener:1998yi,Frauendiener:2002ix}.
  	
  	At the stage of this difficulty, Beyer and Ritchie \cite{Beyer:2021kmi} came up with an interesting alternative view  and with a powerful argument. They claim that if on a hyperboloidal initial data surface there exist global smooth  asymptoticaly constant mean curvature solutions to the parabolic-hyperbolic form of the constraints (in physical spacetime), furthermore these solutions are guaranteed to extend to $\partial\widetilde{\Sigma}$ to some finite order, then they extend smoothly to $\partial\widetilde{\Sigma}$. In fact, in \cite{Beyer:2021kmi} it is shown that if the free data satisfies a suitable set of falloff conditions and a simple algebraic condition, then finitely regular smooth solutions of the parabolic-hyperbolic form of the constraints extend smoothly to null infinity, thereby producing hyperboloidal initial data sets free of logarithmic singularities.
 
    With this paper we overbridge the gap between the statements in \cite{Andersson:1992yk,Andersson:1993we,Andersson:1994ng, Andersson:1996xd} and those in \cite{Beyer:2021kmi}, originating from the different methods, formalisms, and focuses of the papers. First, we show within the evolutionary framework of constraint equations (see the first main result of the present paper, Theorem \ref{main theorem 1} and Corollary \ref{corollary}) that by requiring the existence of a well-defined Bondi mass\footnote{Technically, what we call "Bondi mass" is actually a component of "Bondi momentum" related to time translation. Therefore, it is more accurate to refer to it as "Bondi energy." The same argument applies to Hawking mass and ADM mass. According to the discussions in \cite{Szabados:2009eka}, we should refer to them as Hawking energy and ADM energy, respectively. For historical and readability reasons, however, we will use the term "mass" in each case.
    It is also worth noting that the finiteness of Hawking energy (see relations (6.2) and (6.3) in \cite{Szabados:2009eka}) implies the finiteness of Hawking four-momentum. This, in turn, implies the finiteness of Hawking mass. However, note that the finiteness of Hawking mass does not self-evidently imply the finiteness of energy or momentum. Investigating this issue would require careful consideration.} we obtain a subspace of the asymptotically hyperboloidal initial data sets  whose Cauchy development is guaranteed to admit smooth conformal completion, by virtue of the results of \cite{Andersson:1992yk,Andersson:1993we,Andersson:1994ng,Andersson:1996xd}. Second, we provide a substantial generalization of the result of Beyer and Ritchie \cite{Beyer:2021kmi} by proving that the existence of well-defined Bondi mass and angular momentum, together with some mild restrictions on the free data, implies that the generic solutions of the parabolic-hyperbolic form of the constraint equations are completely free of logarithmic singularities. In doing so we still economize the Fuchsian analysis  that was introduced in \cite{Beyer:2021kmi}. Finally, we provide numerical evidence that strongly excited near-Kerr asymptotically hyperboloidal initial data without logarithmic singularities can also be constructed using free data from a considerably larger class, and without enforcing the asymptotically constant mean curvature gauge condition.
  
    The outline of this paper is as follows. In Section \ref{sec:setup} we recall the parabolic-hyperbolic formulation of the vacuum constraint equations. In Section \ref{sec:hyper}, initial data with some minimal regularity and the main result of Andesson and Chru\'sciel are recalled. Section \ref{sec:main} contains our main analytic results. In Section \ref{sec:main-bondi} we discuss the implications of the assumption that well-defined Bondi mass and angular momentum can be associated with an asymptotically hyperboloidal initial data set that are subject the parabolic-hyperbolic form of the constraints. In Section \ref{sec:main-smooth} we also provide the aforementioned generalization of the result in \cite{Beyer:2021kmi}, and prove our second main result, Theorem \ref{main theorem 2}, which provides the smoothness of asymptotically hyperboloidal initial data. In particular, we outline the Fuchsian analyses used in Beyer and Ritchie's proof by pointing out the differences in the assumptions used in \cite{Beyer:2021kmi} and in this paper. Section \ref{sec:numerical} discusses the numerical construction of some highly excited near Kerr asymptotically hyperboloidal initial data. Section \ref{sec:background} explains the choice of free data, while Section \ref{sec:numsetup} introduces the numerical method used. Section \ref{sec:asymrel} explains the sophisticated methods that must be used to verify the smoothness of the asymptotically hyperboloidal initial data. In particular, Subsection \ref{subsec: approx Hawking} will highlight the difficulties in numerically evaluating the Hawking mass in the asymptotic region. Finally, Section \ref{sec:summ} summarizes our main results. There is also an appendix that collects some formulas that provide important clues to some of the arguments involved, but it would be inappropriate to include them in the main text.

  \section{The parabolic-hyperbolic form of the constraints}
  \label{sec:setup}

  In Einstein's theory the vacuum initial data consists of the triplet $(\Sigma,h_{ab},K_{ab})$, where $\Sigma$ is a $3$-dimensional differentiable manifold, $h_{ab}$ is a Riemannian metric on it, and $K_{ab}$ is a symmetric tensor field. The interpretation of the latter two is as follows: if $\Sigma$ is embedded in the $4$-dimensional spacetime obtained by evolution, $h_{ab}$ is the induced metric, while $K_{ab}$ is the extrinsic curvature of $\Sigma$. However this embedding is only possible if the triplet $(\Sigma,h_{ab},K_{ab})$ satisfies the vacuum constraint equations
  \begin{align}
    \Rthree-K_{ab}K^{ab}+K^2&=0,\label{eq:Ham}\\
    D_bK^b{}_a-D_aK&=0,\label{eq:Mom}
  \end{align}
  where $D_a$ is the covariant derivative compatible with $h_{ab}$, $\Rthree$ is the corresponding Ricci curvature scalar, $K=h^{ab}K_{ab}$ is the trace of the extrinsic curvature. 
  
  \subsection{The $2+1$ decomposition}\label{subsec: 2+1}
  
  The parabolic-hyperbolic form of the constraints \eqref{eq:Ham}-\eqref{eq:Mom} is derived using a $2+1$ decomposition of $\Sigma$ \cite{Racz:2015mfa,Racz:2014jra,Racz:2014gea}. Since our main concern will be the asymptotic behavior of asymptotically hyperboloidal initial data sets, we assume that $\Sigma$ is foliated by the $r=const$ level surfaces, $\mathscr{S}_{r}$, of a suitably smooth function $r: \Sigma\rightarrow \mathbb{R}^+$ which are diffeomorphic to $2$-spheres. Accordingly, we will assume that $\Sigma=(r_0,\infty)\times \mathbb{S}^2$ for some $r_0>0$. In the asymptotically hyperboloidal setup, we can choose the foliation-defining function $r$ (at least close to $\partial\widetilde{\Sigma}$) by the relation $r^{-1}=\widetilde{\omega}\circ\Phi$ on $\Sigma$. We will also parameterize the leaves of the foliation with $\omega=r^{-1}$, and mainly for historical reasons with $t (=\omega=r^{-1})$ in the proof of our second main result (see Theorem \ref{main theorem 2}). In a slight abuse of notation, we will occasionally replace $\widetilde{\omega}$ with $\omega$ or $r^{-1}$.

  In addition to the foliation selected above, we also choose a flow vector field $r^a$ on $\Sigma$ such that $r^aD_ar=1$. After constructing the unit norm $1$-form, $\nhat_a$, on $\Sigma$, normal to the level sets $\mathscr{S}_{r}$, we have the induced metric
  \begin{equation}
    \gammahat_{ab}=h_{ab}-\nhat_a\nhat_b\,,
  \end{equation}
  with $\gammahat_{a}{}^b=\gammahat_{ae} h^{eb}$ acting as a projector to $\mathscr{S}_{r}$.
  Now we can decompose $h_{ab}$ and $K_{ab}$ into their different projections. The decomposition of the evolution vector field gives the lapse, $\Nhat=\nhat_ar^a$, and the shift, $\Nhat_a=\gammahat_{ab}r^b$. The extrinsic curvature $K_{ab}$ also splits into $\kkappa=\nhat^a\nhat^bK_{ab}$, $\kk_a=\nhat^b\gammahat_a{}^cK_{bc}$, the $\gammahat$-trace $\KK=\gammahat^{ab}K_{ab}$, and the $\gammahat$-trace-free tensor $\Kc_{ab}=\gammahat_a{}^c\gammahat_b{}^dK_{cd}-\tfrac{1}{2}\gammahat_{ab}\KK$ on $\mathscr{S}_{r}$. In addition to these fundamental variables, we will also use the extrinsic curvature $\Kh_{ab}$ of the level surfaces $\mathscr{S}_r$ with respect to $\nhat^a$, together with its trace, $\Kh=\Kh_{ab}\gammahat^{ab}$, and the trace-free part $\circon{\widehat{K}}{}_{ab}=\Kh_{ab}-\tfrac{1}{2}\gammahat_{ab}\Kh$, and also $\Kstar_{ab}$ defined by the relation $\Kstar_{ab}=\Nhat\Kh_{ab}$.

  Finally we define the standard spherical coordinates $(\vartheta,\varphi)$ on one of the $r=const$ leaves of the foliation, say on $\mathscr{S}_{r_0}$, and construct a null dyad, $(q^a,\qbar^a)$, which has their indices raised and lowered by the unit sphere metric $q_{ab}=\,q_{(a}\overline{q}_{b)}$, and normalized as $q^a\overline{q}_a=2$ \cite{Racz:2017krc}. This dyad is then Lie propagated along the evolution vector field $r^a$. Our fundamental variables are the dyad components of the projected quantities introduced in the previous paragraph: the lapse $\Nhat$, the shift $\NN=q^a\Nhat_a$, the components of the induced $2$-metric, $\aaa=\tfrac{1}{2}q^a\qbar^b\gammahat_{ab}$ and $\bb=\tfrac{1}{2}q^aq^b\gammahat_{ab}$, and the projections of the extrinsic curvature, $\kkappa$, $\kk=q^a\kk_a$, $\KK$, and $\Kcqq=q^aq^b\Kc_{ab}$. Table \ref{tab:variables} lists the definitions of the most important spin-weighted variables.

  \begin{table}[h]
		\centering  \hskip-.15cm
		\begin{tabular}{|c|c|c|}
			\hline notation &  definition  & spin-weight \\ \hline \hline

			$\aaa$ &  $\tfrac12\,q^i\,\qbar^j\,\gammahat_{ij}$  &  $0$ \\  \hline

			$\bb$ &  $\tfrac12\,q^i q^j\,\gammahat_{ij}$
			&   $2$ \\  \hline

			$\dd$ &  $\aaa^2-\bb\,\bbbar$
			&   $0$ \\  \hline

			$\AAA$ &  $q^a q^b {C^e}{}_{ab}\,\qbar_e
			= \dd^{-1}\left\{ \aaa\left[2\,\eth\,\aaa
			-\,\ethb\,\bb\right]
			-  \,\bbbar\,\eth\,\bb \right\} $
			&   $1$ \\  \hline

			$\BB$ &  $\,\qbar^a q^b {C^e}{}_{ab}\,q_e
			= \dd^{-1}\left\{ \aaa\,\ethb\,\bb
			- \bb  \,\eth\,\bbbar\right\}$
			&   $1$ \\  \hline

			$\CC$ &  $q^a q^b {C^e}{}_{ab}\,q_e
			= \dd^{-1}\left\{ \aaa\,\eth\,\bb
			-  \bb\left[2\,\eth\,\aaa
			-\,\ethb\,\bb\right] \right\}$
			&   $3$ \\  \hline

			$\,\Rhat$ &  $\tfrac12\, {\aaa}^{-1}\left(2\, \RR
			- \left\{ \,  \eth\,\BBbar - \ethb\,\AAA
			- \tfrac12\,\left[\, \CC\,\CCbar
			- \BB\,\BBbar \,\right]\, \right\}\,\right)$
			&   $0$ \\  \hline

			${\NN}$ &  $q^i\Nhat_i$
			&   $1$ \\  \hline

			${\NNt}$ &  $q_i\Nhat^i=\dd^{-1}(\aaa\NN-\bb\NNbar)$
			&   $1$ \\  \hline

			$\kk$ &  $q^i {\kk}{}_{i}$  &
			  $1$
			\\  \hline

			$\KK$ &  $ \gammahat^{kl} \,{ K}{}_{kl}
			$  &   $0$ \\  \hline

			$\Kcqq$ &  $q^kq^l\,\Kc{}_{kl}
			$  &   $2$ \\  \hline

			$\Kcqqb$ &  $q^k\,\qbar^l\,\Kc{}_{kl}
			= (2\,\aaa)^{-1} [\,\bb\,\Kcqqbar
			+  \bbbar\,\Kcqq \,]
			$  &   $0$ \\  \hline

			$\,\Kstar$ &  $\gammahat^{ij} \Kstar_{ij} =\Nhat\,\gammahat^{ij}\Kh_{ij}=\Nhat\Kh $
			&   $0$ \\  \hline

			$\Kstarqq$ &  $q^i q^j\Kstar_{ij}
			= \tfrac12\,\left\{2\,\partial_r\bb - 2\,\eth\,\NN
			+ {\CC}\,\NNbar +\AAA \,{\NN} \,  \right\} $
			&   $2$ \\  \hline

			$\Kstarqqb$ &  $q^k\,\qbar^l\,\Kstar {}_{kl}
			=  {\aaa}^{-1}\{\,\dd\cdot\Kstar\}
			+ \tfrac12 \,[\,\bb\,\Kstarqqbar
			+  \bbbar\,\Kstarqq  \,]\,\}
			$  &   $0$ \\  \hline

		\end{tabular}
		\caption{\small The variables applied in providing the evolutionary form of the constraints.}\label{tab:variables}
	\end{table}
	
	Now consider an asymptotically hyperboloidal initial data set $(\Sigma, h_{ab},K_{ab})$ as specified in \cite{Andersson:1993we}.  
	Using the spin-weighted variables introduced above, a simple calculation shows that Proposition 1 of \cite{Beyer:2021kmi} can be reformulated as follows. 
	
	\begin{proposition}\label{prop1} 
		A vacuum initial data set $(\Sigma, h_{ab}, K_{ab})$ is asymptotically hyperboloidal (not necessarily a solution to the vacuum constraints) on $\Sigma=(r_0,\infty)\times \mathbb{S}^2$ if the following falloff conditions hold for the spin-weighted variables
		\begin{align}
			\hskip-0.0cm\Nhat&=\Nhat_{1}r^{-1} +\mathscr{O}(r^{-2})\,, \hskip0.4cm\NN=\mathscr{O}(r^{-1})\,,  \hskip0.2cm\aaa=r^{2}+ \mathscr{O}(r)\,,
			\hskip0.1cm\bb=\mathscr{O}(r)\,,  \label{eq:falloff1}\\
			\hskip-0.0cm\KK&=\KK_0+\mathscr{O}(r^{-1})\,, \hskip0.3cm\KK-2\kkappa=\mathscr{O}(r^{-1})\,,  \hskip0.2cm\kk=\mathscr{O}(1)\,,  \hskip0.4cm\hskip0.1cm\Kcqq=\mathscr{O}(r)\,,  \label{eq:falloff2}
		\end{align}
		where $\Nhat_{1}$ and $\KK_0$ are strictly positive smooth functions on $\partial\widetilde{\Sigma}$, which is the asymptotic limit of the foliating $\mathscr{S}_{r}$ level surfaces.
	\end{proposition}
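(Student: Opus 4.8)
The plan is to verify the five defining properties (i)--(v) of an asymptotically hyperboloidal data set directly from the falloff conditions \eqref{eq:falloff1}--\eqref{eq:falloff2}. Conditions (i) and (ii) are essentially definitional: since $\Sigma=(r_0,\infty)\times\mathbb{S}^2$, I would take $\widetilde{\Sigma}=[0,r_0^{-1})\times\mathbb{S}^2$ with $\partial\widetilde{\Sigma}=\{0\}\times\mathbb{S}^2$, let $\Phi$ be the identification induced by $\omega=r^{-1}$, and set $\widetilde{\omega}=\omega=r^{-1}$, which is smooth, positive in the interior, and satisfies $\widetilde{\omega}=0$, $d\widetilde{\omega}\neq0$ on $\partial\widetilde{\Sigma}$. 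The substance lies in (iii)--(v), which I would establish by writing both $h_{ab}$ and $K_{ab}$ in the $2+1$ (ADM-type) form adapted to the foliation, passing to the coordinate $\omega$, and reading off the leading behavior from Table~\ref{tab:variables} and the stated falloffs.

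For (iii) I would use that the unit conormal is $\nhat_a=\Nhat\,(dr)_a$, so that
$$h=\Nhat^2\,dr^2+\gammahat_{AB}\,(dx^A+\Nhat^A dr)(dx^B+\Nhat^B dr),$$
and substitute $dr=-\omega^{-2}\,d\omega$. Rewriting the falloffs in terms of $\omega=r^{-1}$, namely $\Nhat=\Nhat_1\,\omega+\mathscr{O}(\omega^2)$ and $\gammahat_{AB}=\omega^{-2}q_{AB}+\mathscr{O}(\omega^{-1})$ (the latter encoded by $\aaa=\omega^{-2}+\mathscr{O}(\omega^{-1})$, $\bb=\mathscr{O}(\omega^{-1})$), a short computation in the coordinates $(\omega,x^A)$ gives $\widetilde{\omega}^2 h_{\omega\omega}=\Nhat_1^2+\mathscr{O}(\omega)$ and $\widetilde{\omega}^2 h_{AB}=q_{AB}+\mathscr{O}(\omega)$, while the cross and shift--shift contributions carry extra positive powers of $\omega$, since $\NN=\mathscr{O}(\omega)$ forces the lower-index shift to be $\mathscr{O}(\omega)$ and hence $\Nhat^A=\mathscr{O}(\omega^3)$. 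Thus $\widetilde{\omega}^2\Phi^*h$ extends to the regular Riemannian metric $\Nhat_1^2\,d\omega^2+q_{AB}\,dx^A dx^B$ at $\partial\widetilde{\Sigma}$, with boundary metric the round $q_{AB}$. Condition (iv) is then immediate: since $K=h^{ab}K_{ab}=\kkappa+\KK$, the falloffs $\KK=\KK_0+\mathscr{O}(\omega)$ and $\KK-2\kkappa=\mathscr{O}(\omega)$ give $\kkappa\to\tfrac12\KK_0$, hence $K\to\tfrac32\KK_0>0$, so $K$ is bounded away from zero near $\partial\widetilde{\Sigma}$.

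For (v) I would decompose $K_{ab}=\kkappa\,\nhat_a\nhat_b+2\,\kk_{(a}\nhat_{b)}+\Kc_{ab}+\tfrac12\gammahat_{ab}\KK$, form $L_{ab}=K_{ab}-\tfrac13 K\,h_{ab}$, and examine each coordinate block. In the tangential block $L_{AB}=\Kc_{AB}+\tfrac16(\KK-2\kkappa)\,\gammahat_{AB}$, so the naively leading $\mathscr{O}(\omega^{-2})$ piece is pure trace and is killed exactly by the algebraic falloff $\KK-2\kkappa=\mathscr{O}(\omega)$, leaving $L_{AB}=\mathscr{O}(\omega^{-1})$ and thus $\widetilde{\omega}L_{AB}=\mathscr{O}(1)$. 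The same condition makes the normal--normal part $\kkappa-\tfrac13K=\tfrac13(2\kkappa-\KK)=\mathscr{O}(\omega)$, which together with $\nhat_r=\Nhat=\mathscr{O}(\omega)$ yields $L_{rr}=\mathscr{O}(\omega^3)$, while the mixed block gives $L_{rA}=\mathscr{O}(\omega)$ (driven by $\kk=\mathscr{O}(1)$ and the $\mathscr{O}(\omega)$ shift). Passing to the $\omega$-coordinate multiplies $L_{rr}$ by $\omega^{-4}$ and $L_{rA}$ by $\omega^{-2}$, so $\widetilde{\omega}L_{\omega\omega}=\omega^{-3}L_{rr}$ and $\widetilde{\omega}L_{\omega A}=\omega^{-1}L_{rA}$ are both $\mathscr{O}(1)$. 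Hence $\widetilde{\omega}\,\Phi^*L_{ab}$ is bounded at $\partial\widetilde{\Sigma}$ and, reading the $\mathscr{O}$-symbols as honest asymptotic expansions with leading coefficients smooth on $\mathbb{S}^2$, extends as a regular tensor field there.

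The main obstacle is condition (v). Whereas (iii) and (iv) follow by reading off leading orders, (v) rests on a cancellation that must occur simultaneously in every tensor block: the dominant part of $K_{ab}$ is pure trace, and it is the single algebraic falloff $\KK-2\kkappa=\mathscr{O}(\omega)$ that removes this leading trace from the trace-free $L_{ab}$ throughout. The delicate bookkeeping is to track the differing effective weights of the $\omega\omega$, $\omega A$ and $AB$ components under both the rescaling by $\widetilde{\omega}^2$ and the change $r\mapsto\omega$, and to check that each block lands at precisely the order for which $\widetilde{\omega}L$ stays bounded; it is here that every falloff in \eqref{eq:falloff2} is used. A secondary point, which I would state rather than belabor, is that upgrading boundedness to genuine smoothness requires interpreting the $\mathscr{O}$-estimates as expansions of the appropriate regularity, consistent with the earlier remark that the precise smoothness demanded is application-dependent.
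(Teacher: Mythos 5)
Your verification is correct, but it is worth noting that the paper does not actually prove this proposition: it presents it as a reformulation of Proposition 1 of \cite{Beyer:2021kmi} in spin-weighted variables, stating only that ``a simple calculation shows'' the equivalence, and then justifies in the surrounding discussion why $\aaa=r^2+\mathscr{O}(r)$, $\bb=\mathscr{O}(r)$ encodes $\gammahat_{ab}=r^2 q_{ab}+\mathscr{O}(r)$ and why $K=\kkappa+\KK\to\tfrac32\KK_0>0$. Your block-by-block check of properties (i)--(v) supplies exactly the omitted calculation, and the bookkeeping is right: the ADM form of $h$ under $r\mapsto\omega$ gives $\widetilde{\omega}^2 h\to \Nhat_1^{2}\,d\omega^2+q_{AB}dx^Adx^B$ (using $\Nhat_1>0$ and $\Nhat^A=\mathscr{O}(\omega^3)$); the trace bound follows as in the paper's own remark; and for (v) you correctly identify that the single condition $\KK-2\kkappa=\mathscr{O}(r^{-1})$ is what removes the would-be $\mathscr{O}(\omega^{-2})$ pure-trace part of $L_{AB}$ and simultaneously tames $L_{rr}=\tfrac13(2\kkappa-\KK)\Nhat^2+\dots=\mathscr{O}(\omega^3)$, so that every component of $\widetilde{\omega}\,\Phi^*L_{ab}$ is $\mathscr{O}(1)$ after the coordinate change. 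Your closing caveat---that the $\mathscr{O}$-symbols only yield $C^0$ extendability unless read as expansions of higher regularity---matches the paper's explicit remark that these falloffs guarantee only well-defined $C^0$ limits at $\partial\widetilde{\Sigma}$. The only thing you do not address, and which the paper does emphasize, is the conformal-gauge argument for why $\gammahat_{ab}=r^2q_{ab}+\mathscr{O}(r)$ (with the round $q_{ab}$) is the appropriate normal form to assume in the first place; this is a matter of motivation rather than a gap in the proof of the stated implication.
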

	
	Note that, as in the definition of asymptotically hyperboloidal initial data in Section \ref{sec:intro}, no reference to the constraint equations was made here. Furthermore, the above falloff conditions only ensure the existence of well-defined $C^0$ limits of the involved fields at $\partial\widetilde{\Sigma}$. Note also that the falloff behavior of $\aaa$ and $\bb$ corresponds to the falloff behavior of the induced metric $\gammahat_{ab}$ on the $r=const$ level surfaces $\gammahat_{ab}= r^2\,q_{ab}+\mathscr{O}(r)$ used in Proposition 1 of \cite{Beyer:2021kmi}. To see that this is the appropriate falloff behavior for $\gammahat_{ab}$, first recall that by choosing an appropriate conformal gauge, the conformally rescaled two-metric on $\partial\widetilde{\Sigma}$, that is topologically a two-sphere, can be chosen without loss of generality to be the metric of the unit sphere $q_{ab}$ (for details see, e.g., chapter 11.1 of \cite{Wald:1984rg}). This then implies that in a sufficiently small neighborhood of $\partial\widetilde{\Sigma}$ on $\widetilde{\Sigma}$ the conformally rescaled two-metric, on the $r=const$ level surfaces, can be assumed to be of the form $q_{ab}+\omega \,p_{ab}(\omega)$, where $p_{ab}(\omega)$ is a smooth tensor field of the type $(0,2)$ there, so that $q_{ab}+\omega\,p_{ab}(\omega)$ is non-singular in the neighborhood under consideration. This, in turn, also implies that the two-metric $\gammahat_{ab}$, on the $r=const$ level surfaces, must have the asymptotic form $\gammahat_{ab}= r^2\,q_{ab}+\mathscr{O}(r)$. Exactly this falloff behavior of $\gammahat_{ab}$ is reflected by the falloff of $\aaa$ and $\bb$, the requirements we used above in \eqref{eq:falloff1}.
	 
	Note also that for an asymptotically hyperboloidal initial data set $(\Sigma, h_{ab},K_{ab})$, as specified in \cite{Andersson:1993we}, the trace of the three-dimensional extrinsic curvature, $K=h^{ab}K_{ab}=(\nhat^{a}\nhat^{b}+\gammahat^{ab})\,K_{ab}=\kkappa+\KK$, must tend to a strictly positive smooth function on $\partial\widetilde{\Sigma}\sim\mathbb{S}^2$. Since at leading order $\KK\sim 2\kkappa$ also holds we conclude that $\KK^{(0)}$ itself must be a strictly positive smooth function on $\partial\widetilde{\Sigma}$.
	Finally, note that, as it was also pointed out in \cite{Beyer:2021kmi}, and in accordance with the results of \cite{Andersson:1992yk,Andersson:1993we,Andersson:1994ng,Andersson:1996xd}, the falloff conditions \eqref{eq:falloff1}-\eqref{eq:falloff2} are sufficient to ensure that an initial data set $(\Sigma, h_{ab},K_{ab})$ is asymptotically hyperboloidal, but they are not necessary.
	
	\subsection{The parabolic-hyperbolic form of the constraints}
  
  The parabolic-hyperbolic interpretation, then, consists of solving equations \eqref{eq:Ham}-\eqref{eq:Mom} for $\Nhat$, $\kk$, and $\KK$ as constrained variables. Following this procedure the constraints read as \cite{Racz:2017krc}
  \begin{multline}
      \label{eq:phN}
      \Kstar\left[\partial_r\Nhat-\tfrac12\,\NNt\,\ethb\Nhat-\tfrac12\,\NNtbar\,\eth\Nhat\right]\\
      -\tfrac12\,\dd^{-1}\Nhat^2\left[\,\aaa\left\{\eth\ethb\Nhat-\BB\,\ethb\Nhat\right\}-
      \bb\left\{\ethb^2\!\Nhat-\tfrac12\,\AAAbar\,\ethb\Nhat-\tfrac12\,\CCbar\,\eth\Nhat\right\}+``cc"\right]\\
      -\Ascr\,\Nhat-\Bscr\,\Nhat^{\,3}=0\,,
  \end{multline}
  \begin{equation}
      \label{eq:phk}
      \partial_r\kk-\tfrac12\,\NNt\,\ethb\kk-\tfrac12\,\NNtbar\,\eth\kk-\tfrac12\,\Nhat\,\eth\KK+\ff=0\,,
  \end{equation}
  \begin{equation}
      \label{eq:phK}
      \partial_r\KK-\tfrac12\,\NNt\,\ethb\KK-\tfrac12\,\NNtbar\,\eth\KK-
      \tfrac{1}{2}\,\Nhat\,\dd^{-1}\Big\{\,\aaa(\eth\kkbar+\ethb\kk)-\bb\,\ethb\kkbar-\bbbar\,\eth\kk\,\Big\}+\FF=0\,,
  \end{equation}
  where the coefficients $\Ascr$, $\Bscr$, and the source terms $\ff$, $\FF$, in \eqref{eq:phN}, \eqref{eq:phk}, and \eqref{eq:phK}, are given as
  \begin{equation}\label{eq:phA}
  	\Ascr=\partial_r\Kstar-\tfrac12\,\NNt\,\ethb\Kstar-\tfrac12\NNtbar\,\eth\Kstar+\tfrac12\,\Big[\Kstar{}^2+\Kstar_{kl}\Kstar{}^{kl}\Big]\,,
  \end{equation}
  \begin{equation}\label{eq:phB}
  	\Bscr=-\tfrac12\,\Big[\Rhat+2\,\kkappa\,\KK+\tfrac12\,\KK^2-\dd^{-1}[2\,\aaa\,\kk\,\kkbar-\bb\,\kkbar^2-\bbbar\,\kk^2]-\Kc{}_{kl}\Kc{}^{kl}\Big]\,,
  \end{equation}
  \begin{multline}\label{eq:phf}
    \ff=-\tfrac12\,\Big[\kk\,\eth\NNtbar+\kkbar\,\eth\NNt\Big]-\left[\kkappa-\tfrac12\,\KK\right]\eth\Nhat+\Kstar\,\kk-\Nhat\Big[\eth\kkappa+q^i\dot{\nhat}{}^l\Kc{}_{li}-q^i\widehat{D}^l\Kc{}_{li}\Big]\,,
  \end{multline}
  \begin{align}\label{eq:phF}
    \FF=\tfrac{1}{4}\Nhat\,\dd^{-1}\Big\{2\,\aaa\,\BB\,\kkbar-\bb(\,\CCbar\,\kk+\AAAbar\,\kkbar)+``cc"\Big\}
    & -\dd^{-1}\Big[(\aaa\,\kkbar-\bbbar\,\kk)\,\eth\Nhat+``cc"\Big]\nonumber \\ & +\Big[\Kc_{ij}{\Kstar}{}^{ij}-\left(\kkappa-\tfrac12\,\KK\right)\Kstar\Big]\,
  \end{align}
  with $``cc"$ denoting the complex conjugate of the preceding terms, while the explicit form of the terms, such as $\Kstar_{ij}{\Kstar}{}^{ij}$, $\Kc_{ij}{\Kstar}{}^{ij}$, $\Kc_{ij}{\Kc}{}^{ij}$, $q^i\dot{\nhat}{}^l\Kc{}_{li}$, $q^i\widehat{D}^l\Kc{}_{li}$, can be found in \cite{Racz:2017krc}.
  
  A couple of comments are in order. First, recall that in Subsection \ref{subsec: 2+1}, using a $2+1$ decomposition, the pair $(h_{ab}, K_{ab})$ was replaced by the octuple $(\Nhat, \NN, \aaa, \bb; \kkappa, \KK, \kk, \Kcqq)$.  The above equations then suggest the following grouping of these variables. The unknown {\it constrained variables} are $(\Nhat, \KK, \kk)$. As an evolutionary system, \eqref{eq:phN}-\eqref{eq:phK} requires {\it initial data} corresponding to an arbitrary choice for the constrained variables $(\Nhat, \KK, \kk)$ on one of the $r=const$ level surfaces on $\Sigma$. Finally, the coefficients in \eqref{eq:phN}-\eqref{eq:phK} are determined by the {\it free data} $(\NN, \aaa, \bb, \kkappa, \Kcqq)$ or by terms which can be derived from them. In particular, $\Kstar$ is also completely determined by the free data. Note also that the evolutionary system \eqref{eq:phN}-\eqref{eq:phK} is guaranteed to have a unique smooth (local) solution for the {\it constrained fields} if a smooth choice has been made for the {\it free} and the {\it initial data} \cite{Racz:2015mfa}.

  \section{Asymptotically hyperboloidal initial data sets}
  \label{sec:hyper}
  
    Now we are ready to consider asymptotically hyperboloidal vacuum initial data sets by assuming some minimal asymptotic regularity of the constrained variables. This allows us to reformulate Proposition 2 of \cite{Beyer:2021kmi} as follows.
    
    \begin{proposition}\label{prop2}
    	Choose a generic set of free data $(\NN,\aaa,\bb, \kkappa,\Kcqq)$ on $\Sigma$ that satisfies the falloff conditions given by the equations  \eqref{eq:falloff1}-\eqref{eq:falloff2} in Proposition \ref{prop1}. Suppose that the asymptotic limit $\kkappa_0=\lim_{r\rightarrow\infty}\kkappa$ of $\kkappa$ is a strictly positive smooth function on $\partial\widetilde{\Sigma}$.
    	Assume also that on $\Sigma$ $(\Nhat,\KK,\kk)$ are smooth solutions of the parabolic-hyperbolic form of the constraints \eqref{eq:phN}-\eqref{eq:phK}, whose coefficients are derived from the chosen free data $(\NN,\aaa,\bb, \kkappa,\Kcqq)$. Finally, assume that the solution has the following asymptotic properties
    	\begin{itemize}
    		\item[(i)] \vskip-0.3cm 
    		$\Nhat$ is strictly positive and has a second order asymptotic radial expansion\footnote{A variable $f$ has an asymptotic radial expansion of order $n^{th}$ if near infinity $f$ can be written in the form
			\begin{equation}
				f=f_0+f_1\cdot r^{-1}+\dots+f_{n-1}\cdot r^{-(n-1)}+\mathscr{O}(r^{-n})
			\end{equation}
			where the coefficients $f_0, f_1,\dots,f_{n-1}$ are smooth functions on $\partial\widetilde{\Sigma}$, i.e. they are independent of $r$. Note that the indices used here refer to the power of $\omega$, not the power of $r$.},
    		\item[(ii)] \vskip-0.15cm 
    		$\KK$ and $\kk$ have a first-order asymptotic radial expansion. 
    	\end{itemize}
    	\vskip-0.3cm  
    	Then
    	\begin{equation}
    		\Nhat_0=0\,, \ \ \Nhat_1={\kkappa_0}^{-1}\,,\ \ \KK_0=2{\kkappa_0}\,,\ \ \kk_0={\kkappa_0}^{-1}\eth{\kkappa_0}\,,
    	\end{equation}
    	and, incidentally, the resulting vacuum initial data $(\Nhat,\KK,\kk; \NN,\aaa,\bb, \kkappa,\Kcqq)$ on $\Sigma$ is asymptotically hyperboloidal.  
    \end{proposition}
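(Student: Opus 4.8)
The plan is to insert the postulated radial expansions $\Nhat = \Nhat_0 + \Nhat_1\,\omega + \mathscr{O}(\omega^2)$, $\KK = \KK_0 + \mathscr{O}(\omega)$, $\kk = \kk_0 + \mathscr{O}(\omega)$ — together with the free-data falloffs \eqref{eq:falloff1}--\eqref{eq:falloff2} and $\kkappa = \kkappa_0 + \mathscr{O}(\omega)$ — into the constraint equations \eqref{eq:phN}--\eqref{eq:phK} and to read off the claimed coefficients by matching successive powers of $\omega = r^{-1}$. The preparatory step is to fix the leading asymptotics of the quantities built purely from the free data. The crucial input is $\Kstar$: because $\Kstar = \Nhat\,\Kh$ equals one-half of $\partial_r\ln\det\gammahat$ up to shift contributions of order $\NNt = \mathscr{O}(\omega^3)$, and because $\dd = \aaa^2 - \bb\,\bbbar = r^4 + \mathscr{O}(r^3)$ plays the role of $\det\gammahat$, the falloff $\aaa = r^2 + \mathscr{O}(r)$ gives $\Kstar = c\,\omega + \mathscr{O}(\omega^2)$ with $c = 2$. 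From $\AAA,\BB,\CC = \mathscr{O}(\omega)$ and $\RR = \mathscr{O}(\omega^2)$ one then finds $\Rhat = \mathscr{O}(\omega^3)$ negligible, whence $\Ascr = (\tfrac34 c^2 - c)\,\omega^2 + \mathscr{O}(\omega^3) = \omega^2 + \dots$ and $\Bscr = -\tfrac12\big[\,2\,\kkappa\,\KK + \tfrac12\,\KK^2\,\big] + \mathscr{O}(\omega^2)$.

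With these in hand the relations fall out one equation at a time. Equation \eqref{eq:phK} is the cleanest: every term except the piece $-(\kkappa - \tfrac12\KK)\,\Kstar$ of $\FF$ is $\mathscr{O}(\omega^2)$, so the $\mathscr{O}(\omega)$ part forces $(2\kkappa_0 - \KK_0) = 0$, i.e.\ $\KK_0 = 2\,\kkappa_0$; note this holds irrespective of the value of $\Nhat_0$. Feeding $\KK_0 = 2\kkappa_0$ back into $\Bscr$ gives the nonvanishing constant $\Bscr^{(0)} = -3\,\kkappa_0^2$. Turning to the parabolic equation \eqref{eq:phN}, at order $\omega^0$ the only surviving term is $-\Bscr\,\Nhat^3$, whose leading part is $-\Bscr^{(0)}\,\Nhat_0^3 = 3\kkappa_0^2\,\Nhat_0^3$; since $\kkappa_0 > 0$ this yields $\Nhat_0 = 0$.

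Once $\Nhat_0 = 0$, so $\Nhat = \Nhat_1\,\omega + \mathscr{O}(\omega^2)$, the remaining nontrivial terms of \eqref{eq:phN} all first appear at order $\omega^3$, with respective coefficients $-c\,\Nhat_1$ (from $\Kstar\,\partial_r\Nhat$), $-(\tfrac34 c^2 - c)\,\Nhat_1$ (from $-\Ascr\,\Nhat$) and $3\kkappa_0^2\,\Nhat_1^3$ (from $-\Bscr\,\Nhat^3$). Setting their sum to zero gives $\tfrac34 c^2\,\Nhat_1 = 3\kkappa_0^2\,\Nhat_1^3$, i.e.\ $\kkappa_0^2\,\Nhat_1^2 = 1$ for $c = 2$, and the positivity of $\Nhat$ selects $\Nhat_1 = \kkappa_0^{-1}$. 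Finally, the surviving $\mathscr{O}(\omega)$ terms of \eqref{eq:phk} are $-\tfrac12\,\Nhat\,\eth\KK$, contributing $-\Nhat_1\,\eth\kkappa_0$ (via $\eth\KK_0 = 2\,\eth\kkappa_0$), and the part $\Kstar\,\kk - \Nhat\,\eth\kkappa$ of $\ff$, contributing $2\,\kk_0 - \Nhat_1\,\eth\kkappa_0$; their balance forces $\kk_0 = \Nhat_1\,\eth\kkappa_0 = \kkappa_0^{-1}\,\eth\kkappa_0$. Reinserting the four coefficients into \eqref{eq:falloff1}--\eqref{eq:falloff2} shows that $(\Nhat,\KK,\kk)$ satisfy every falloff of Proposition \ref{prop1}, with $\Nhat_1 = \kkappa_0^{-1}$ and $\KK_0 = 2\kkappa_0$ strictly positive and smooth on $\partial\widetilde{\Sigma}$, so the data set is asymptotically hyperboloidal.

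I expect the main obstacle to be this preparatory step — pinning down the \emph{precise} leading coefficients of the free-data quantities, above all $c = \Kstar^{(1)} = 2$ and the $\Ascr^{(2)} = \tfrac34 c^2 - c$ it controls — because the exact value $\Nhat_1 = c/(2\kkappa_0)$ is sensitive to them, so that the normalisation $c = 2$ inherited from $\aaa = r^2 + \mathscr{O}(r)$ is doing the real work. A secondary, bookkeeping difficulty is that the four relations surface at three different orders ($\omega^0$, $\omega^1$ and $\omega^3$): one must verify at each order which terms of $\ff$, $\FF$, $\Ascr$ and $\Bscr$ survive, and confirm in particular that the intermediate orders $\omega^1$ and $\omega^2$ of \eqref{eq:phN} impose no spurious conditions. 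It is also worth flagging that differentiating the $\mathscr{O}(\omega)$ remainders in $\partial_r\KK$ and $\partial_r\kk$ is legitimate here only because those terms enter at $\mathscr{O}(\omega^2)$ and hence never interfere with the $\mathscr{O}(\omega)$ balances that produce $\KK_0$ and $\kk_0$.
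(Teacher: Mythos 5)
Your proof is correct and follows essentially the same route as the paper: Proposition \ref{prop2} is stated there without a standalone proof (it is presented as a reformulation of Proposition 2 of Beyer and Ritchie), but the identical order-by-order matching of powers of $r^{-1}$ after substituting the expansions into \eqref{eq:phN}--\eqref{eq:phK} is carried out in the proof of Theorem \ref{main theorem 1}, where \eqref{eq:rasympNh}--\eqref{eq:rasympkk} reproduce exactly your four leading coefficients from the same key inputs $\Kstar=2\,\omega+\mathscr{O}(\omega^2)$, $\Ascr=\omega^2+\mathscr{O}(\omega^3)$ and $\Bscr\to-3\,\kkappa_0^2$. Two harmless slips are worth flagging: the level two-spheres have scalar curvature $\sim 2/r^2$, so $\Rhat=\mathscr{O}(\omega^2)$ rather than $\mathscr{O}(\omega^3)$ (this enters $\Bscr$ only below the orders you use), and the cubic balance for $\Nhat_1$ also admits the degenerate root $\Nhat_1=0$, which strict positivity of $\Nhat$ alone does not eliminate and which is implicitly excluded by genericity.
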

    Note that the background data $(\NN, \aaa, \bb, \kkappa, \Kcqq)$ can be generic, in the sense that it does not have to be related to the initial data induced on a time slice by a given solution of the vacuum Einstein equations. In contrast, the resulting initial data $(\Nhat, \KK, \kk; \NN, \aaa, \bb, \kkappa, \Kcqq)$ on $\Sigma$ is undoubtedly an asymptotically hyperboloidal solution of the vacuum constraint equations. It should also be noted that while in Proposition 2 of \cite{Beyer:2021kmi} $\kkappa_0$ was assumed to be a constant on $\partial\widetilde{\Sigma}$, in the above results $\kkappa_0$ may have an angular dependence, since we allowed it to be an arbitrary  smooth strictly positive function on $\partial\widetilde{\Sigma}$. Note also that the conditions in Proposition \ref{prop2} are not yet sufficient to exclude the occurrence of logarithmic singularities, as will be verified by the proof of Theorem \ref{main theorem 1} below.
    
    \medskip
    
    Before turning to our main results it is high time to relate the setup used in this paper to that used in \cite{Andersson:1992yk,Andersson:1993we,Andersson:1994ng,Andersson:1996xd}. Before jumping into the technicalities, note that the main motivation in the studies of Andersson and Chru\'sciel \cite{Andersson:1992yk,Andersson:1993we,Andersson:1994ng,Andersson:1996xd} was to identify the differential properties of potential asymptotically hyperboloidal solutions of \eqref{eq:Ham} and \eqref{eq:Mom} within the elliptic setup. In contrast, we use the parabolic-hyperbolic formulation of \eqref{eq:Ham} and \eqref{eq:Mom} and our main motivation is to first identify those conditions which guarantee the existence of well-defined and finite Bondi mass and angular momentum, and then to study the differentiability properties of the selected class of asymptotically hyperboloidal initial data configurations.
     
    In formulating the main result of \cite{Andersson:1992yk,Andersson:1993we,Andersson:1994ng,Andersson:1996xd} we need to introduce the following notation. Denote by $\tildeon{K}{}_{ab}$ the extrinsic curvature of $\widetilde{\Sigma}$ with respect to the conformally rescaled three-metric  $\tildeon{h}{}_{ab}=\Omega^2{h}{}_{ab}$, and with respect to the vector $\tildeon{n}{}^a=\Omega^{-1}n^a$ normal to $\widetilde\Sigma$. Similarly, $\tildeon{\widehat{K}}{}_{ab}$ denotes the extrinsic curvature of the $r=const$ level sets within $\widetilde{\Sigma}$, i.e., with respect to the conformally rescaled metric  $\tildeon{\widehat{\gamma}}{}_{ab}=\omega^2\widehat{\gamma}{}_{ab}$, and with respect to the normal $\tildeon{\widehat{n}}{}^a=\omega^{-1}\widehat{n}^a$.  Denote by ${\tildeon{\kkappa}}, {\tildeon{\kk}}{}_{a}, {\tildeon{\KK}}{}_{ab}$ the scalar, vector, and tensor projections of $\tildeon{K}{}_{ab}$ obtained by a decomposition analogous to that applied in Subsection \ref{subsec: 2+1} above.
    Also denote by ${\tildeon{\KK}}{}{}^\circ_{ab}$ and $\tildeon{\widehat{K}}{}^\circ_{ab}$ the trace-free parts of ${\tildeon{\KK}}{}_{ab}=\tildeon{\widehat{\gamma}}{}_{a}{}^e\tildeon{\widehat{\gamma}}{}_{b}{}^f\tildeon{{K}}{}_{ef}$ and $\tildeon{\widehat{K}}{}_{ab}$, respectively. Finally recall that the generic solution to the momentum constraint \cite{Andersson:1992yk,Andersson:1993we,Andersson:1994ng,Andersson:1996xd} was found to possess the form
    \begin{equation}
    	\tildeon{K}{}_{ab}=\tildeon{K}{}^{[C^\infty]}_{ab}+\Omega^2\log\Omega\cdot\tildeon{K}{}_{ab}^{[log]}\,,
    \end{equation}
    where $\tildeon{K}{}^{[C^\infty]}_{ab}$ is smooth all over $\widetilde{\Sigma}$, while $\tildeon{K}{}_{ab}^{[log]}$ has a polyhomogeneous expansion, whence it is smooth only over $\Sigma$ and extends merely continuously to $\partial\widetilde{\Sigma}$.
    
    Before presenting the main result of Andersson and Chru\'sciel, note that in \cite{Andersson:1993we,Andersson:1994ng} they linked the geometry of the boundary of the initial data surface with the geometry of the resulting spacetime. Using the variables introduced above, we can then reformulate it as follows.
    
    \begin{proposition}[Andersson $\&$ Chru\'sciel \cite{Andersson:1993we,Andersson:1994ng}]\label{prop3}
    	Consider a generic asymptotically hyperboloidal initial data set $(\Nhat,\KK,\kk; \NN,\aaa,\bb, \kkappa,\Kcqq)$ on $\Sigma$ that satisfies the assumptions used in Proposition \ref{prop1}. Suppose that the following three relations
    	\begin{equation}\label{eq: conds-Prop3}
    		\tildeon{K}{}_{rr}^{[log]}=0\,, \quad \tildeon{K}{}_{ra}^{[log]}=0\,, \quad   {\tildeon{\KK}}{}{}^\circ_{ab} - \tildeon{\widehat{K}}{}^\circ_{ab}=0
    	\end{equation}
        hold simultaneously on $\partial\widetilde{\Sigma}$.
    	Then there exists a Cauchy development of such generic asymptotically hyperboloidal initial data that admits a smooth conformal boundary.
    \end{proposition}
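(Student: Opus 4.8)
The plan is to read Proposition~\ref{prop3} as a translation, into the spin-weighted $2+1$ variables of Subsection~\ref{subsec: 2+1}, of the regularity theorem of Andersson and Chru\'sciel: the three relations in \eqref{eq: conds-Prop3} are precisely their obstruction quantities, and once they are shown to eliminate the logarithmic part of the data, the A\&C link between boundary geometry and spacetime geometry delivers the smooth conformal boundary. First I would take as starting point the structural split $\tildeon{K}{}_{ab}=\tildeon{K}{}^{[C^\infty]}_{ab}+\Omega^2\log\Omega\cdot\tildeon{K}{}_{ab}^{[log]}$ recorded above, in which all potential non-smoothness is confined to the single polyhomogeneous coefficient $\tildeon{K}{}_{ab}^{[log]}$, whose boundary trace on $\partial\widetilde{\Sigma}$ is at best continuous.

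The key steps are then as follows. I would decompose $\tildeon{K}{}_{ab}^{[log]}$ relative to the normal $\tildeon{n}{}^a$ and the $r=const$ foliation into its normal-normal, normal-tangential, and trace-free tangential pieces, matching the projection scheme used for $\tildeon{K}{}_{ab}$ itself. The first two conditions in \eqref{eq: conds-Prop3} then set the coefficients $\tildeon{K}{}_{rr}^{[log]}$ and $\tildeon{K}{}_{ra}^{[log]}$ to zero on $\partial\widetilde{\Sigma}$ directly, while the third, the shear-matching relation ${\tildeon{\KK}}{}{}^\circ_{ab}-\tildeon{\widehat{K}}{}^\circ_{ab}=0$, is a corner/compatibility condition that, once fed into the conformally rescaled Hamiltonian and momentum constraints, forces the remaining trace-free tangential log coefficient to vanish there as well. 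The final move is to observe, as A\&C do, that the polyhomogeneous coefficients obey a transport/recursion hierarchy in which the logarithmic terms are sourced only by lower-order logarithmic terms; hence the vanishing of the full log coefficient on $\partial\widetilde{\Sigma}$ propagates order by order and eliminates every logarithmic term, reducing the expansion to a genuine smooth Taylor series in $\Omega$.

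With the data established to be smooth up to $\partial\widetilde{\Sigma}$, I would then invoke the evolutionary half of the A\&C results \cite{Andersson:1993we,Andersson:1994ng}, built on Friedrich's conformal field equations: smooth asymptotically hyperboloidal data develop into a Cauchy development admitting a smooth conformal boundary, with the conformal Weyl tensor extending regularly to and vanishing on $\mathscr{I}$. This is exactly the asserted conclusion.

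The main obstacle I anticipate lies in the second step, namely in establishing that conditions imposed \emph{only} on $\partial\widetilde{\Sigma}$ suffice to kill the \emph{entire} tower of logarithmic coefficients, not merely the leading one. This rests on the delicate fact, proved by Andersson and Chru\'sciel, that the obstruction to smoothness is carried by only finitely many boundary scalars and that the recursion closes homogeneously in the log sector once those scalars vanish; reproducing this finiteness and closure in the present spin-weighted notation, rather than simply quoting it, is where the real work sits. The subtlest single point is the translation of the intrinsic-versus-extrinsic shear identity ${\tildeon{\KK}}{}{}^\circ_{ab}-\tildeon{\widehat{K}}{}^\circ_{ab}=0$ into the vanishing of the tangential log coefficient, since it is here that the geometry of the foliation of $\widetilde{\Sigma}$ enters most nontrivially.
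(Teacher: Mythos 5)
The first thing to note is that the paper does not prove Proposition~\ref{prop3} at all: it is presented as a reformulation, in the spin-weighted $2+1$ variables, of a theorem of Andersson and Chru\'sciel and is simply cited from \cite{Andersson:1993we,Andersson:1994ng}. What the paper actually proves in this vicinity is Lemma~\ref{lemma1}, which translates the three conditions \eqref{eq: conds-Prop3} into statements about expansion coefficients. So there is no in-paper proof to compare yours against, and your reconstruction is essentially circular: its load-bearing steps (``the A\&C link between boundary geometry and spacetime geometry,'' ``invoke the evolutionary half of the A\&C results'') are precisely the content being asserted, so nothing independent is established.

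Where you do supply independent content, there is a genuine gap. You read the third condition ${\tildeon{\KK}}{}{}^\circ_{ab} - \tildeon{\widehat{K}}{}^\circ_{ab}=0$ as forcing the trace-free tangential logarithmic coefficient to vanish, and you then argue that the three conditions kill the entire logarithmic tower, reducing the data to a smooth Taylor series before evolving it. Neither half survives contact with the paper's own analysis. By Lemma~\ref{lemma1}, the boundary value of $\big({\tildeon{\KK}}{}{}^\circ_{ab} - \tildeon{\widehat{K}}{}^\circ_{ab}\big)q^aq^b$ is $\Kcqq{}_{^{(-1)}}+{\bb{}_{^{(-1)}}}\Nhat_{1}{}^{-1}$ once the $\Nhat_{1,j}^{[log]}$ vanish; the third condition is therefore a shear-type algebraic relation among the \emph{smooth} leading coefficients, not the vanishing of any logarithmic coefficient. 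Moreover the first two conditions only force the specific coefficients $\Nhat_{1,j}^{[log]}$ to vanish, and the paper explicitly notes (in the footnote following the proof of Theorem~\ref{main theorem 1}) that the relations \eqref{eq: conds-Prop3} ``in principle would still allow for the presence of log terms'' at higher order. The point of the Andersson--Chru\'sciel theorem is precisely that a smooth conformal boundary of the development follows from these finitely many leading-order boundary conditions on \emph{polyhomogeneous} data, without first upgrading the data to be smooth; eliminating all logarithms requires the additional hypotheses (well-defined Bondi mass and angular momentum, the algebraic conditions on the free data) that are the subject of Theorems~\ref{main theorem 1} and~\ref{main theorem 2}, not of Proposition~\ref{prop3}.
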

    
    It is important to reformulate the conditions in \eqref{eq: conds-Prop3} in terms of the variables we use in our investigations. To uncover the necessary connections, we need to prove the following lemma.
    
    \begin{lemma}\label{lemma1}
    	Let $(\Nhat,\KK,\kk; \NN,\aaa,\bb, \kkappa,\Kcqq)$ be a generic asymptotically hyperboloidal initial data set on $\Sigma$ that satisfies the assumptions used in Proposition \ref{prop1}. Assume also that the background fields $(\NN,\aaa,\bb,\kkappa,\Kcqq)$ are smooth on $\Sigma$, they extend smoothly to $\partial\widetilde{\Sigma}$, and that the constrained fields $(\Nhat,\KK,\kk)$ are smooth solutions of the parabolic-hyperbolic form of the constraints \eqref{eq:phN}-\eqref{eq:phK} over $\Sigma$ and their asymptotic expansion can be given by the most general polyhomogeneous expressions (i.e., there exists a sequence $\{\mathcal{N}_j\}_{j=1}^{\infty}$ and also expansion coefficients  $\Nhat_{i}, \Nhat_{i,j}^{[log]},\KK_{i}, \KK_{i,j}^{[log]}, \kk_{i}, \kk_{i,j}^{[log]}$ that are smooth functions on $\mathbb{S}^2$) so that 
    	\begin{align}
    		&\Nhat=\sum_{i=1}^{\infty}\omega^i\,\big[\,\Nhat_i+\sum_{j=1}^{\mathcal{N}_j}\Nhat_{i,j}^{[log]}\log^j\omega\,\big]\,,\label{eq:Nhrseries-0}\\
    		&\KK=\KK_0+\sum_{i=1}^{\infty}\omega^i\,\big[\,\KK_i+\sum_{j=1}^{\mathcal{N}_j}\KK_{i,j}^{[log]}\log^j\omega\,\big]\,,\label{eq:Krseries-0}\\\
    		&\kk=\kk_0+\sum_{i=1}^{\infty}\omega^i\,\big[\,\kk_i+\sum_{j=1}^{\mathcal{N}_j}\kk_{i,j}^{[log]}\log^j\omega\,\big]\,.\label{eq:krseries-0}
    	\end{align}
    	Then the following asymptotic relations hold
    	\begin{align}
    		 &\big({\tildeon{\KK}}{}{}^\circ_{ab} - \tildeon{\widehat{K}}{}^\circ_{ab}\big)q^aq^b = {} \Kcqq{}_{^{(-1)}}+{\bb{}_{^{(-1)}}}\,\bigg[{\Nhat_{1}+\sum_{j=1}^{\mathcal{N}_j}\Nhat_{1,j}^{[log]}\log^j\omega}\bigg]^{-1} + \mathscr{O}(\omega)\,,\label{eq:shear} \\
    		\tildeon{K}{}_{rr}^{[log]} = {} & - \mathscr{L}_n \Omega\cdot\bigg[\,2\,\Nhat_{1,0}\sum_{j=1}^{\mathcal{N}_j}\Nhat_{1,j}^{[log]}\log^{j-1}\omega+\sum_{l=1,j=1}^{\mathcal{N}_l,\mathcal{N}_j}\Nhat_{1,l}^{[log]}\Nhat_{1,j}^{[log]}\log^{l+j-1}\omega\bigg] + \mathscr{O}(\omega)\,, \label{eq:lKrr} \\
    	& 	\tildeon{K}{}_{ra}^{[log]} = {} \tfrac12\,\big(\kk_0 \,{\bar q}{}_a + {\bar\kk}{}_0\, q_a\big)\cdot\bigg[\sum_{j=1}^{\mathcal{N}_j}\Nhat_{1,j}^{[log]}\log^{j-1}\omega\bigg] + \mathscr{O}(\omega) \,. \label{eq:lKra}
    	\end{align}
    Since $q^a$ is a complex dyad, the algebraic content of ${\tildeon{\KK}}{}{}^\circ_{ab} - \tildeon{\widehat{K}}{}^\circ_{ab}$ and that of $\big({\tildeon{\KK}}{}{}^\circ_{ab} - \tildeon{\widehat{K}}{}^\circ_{ab}\big)q^aq^b$ are equivalent (for a verification see \eqref{eq: intTqq}).
    \end{lemma}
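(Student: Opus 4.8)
The plan is to relate the conformally rescaled second fundamental forms entering \eqref{eq: conds-Prop3} to the physical spin-weighted variables via the standard conformal behaviour of extrinsic curvature, and then to substitute the polyhomogeneous series \eqref{eq:Nhrseries-0}--\eqref{eq:krseries-0} together with the falloff conditions of Proposition \ref{prop1}. The key computational input is the transformation rule: for $\tildeon{h}{}_{ab}=\Omega^2 h_{ab}$ and $\tildeon{n}{}^a=\Omega^{-1}n^a$, a short calculation using the difference of the two Levi-Civita connections (the projector being conformally invariant since $\tildeon{n}{}_a\tildeon{n}{}^b=n_a n^b$) gives
\begin{equation}
	\tildeon{K}{}_{ab}=\Omega\,K_{ab}+(\mathscr{L}_n\Omega)\,h_{ab}\,,\label{eq:conf-plan}
\end{equation}
and, applied to the embedding of the $r=const$ leaves in $\widetilde{\Sigma}$,
\begin{equation}
	\tildeon{\widehat{K}}{}_{ab}=\omega\,\Kh_{ab}+(\mathscr{L}_{\nhat}\omega)\,\gammahat_{ab}\,.\label{eq:conf-plan2}
\end{equation}
The pieces proportional to $h_{ab}$ and $\gammahat_{ab}$ are pure trace and therefore drop out of the trace-free parts; using that trace-freeness of a $(0,2)$ tensor is conformally invariant, one obtains $\tildeon{\KK}{}{}^\circ_{ab}=\Omega\,\Kc_{ab}$ and $\tildeon{\widehat{K}}{}^\circ_{ab}=\omega\,\circon{\widehat{K}}{}_{ab}$, where $\Kh_{ab}=\Nhat^{-1}\Kstar_{ab}$ by the definition $\Kstar_{ab}=\Nhat\,\Kh_{ab}$.

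For the shear relation \eqref{eq:shear} I would contract with $q^aq^b$. The first piece yields $\Omega\,\Kcqq$; since $\Omega\sim\omega=r^{-1}$ while $\Kcqq=\mathscr{O}(r)$ by \eqref{eq:falloff2}, its leading value is exactly $\Kcqq{}_{(-1)}$. For the second piece, $\circon{\widehat{K}}{}_{ab}q^aq^b=\Nhat^{-1}(\Kstarqq-\bb\,\Kstar)$ follows from $q^aq^b\gammahat_{ab}=2\bb$ (Table \ref{tab:variables}), so $\tildeon{\widehat{K}}{}^\circ_{ab}q^aq^b=\omega\,\Nhat^{-1}(\Kstarqq-\bb\,\Kstar)$. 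Inserting $\Nhat=\omega\,[\Nhat_1+\sum_j\Nhat_{1,j}^{[log]}\log^j\omega]+\mathscr{O}(\omega^2)$ reduces $\omega\,\Nhat^{-1}$ to the inverse bracket in \eqref{eq:shear}, while $\Kstarqq=\partial_r\bb+\mathscr{O}(\omega)=\bb{}_{(-1)}+\mathscr{O}(\omega)$ (the remaining terms in Table \ref{tab:variables} being $\mathscr{O}(\omega^2)$ under the falloff) supplies the numerator; the $\bb\,\Kstar$ term is higher order. Collecting the two contributions reproduces \eqref{eq:shear} up to an overall sign fixed by the orientation conventions for $n$ and $\nhat$.

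For the two logarithmic relations I would feed \eqref{eq:conf-plan} into the coordinate components $\tildeon{K}{}_{rr}=r^ar^b\tildeon{K}{}_{ab}$ and $\tildeon{K}{}_{ra}=r^b\tildeon{K}{}_{ab}$ (projected to the leaves), using $r^a=\Nhat\,\nhat^a+\Nhat^a$. Because lapse and shift both fall off as $\mathscr{O}(\omega)$, the dominant contributions are $\tildeon{K}{}_{rr}=(\mathscr{L}_n\Omega)\,\Nhat^2+\mathscr{O}(\omega^3)$ and $\tildeon{K}{}_{ra}=\Omega\,\Nhat\,\kk_a+\mathscr{O}(\omega^2)$, the leftover terms (e.g. $\Omega\,\Nhat^2\kkappa$, and the smooth free-data shift in $h_{ra}$) carrying no logarithms at the relevant order. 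Extracting the logarithmic content then amounts to expanding the square $[\Nhat_1+\sum_j\Nhat_{1,j}^{[log]}\log^j\omega]^2$, respectively the product with $\kk_a\to\tfrac12(\kk_0\,\qbar_a+\kkbar_0\,q_a)$, and stripping off one factor of $\Omega^2\log\Omega$ in accordance with $\tildeon{K}{}_{ab}=\tildeon{K}{}^{[C^\infty]}_{ab}+\Omega^2\log\Omega\cdot\tildeon{K}{}_{ab}^{[log]}$. That single absorbed power of $\log\omega$ is precisely what sends $\log^j\mapsto\log^{j-1}$ and $\log^{l+j}\mapsto\log^{l+j-1}$, yielding \eqref{eq:lKrr} and \eqref{eq:lKra}.

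The step I expect to be the main obstacle is the bookkeeping in the quadratic term for $\tildeon{K}{}_{rr}^{[log]}$: one must cleanly separate the smooth part $\Nhat_1^2$ from the mixed part $2\Nhat_1\sum_j\Nhat_{1,j}^{[log]}\log^j\omega$ and the purely logarithmic part $\sum_{l,j}\Nhat_{1,l}^{[log]}\Nhat_{1,j}^{[log]}\log^{l+j}\omega$, and verify that one power of $\log\omega$ can be consistently factored out across the entire polyhomogeneous series so that what remains is again an admissible coefficient $\tildeon{K}{}_{rr}^{[log]}$. A subsidiary subtlety is the identification $\Omega\sim\omega$ on $\Sigma$ up to a smooth positive factor; since that factor introduces no logarithms and only $\mathscr{O}(\omega)$ corrections, it does not affect the stated leading behaviour. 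Finally, the asserted equivalence between the tensorial content of $\tildeon{\KK}{}{}^\circ_{ab}-\tildeon{\widehat{K}}{}^\circ_{ab}$ and its $q^aq^b$ contraction holds because the trace-free symmetric part is spanned by $q_aq_b$ and $\qbar_a\qbar_b$, so a single complex component determines it, which is what \eqref{eq: intTqq} records.
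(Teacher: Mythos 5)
Your overall strategy coincides with the paper's: both start from the conformal transformation law for the two extrinsic curvatures, observe that the pure-trace pieces drop out of the trace-free parts, and then extract the logarithmic coefficients by substituting the polyhomogeneous series and stripping one factor of $\log\Omega$ from the $\Omega^2\log\Omega\cdot\tildeon{K}{}_{ab}^{[log]}$ ansatz. Your treatment of \eqref{eq:lKrr} and \eqref{eq:lKra} is sound in substance, modulo the sign of the $\mathscr{L}_n\Omega$ term in your transformation rule, which is opposite to the paper's $\tildeon{K}{}_{ab}=\Omega\,K_{ab}-\mathscr{L}_n\Omega\cdot h_{ab}$ and would flip the sign of \eqref{eq:lKrr}; that is a convention you need to pin down explicitly rather than leave to ``orientation.''

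The genuine gap is in your derivation of \eqref{eq:shear}: you assert that in $\circon{\widehat{K}}{}_{qq}=\Nhat^{-1}\bigl(\Kstarqq-\bb\,\Kstar\bigr)$ the term $\bb\,\Kstar$ is higher order. It is not. Under the falloff conditions one has $\Kh_{ab}\approx\Nhat^{-1}r\,q_{ab}$, hence $\Kstar=\Nhat\,\gammahat^{ab}\Kh_{ab}=2\,r^{-1}+\mathscr{O}(r^{-2})$, and with $\bb=\bb{}_{^{(-1)}}\,r+\mathscr{O}(1)$ this gives $\bb\,\Kstar=2\,\bb{}_{^{(-1)}}+\mathscr{O}(r^{-1})$ --- the same order as $\Kstarqq=\bb{}_{^{(-1)}}+\mathscr{O}(r^{-1})$. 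Keeping it yields $\Kstarqq-\bb\,\Kstar=-\bb{}_{^{(-1)}}+\mathscr{O}(r^{-1})$, so that $\circon{\widehat{K}}{}_{qq}\approx-\Nhat^{-1}\bb{}_{^{(-1)}}$, and the subtraction $\Kcqq-\circon{\widehat{K}}{}_{qq}$ then produces the $+$ relative sign in \eqref{eq:shear}. Dropping it, as you propose, gives $\Kcqq{}_{^{(-1)}}-\bb{}_{^{(-1)}}\big[\Nhat_1+\dots\big]^{-1}$ instead. This cannot be absorbed into an ``overall sign fixed by orientation conventions'': it is the \emph{relative} sign between the two terms, and that relative sign is exactly what enters the subsequent regularity condition $\Kcqq{}_{^{(-1)}}+\bb{}_{^{(-1)}}\Nhat_1{}^{-1}=0$. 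The fix is simply to retain the trace-removal term; the paper does so implicitly by substituting into the full expression \eqref{inthatextcurv_qq} for $\interior{\widehat{K}}{}_{qq}$.
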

    \begin{proof}
    First recall that using the rescaling of $\tildeon{n}{}^a=\Omega^{-1}n^a$ and $\tildeon{h}{}_{ab}=\Omega^2{h}{}_{ab}$, and that of $\tildeon{\widehat{n}}{}^a=\omega^{-1}\widehat{n}^a$ and $\tildeon{\widehat{\gamma}}{}_{ab}=\omega^2\widehat{\gamma}{}_{ab}$, it is straightforward to show (see also Appendix \ref{appendix:decomp}) that
    \begin{equation}
    	\tildeon{K}{}_{ab} = \Omega \,{K}{}_{ab} - \mathscr{L}_n \Omega\cdot {h}_{ab} \quad {\rm and} \quad \tildeon{\widehat{K}}{}_{ab} = \omega \,{\widehat{K}}{}_{ab} - \mathscr{L}_{\widehat{n}} \omega\cdot {\widehat{\gamma}}_{ab}\,,
    \end{equation}
    which immediately implies 
    \begin{equation}
    	\tildeon{\KK}{}^\circ_{ab} = \omega \,\circon{\KK}{}_{ab}  \quad {\rm and} \quad \tildeon{\widehat{K}}{}^\circ_{ab} = \omega\, \circon{\widehat{K}}{}_{ab}\,,
    \end{equation}
    and in turn verifies 
    \begin{equation}\label{eq: externals}
    	\tildeon{\KK}{}^\circ_{ab} - \tildeon{\widehat{K}}{}^\circ_{ab}= \omega \,\big[\circon{\KK}{}_{ab}-\circon{\widehat{K}}{}_{ab}\big]  \,.
    \end{equation}
    Note also that due to \eqref{eq:falloff1}-\eqref{eq:falloff2} and our assumption on the smoothness of free data on $\widetilde{\Sigma}$, the following asymptotic relations hold 
	\begin{align}  
		&\NN = \NN_1 \,\omega + \NN_2 \, \omega^{2} + \mathscr{O}(\omega^{3})\,,\label{eq:falloffT12NNom}\\\
		&\aaa =\omega^{-2} + \aaa{}_{^{(-1)}}\,\omega^{-1} + \aaa_0 + \aaa_1\, \omega + \aaa_2\,\omega^{2} + \mathscr{O}(\omega^{3})\,,\\
		&\bb =\bb{}_{^{(-1)}}\,\omega^{-1} +\bb_0 + \bb_1 \,\omega + \bb_2 \,\omega^{2} + \mathscr{O}(\omega^{3})\,, \\
		&\kkappa = \kkappa_0 + \kkappa_{1}\, \omega + \kkappa_2\, \omega^{2} + \mathscr{O}(\omega^{3}) \,,\\
		&\Kcqq = \Kcqq{}_{^{(-1)}}\,\omega^{-1} + \Kcqq{}_0+ \Kcqq{}_1 \,\omega+ \Kcqq{}_2\, \omega^{2}+ \mathscr{O}(\omega^{3}) \,. \label{eq:falloffT12Kcqqom}
	\end{align}
    Using \eqref{inthatextcurv_qq} and substituting \eqref{eq:falloffT12NNom}-\eqref{eq:falloffT12Kcqqom} into the asymptotic expansion of $\circon{\KK}{}_{qq} - \circon{\widehat{K}}{}_{qq}$, we get that 
    \begin{equation}
    	\Kcqq-\interior{\widehat{K}}{}_{qq}=\bigg(\Kcqq{}_{^{(-1)}}+\bb{}_{^{(-1)}}\,\bigg[{\Nhat_{1}+\sum_{j=1}^{\mathcal{N}_j}\Nhat_{1,j}^{[log]}\log^j\omega}\bigg]^{-1}\bigg)\,\omega^{-1} + \mathscr{O}(\omega^0)\,, 
    \end{equation} 
    which, in virtue of \eqref{eq: externals}, implies that
    \begin{equation}\label{eq:shear0}
    	\big({\tildeon{\KK}}{}{}^\circ_{ab} - \tildeon{\widehat{K}}{}^\circ_{ab}\big)\,q^aq^b= \Kcqq{}_{^{(-1)}}+\bb{}_{^{(-1)}}\,\bigg[{\Nhat_{1}+\sum_{j=1}^{\mathcal{N}_j}\Nhat_{1,j}^{[log]}\log^j\omega}\bigg]^{-1} + \mathscr{O}(\omega) \,
    \end{equation}
    holds close to $\partial\widetilde{\Sigma}$ which in turn verifies \eqref{eq:shear}.
    
    \medskip
    
    Then, using the $3+1$ decomposition of $\tildeon{K}{}_{ab}$ and $K_{ab}$, by the relations \eqref{appendix:decomp}, we get
    \begin{align}
    	\tildeon{K}{}_{rr} & = \Omega \,{K}{}_{rr} - \mathscr{L}_n \Omega\cdot {h}_{rr}=\Omega \,[\kkappa \widehat{N}^2+ 2\,\widehat{N}\,\kk_e\Nhat^e+\KK_{ab}\Nhat^a\Nhat^b] - \mathscr{L}_n \Omega\cdot[\widehat{N}^2+\Nhat_e\Nhat^e]
    	\,, \label{eq:tKrr}\\
    	\tildeon{K}{}_{ra} & =\Omega \,{K}{}_{ra} - \mathscr{L}_n \Omega\cdot {h}_{ra}=\Omega \,[\widehat{N}\,\kk_a+(\circon{\KK}_{ae}+\tfrac12\KK\gammahat_{ae})\Nhat^e] - \mathscr{L}_n \Omega\cdot\Nhat_a\,,\label{eq:tKra}
    \end{align}
    where $\mathscr{L}_n \Omega$ must have non-vanishing regular asymptotic limit since we have that $\mathscr{L}_n \Omega={N}^{-1}\partial_{T^a-N^a} \Omega\sim{N}^{-1}\partial_{T} \Omega$, where $n^a=N^{-1}(T^a-N^a)$ is the unit normal to $\Sigma$, $T^a$, $N$, and $N^a$ are the evolution vector field, the four-dimensional lapse and the shift, respectively. The non-vanishing regular asymptotic limit of $\mathscr{L}_n \Omega$ then exists as $N$ and $\partial_{T} \Omega$ must have finite asymptotic limits, and that $N^a$ asymptotically tends to zero.
    Then, substituting the relations $\kk_a=\tfrac12\,\big[ \kk \,{\bar q}{}_a + {\bar\kk}\, q_a\big]$, $\Nhat_a=\tfrac12\,\big[ \NN \,{\bar q}{}_a + {\bar{\NN}}\, q_a\big]$, $\Nhat^a=\tfrac12\,\big[ \tildeon{\NN} \, {\bar q}{}^a + {\bar{\tildeon{\NN}}}\, q^a\big]=\tfrac12\,\dd^{-1}\big[(\aaa\NN-\bb\NNbar)\,{\bar q}{}^a +(\aaa\NNbar-\bbbar\NN)\, q^a\big]$, along with \eqref{appendix:gammaNN} - \eqref{appendix:KNN} and the asymptotic expansions \eqref{eq:Nhrseries-0}-\eqref{eq:krseries-0} into \eqref{eq:tKrr} and \eqref{eq:tKra} we get by a straightforward algebraic manipulation that \eqref{eq:lKrr} and \eqref{eq:lKra} also hold, which completes our proof.
    \end{proof}

     The vanishing of the leading order contributions in \eqref{eq:lKrr} requires the vanishing of each $\Nhat_{1,j}^{[log]}$ coefficient separately. Then \eqref{eq:lKra} is automatically fulfilled. Finally the vanishing of the leading order contributions in \eqref{eq:shear} together with $\Nhat_{1,j}^{[log]}=0$ gives the relation $\Kcqq{}_{^{(-1)}}+{\bb{}_{^{(-1)}}}\Nhat_{1}{}^{-1}=0$. Accordingly, the three conditions \eqref{eq: conds-Prop3} in Proposition \ref{prop3} are equivalent to the vanishing of logarithmic terms $\Nhat_{1,j}^{[log]}=0$, and  the relation $\Kcqq{}_{^{(-1)}}+{\bb{}_{^{(-1)}}}\Nhat_{1}{}^{-1}=0$.
    
    \section{The main results}\label{sec:main}
  
    The purpose of this section is to formulate and outline the proofs of the main results of this paper. Most importantly, this subsection aims to clarify several issues that were not sufficiently discussed in \cite{Beyer:2021kmi}. We will assume that there are smooth solutions $(\Nhat,\KK,\kk)$ to the constraint equations on $\Sigma$ that extend only continuously to $\partial\widetilde{\Sigma}$. Accordingly, we allow them to have the most general polyhomogeneous expansions as given by \eqref{eq:Nhrseries-0}-\eqref{eq:krseries-0} in a neighborhood of $\partial\widetilde{\Sigma}$. We also assume that the choice of free data $(\NN,\aaa,\bb,\kkappa,\Kcqq)$ corresponds to the generic asymptotically hyperboloidal initial data sets that can be given by the asymptotic expansions \eqref{eq:falloffT12NNom}-\eqref{eq:falloffT12Kcqqom}. Within this framework, we then examine what kind of restrictions on the coefficients of the asymptotic expansions arise from the existence of well-defined finite Bondi mass and angular momentum, and also from the assumption that the parabolic-hyperbolic form of the constraint equations holds. It turns out that all the conditions implicitly used in Proposition 3 of \cite{Beyer:2021kmi}, concerning the absence of logarithmic terms follow directly from the existence of well-defined finite Bondi mass and angular momentum and from the assumption that the parabolic-hyperbolic form of the constraint equations has smooth solutions on the whole of $\Sigma$. In addition, we also explain part of the implicitly used falloff conditions imposed in Proposition 3 of \cite{Beyer:2021kmi} on $\aaa,\bb$ and $\Kcqq$, we also show that these were overly restrictive. It should also be noted that both sets of these conditions were imposed in \cite{Beyer:2021kmi} without clearly stating them, and also without openly discussing the reason for their application.
    It is also important to emphasize, as we will show below in Corollary \ref{corollary}, that, surprisingly, the existence of well-defined Bondi mass within the parabolic-hyperbolic setup selects a subspace of asymptotically hyperboloidal initial data configurations which admit Cauchy developments, as concluded when studying solutions to the elliptic form of the constraints in \cite{Andersson:1992yk,Andersson:1993we,Andersson:1994ng,Andersson:1996xd}, with smooth, hence logarithmic singularity-free null infinity.
    
    \subsection{Well-defined Bondi mass and angular momentum}
    \label{sec:main-bondi}
    
    We are now ready to formulate our first main result.
    
        \begin{theorem}\label{main theorem 1}
    	Choose a generic asymptotically hyperboloidal set of free data $(\NN,\aaa,\bb, \kkappa,\Kcqq)$ on $\Sigma$ which satisfies the falloff conditions given by the equations \eqref{eq:falloff1}-\eqref{eq:falloff2} in Proposition \ref{prop1}. 
    	Suppose that on $\Sigma$ $(\Nhat,\KK,\kk)$ are smooth solutions of the parabolic-hyperbolic form of the constraints \eqref{eq:phN}-\eqref{eq:phK}, whose coefficients are derived from the chosen free data, and which allow the most general polyhomogeneous expansion of $(\Nhat,\KK,\kk)$ as given by \eqref{eq:Nhrseries-0}-\eqref{eq:krseries-0} in a neighborhood of $\partial\widetilde{\Sigma}$. The asymptotically hyperboloidal initial data set under consideration admits well-defined Bondi mass and angular momentum if and only if all coefficients of the logarithmic terms in \eqref{eq:Nhrseries-0}-\eqref{eq:krseries-0} vanish up to order four and three for $\Nhat,\KK$ and $\kk$, respectively, and, in addition,
    	\begin{equation}\label{eq:alg-theorem}
    		\Kcqq{}_{^{(-1)}}=0\,, \quad \bb{}_{^{(-1)}}=0\,, \quad \kkappa_1=0\,.
    	\end{equation}
    \end{theorem}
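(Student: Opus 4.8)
\emph{Proposed approach.} The plan is to turn both the constraint equations and the defining integrals for the Bondi mass and angular momentum into statements about the coefficients of the polyhomogeneous expansions \eqref{eq:Nhrseries-0}--\eqref{eq:krseries-0}, and then to read off the claimed equivalence from a term-by-term comparison. First I would substitute the ans\"atze \eqref{eq:Nhrseries-0}--\eqref{eq:krseries-0} for the constrained fields, together with the free-data expansions \eqref{eq:falloffT12NNom}--\eqref{eq:falloffT12Kcqqom}, into the parabolic-hyperbolic system \eqref{eq:phN}--\eqref{eq:phK}. Since $\partial_r=-\omega^2\partial_\omega$, each $r$-derivative raises the power of $\omega$ by one, so the substitution produces, at every order $\omega^i$ and every power $\log^j\omega$, an algebraic identity relating the coefficient generated at that order to lower-order coefficients and to the free data.

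Second, I would organise these identities into a recursion. Because $\partial_\omega(\omega^i\log^j\omega)$ generates both $\omega^{i-1}\log^j\omega$ and $\omega^{i-1}\log^{j-1}\omega$, the logarithmic coefficients obey an inhomogeneous linear recursion in which the $\log^{j}$ sector at a given order is sourced by the $\log^{j+1}$ sector at the neighbouring order. The leading orders reproduce the relations of Proposition \ref{prop2} (in particular $\Nhat_1=\kkappa_0^{-1}$, $\KK_0=2\kkappa_0$, $\kk_0=\kkappa_0^{-1}\eth\kkappa_0$), and they already display how the parabolic equation \eqref{eq:phN} ties $\Nhat_{1,j}^{[log]}$ to the $\omega^{-1}$-coefficients $\bb{}_{^{(-1)}},\Kcqq{}_{^{(-1)}}$ of the free data through $\Kstar$ and $\Bscr$. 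Continuing the recursion, I would express the mass- and momentum-carrying coefficients explicitly in terms of the expansion data.

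Third, I would write the Bondi mass and angular momentum as the $r\to\infty$ limits of the appropriate quasi-local expressions (e.g.\ the Hawking-mass limit for the energy, and the corresponding angular-momentum aspect built from $\kk$), expand these in $\omega$ and $\log\omega$, and isolate the orders at which a non-trivial finite limit can survive. The mass aspect appears at order $\omega^{3}$ and hence probes $\Nhat$ through order four and $\KK,\kk$ through order three; demanding that the limit be finite rather than divergent or $\log$-contaminated forces the vanishing of every logarithmic coefficient up to exactly those orders. Feeding $\Nhat_{1,j}^{[log]}=0$ back into the low-order balance of \eqref{eq:phN} and into the mass/momentum integrands then collapses the residual $\omega^{-1}$ and $\omega^{0}$ contributions, and this is precisely what enforces the algebraic conditions \eqref{eq:alg-theorem}, namely $\Kcqq{}_{^{(-1)}}=0$, $\bb{}_{^{(-1)}}=0$ and $\kkappa_1=0$. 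The converse is the same computation run backwards: once \eqref{eq:alg-theorem} and the stated logarithmic coefficients are set to zero, the recursion closes without regenerating logarithms at the relevant orders, so the quasi-local expressions converge and the mass and momentum are well defined.

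\emph{Main obstacle.} The hardest part will be the bookkeeping of the coupled recursion: the parabolic equation for $\Nhat$ and the two hyperbolic equations for $\KK,\kk$ feed their logarithmic coefficients into one another, so one must track, order by order and power by power in $\log\omega$, which coefficient is first constrained and verify that no hidden cancellation lowers the critical order below four for $\Nhat$ or below three for $\KK,\kk$. Equally delicate is pinning down the exact coefficient of the mass and momentum integrands, since it is the identification of that coefficient with a specific combination of expansion data that makes the equivalence sharp in both directions.
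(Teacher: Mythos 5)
Your overall strategy --- substituting the polyhomogeneous ans\"atze into the parabolic-hyperbolic system to obtain an order-by-order algebraic recursion, and separately expanding the quasi-local mass and angular momentum to see which coefficients obstruct a finite limit --- is the same as the paper's. However, two links in your chain, as stated, do not hold. First, finiteness of the Bondi mass and angular momentum does \emph{not} by itself ``force the vanishing of every logarithmic coefficient up to exactly those orders.'' What finiteness of $\lim m_H$ actually yields (the paper's Lemma \ref{lemma: Bondi-mass}) is only $\Nhat_{1,j}^{[log]}=\Nhat_{2,j}^{[log]}=\KK_{1,j}^{[log]}=0$ together with \emph{linear relations} expressing $\KK_{2,i}^{[log]}$ and $\Nhat_{3,i}^{[log]}$ in terms of the still-undetermined $\KK_{3,i}^{[log]}$ and $\Nhat_{4,i}^{[log]}$; likewise the angular momentum only kills $\kk_{1,j}^{[log]}$ and $\kk_{2,j}^{[log]}$. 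The vanishing of the remaining coefficients $\Nhat_{3,j}^{[log]},\Nhat_{4,j}^{[log]},\KK_{2,j}^{[log]},\KK_{3,j}^{[log]},\KK_{4,j}^{[log]},\kk_{3,j}^{[log]}$, and the condition $\kkappa_1=0$, come out of the parabolic-hyperbolic recursion, not out of the mass integrand. You do set up that recursion, so the ingredient is available, but the ``only if'' direction requires running the two halves in this order and combining them; attributing all the vanishing to the finiteness of the mass is a claim that a direct computation would refute.

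Second, and more seriously, you give no mechanism for obtaining $\Kcqq{}_{^{(-1)}}=0$ and $\bb{}_{^{(-1)}}=0$ \emph{separately}. Comparing the mass expansion with the constraint recursion only yields the single combination $\Kcqq{}_{^{(-1)}}+\bb{}_{^{(-1)}}\,\kkappa_0=0$: it arises because the two independent computations of $\Nhat_3$ --- one from demanding $f_0=-4$ in the Hawking-mass integrand, one from the recursion --- must agree. Splitting this into the two conditions of \eqref{eq:alg-theorem} requires the angular-momentum argument: since axial vector fields are divergence-free, gradient parts $\widehat{D}_a\chi$ of $\kk_a$ do not contribute to $J[\phi]$, and since $\widehat{\boldsymbol{\epsilon}}\sim r^2\,\boldsymbol{\epsilon}_q$, finiteness of $J[\phi]$ forces the first two expansion coefficients of $\kk$ to be pure gradients. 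The recursion gives $\kk_1=\eth\big[\tfrac12\,\kkappa_0^{-1}\KK_1\big]-\overline{\eth}\big[\tfrac12\,\kkappa_0^{-1}\Kcqq{}_{^{(-1)}}\big]$, whose second term is not a gradient of a scalar and therefore must vanish, giving $\Kcqq{}_{^{(-1)}}=0$ and then $\bb{}_{^{(-1)}}=0$ from the combination above. Without this step your argument delivers one scalar relation in place of the two separate conditions the theorem asserts.
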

    \begin{proof}
    	An initial data set $(\Nhat,\KK,\kk; \NN,\aaa,\bb, \kkappa,\Kcqq)$ is considered physically adequate if it admits well-defined and thus finite Bondi mass and angular momentum. We begin the proof by proving two lemmas that will be used to verify the statement of our theorem.
    	
    	\begin{lemma}\label{lemma: Bondi-mass}
    	The Bondi mass can be finite, and thus well-defined, if and only if for the expansion coefficients in \eqref{eq:Nhrseries-0} and \eqref{eq:Krseries-0} the following relations hold
    	\begin{align}
    		\Nhat_{1} & = 2 \,\KK_0^{-1}\, \label{eq: lemmaBM1} \\ \Nhat_{2} & = -\big[\,\aaa{}_{^{(-1)}}\, \KK_0 + 2\, \KK_1\,\big]\,\KK_0^{-2}\,  \\
    		\Nhat_{3} & = \bigg(2\,(\KK_1^2-2) + \KK_0\,\big[\,\aaa{}_{^{(-1)}} \KK_1 -2\,\KK_2\big] \nonumber \\ & \hskip2.5cm - \KK_0^2\,\big(\,2\,\aaa_0 - \aaa{}_{^{(-1)}}^2 - \bb{}_{^{(-1)}}\overline{\bb{}_{^{(-1)}}}  + \tfrac12\,[\,\eth\overline{\NN}_1+\overline{\eth}\NN_1]\big)\bigg)\,\KK_0^{-3} \,,\label{eq: lemmaBM3} 
    	\end{align}
    	and also for all $j=1,2,\dots,\mathcal{N}_j$
    	\begin{align}
    		\Nhat_{1,j}^{[log]} & = \Nhat_{2,j}^{[log]}=\KK_{1,j}^{[log]}=0\,, \label{eq: lemmaBM5} \\
    		\KK_{2, i}^{[log]} & = \KK_0\,\big(\,2 \, \KK_{3, i}^{[log]} + \KK_0^2\, \Nhat_{4, i}^{[log]}\,\big) \cdot \big[\,\aaa{}_{^{(-1)}}\, \KK_0 + 4\, \KK_1\,\big]^{-1}\,,\label{eq: lemmaBM6}\\
    		\Nhat_{3, i}^{[log]} & = -2\,\big(\,2 \, \KK_{3, i}^{[log]} + \KK_0^2\, \Nhat_{4, i}^{[log]}\,\big) \cdot \big(\KK_0\,\big[\,\aaa{}_{^{(-1)}}\, \KK_0 + 4\, \KK_1\,\big]\big)^{-1}\,.\label{eq: lemmaBM7}
    	\end{align}
    	\end{lemma}
    	\begin{proof}\hskip-0.35cm({\bf of Lemma \ref{lemma: Bondi-mass}})
    	To prove our lemma first, we recall that the Bondi mass can be given as the $r\rightarrow \infty$ limit of the Hawking mass that is evaluated on $\mathscr{S}_r$ level surfaces as
    	\begin{equation}
    		m_H=\sqrt{\frac{\mathcal{A}}{16\pi}}\left(1+\frac{1}{16\pi}\int_{\mathscr{S}_r}\Theta^{(+)}\Theta^{(-)}\epshat\right)\,,
    		\label{eq:Hexp}
    	\end{equation}
    	where $\epshat$ is the volume element associated with $\gammahat_{ab}$, $\mathcal{A}=\int_{\mathscr{S}_r}\epshat$ is the area of the ${\mathscr{S}_r}$ level surfaces, and where
    	\begin{equation}\label{eq: thetaPM}
    		\Theta^{(\pm)}=\KK\pm\Kstar\Nhat^{-1}
    	\end{equation}
    	denote the null expansions with respect to $n^{(\pm)}_a=n_a\pm\widehat{n}_a$, where $n_a$ is the timelike normal to $\Sigma$ and $\widehat{n}_a$ is the normal to the $r=const$ level surfaces within $\Sigma$, and $\Kstar=\Nhat\Kh$.
    	
    	Since we are interested in the $r\rightarrow \infty$ limit of the Hawking mass, it is sufficient to study its asymptotic behavior. For the asymptotic expansion of the prefactor $\sqrt{\mathcal{A}/16\,\pi}$ the relation
    	\begin{equation}
    		\sqrt{\mathcal{A}/16\,\pi}=r/2+\OO{0}
    	\end{equation}
    	holds, the Hawking mass $m_H$ cannot tend to a finite value in the asymptotic limit, unless for the integral term the relation
    	\begin{equation}\label{eq:intth+th-}		
    		\int\Theta^{(+)}\Theta^{(-)}\,\epshat=\int\Theta^{(+)}\Theta^{(-)}\,\sqrt{\dd}\,\boldsymbol{\epsilon}_q=-16\, \pi+\OO{-1}
    	\end{equation}
    	where $\dd$ is the ratio of the determinant of $\gammahat_{ab}$ to that of the unit sphere metric $q_{ab}$, and $\boldsymbol{\epsilon}_q$ is the volume element of the unit sphere, which, in standard spherical coordinates, is $\sin\vartheta\,\mathrm{d}\vartheta\wedge\mathrm{d}\varphi$.
    	
    	In the next step we have to study the asymptotic expansion of the integrand $\Theta^{(+)}\Theta^{(-)}\sqrt{\dd}$, which due to \eqref{eq: thetaPM} includes $\Nhat, \KK, \Kstar, \dd$ and, since we are interested in a well-defined and finite Bondi mass, can  be formally written without loss of generality as
    	\begin{equation}\label{eq: exp-thetaPM}
    		\Theta^{(+)}\Theta^{(-)}\sqrt{\dd} = f_{-2}\,r^2 + f_{-1}\,r + f_{0} +f_{1}\,r^{-1} + \mathscr{O}(r^{-2})\,.
    	\end{equation}
    	Next we have to determine the involved critical coefficients $f_{-2},f_{-1},f_{0},f_{1}$ in \eqref{eq: exp-thetaPM}. Before doing so, note that the above discussion implies that the necessary and sufficient condition for the existence of a well-defined Bondi mass is the vanishing of $f_{-2}$ and $f_{-1}$ and that $f_{0}=-4$, and also that $f_{1}$ is finite.
    	
    	Substituting the asymptotic expansions of the free data given by  \eqref{eq:falloffT12NNom}-\eqref{eq:falloffT12Kcqqom} and those of $\Nhat$ and $\KK$ given by \eqref{eq:Nhrseries-0} and \eqref{eq:Krseries-0} into the asymptotic expansion of the integrand $\Theta^{(+)}\Theta^{(-)}\sqrt{\dd}$ which is given by \eqref{eq: thetaPM}, and includes $\Nhat, \KK, \Kstar, \dd$, we get that the vanishing of $f_{-2}$ and $f_{-1}$ and that $f_{0}=-4$, and also that $f_{1}$ is finite requires \eqref{eq: lemmaBM1}-\eqref{eq: lemmaBM3} to hold, and that the critical coefficients in \eqref{eq: exp-thetaPM} 
    	cannot be finite unless the  relations \eqref{eq: lemmaBM5}-\eqref{eq: lemmaBM7} hold for the coefficients of the logarithmic terms in the asymptotic expansions of $\Nhat$ and $\KK$. 
    	\end{proof}
    	
    	\begin{lemma}\label{lemma: Bondi-ang}
    	The Bondi angular momentum cannot be finite, and thus well-defined, unless for all $j=1,2,\dots,\mathcal{N}_j$
    	\begin{equation}\label{eq: lemmaBJ} 
    		\kk_{1,j}^{[log]} = \kk_{2,j}^{[log]}=0\,,  
    	\end{equation}
    	hold for the coefficients of the logarithmic terms appearing in the asymptotic expansion \eqref{eq:krseries-0}. 	
    	\end{lemma}
    	\begin{proof}\hskip-0.35cm({\bf of Lemma \ref{lemma: Bondi-ang}})
    	As it was shown by one of the present authors IR in \cite{Racz:2024giv}, the quasi-local angular momentum determined by an axial vector field $\phi^a$ (i.e. a vector field with closed orbits (with period $2\,\pi$, and with two poles), and with vanishing divergence with respect to the induced connection $\widehat{D}_a$) tangent to a $r=const$ level surface is given by
    	\begin{equation}\label{eq: qlam}
    		J[\phi] = - (8\, \pi)^{-1}\int_{{\mathscr{S}}_r} \phi^a{ \rm \bf k}_a \, \widehat{{\boldsymbol{\epsilon}}} \,.
    	\end{equation}
    	It was also shown in \cite{Racz:2024giv} that the generic form of an axial vector field $\phi^a$ can be given as
    	\begin{equation}\label{eq: axial}
    		{\phi}{}^a = {\underline{\sqrt{{\widehat{\gamma}}/{q}}}}_{\,[{\tiny{\interior{\phi}}}]}\, \left[\sqrt{{q}/{\widehat{\gamma}}\,}\, \interior{\phi}{}^a\right],
    	\end{equation}
    	where $\interior{\phi}{}^a$ is an axial Killing vector field of the unit sphere metric $q_{ab}$, $\widehat{\gamma}$ and $q$ denote the determinants of the metrics $\widehat{\gamma}_{ab}$ and $q_{ab}$ on $\mathscr{S}$, respectively, and the averaging factor
    	\begin{equation}
    		{\underline{\sqrt{{\widehat{\gamma}}/{q}}}}_{\,[{\tiny{\interior{\phi}}}]} = \frac1{2\,\pi} \int_{0}^{2\,\pi}\sqrt{{\widehat{\gamma}}/{q}}  \,\,d\interior{\varphi}\,,
    	\end{equation}
    	was also used, which is constant along the integral curves of the axial Killing vector field $\interior{\phi}{}^a$, and $\interior{\varphi}$ denotes the corresponding $2\,\pi$-periodic axial coordinate.
    	
    	Putting all the above observations together, we get 
    	\begin{equation}\label{eq: qlam1}
    		J[\phi] = - (8\, \pi)^{-1}\int_{{\mathscr{S}}_r} {\underline{\sqrt{\dd}}}_{\,[{\tiny{\interior{\phi}}}]}\,(\interior{\phi}{}^a{ \rm \bf k}_a) \, \boldsymbol{\epsilon}_q \,,
    	\end{equation}
    	where $\boldsymbol{\epsilon}_q$ is the volume element of the unit sphere metric $q_{ab}$. It is well-known that there are many ways to fix a unit sphere metric on ${{\mathscr{S}}_r}$. Nevertheless, it is possible to reduce the corresponding freedom considerably by restricting our attention to centre-of-mass unit sphere reference systems (for more details, see, e.g., \cite{Klainerman:2019uaa,Racz:2024giv}) selected by the conditions
    	\begin{equation}\int_{{\mathscr{S}}_r} \sqrt{\dd}\, \vec{x}\, \boldsymbol{\epsilon}_q = \vec{0}\,,
    	\end{equation}
    	where $\vec{x}=(\cos\varphi\sin\vartheta,\sin\varphi\sin\vartheta,\cos\vartheta)$ and $\vec{0}$ denotes the three-dimensional zero vector. These centre-of-mass unit sphere reference systems still allow a three-parameter family subsystem determined up to three-dimensional rotations of the unit sphere in $\mathbb{R}^3$.
    	
    	It also follows from the  analysis in \cite{Racz:2024giv} that the quasi-local angular momentum expressions $J[\phi]$ determined by a distinguished axial vector field ${\phi}{}^a$ can be finite only if each of the three principal angular momentum expressions, even though $q_{ab}$ may be boosted with respect to a centre-of-mass unit sphere reference system,
    	\begin{equation}\label{eq:qlam2}
    		J[\phi_{{}^{{}^{{}_{{}^{(i)}}}}}\hskip-0.06cm] = - (8\, \pi)^{-1}\int_{{\mathscr{S}}_r} {\underline{\sqrt{\dd}}}_{\,[{\tiny{\interior{\phi}_{{}^{{}^{{}_{{}^{(i)}}}}}}}\hskip-0.06cm]}\,(\interior{\phi}{}^a_{{}^{{}^{{}^{{}^{(i)}}}}}\hskip-0.06cm{ \rm \bf k}_a) \, \boldsymbol{\epsilon}_q \,.
    	\end{equation} 
    	are finite, where $\interior{\phi}{}^a_{{}^{{}^{{}^{{}^{(i)}}}}}\hskip-0.06cm$, $(i=1,2,3)$, denote the three axial Killing vector fields of $({\mathscr{S}}_r,q_{ab})$
    	\begin{align}\label{eq:rotation-unit}
    		\interior{\phi}{}^a_{{}^{{}^{{}^{{}^{(1)}}}}}\hskip-0.06cm & = -\sin\varphi \,(\partial_\vartheta)^a - \cot \vartheta\,\cos\varphi \,(\partial_\varphi)^a \\
    		\interior{\phi}{}^a_{{}^{{}^{{}^{{}^{(2)}}}}}\hskip-0.06cm & = \cos\varphi \,(\partial_\vartheta)^a - \cot \vartheta\,\sin\varphi\,(\partial_\varphi)^a \\	
    		\interior{\phi}{}^a_{{}^{{}^{{}^{{}^{(3)}}}}}\hskip-0.06cm & = (\partial_\varphi)^a\,,
    	\end{align}
    	corresponding to the generators of the rotations about the $x,y,z$-axis of a unit sphere on $\mathbb{R}^3$. Thus the Bondi angular momentum can be finite only if the limits of the quasi-local angular momentum expressions $J[\phi_{{}^{{}^{{}_{{}^{(i)}}}}}\hskip-0.06cm]$ are finite. 
    	
    	By \eqref{eq:qlam2} and \eqref{eq:rotation-unit}, the quasi-local angular momentum expressions $J[\phi_{{}^{{}^{{}_{{}^{(i)}}}}}\hskip-0.06cm]$ include the $\vartheta$ and $\varphi$ components of $\kk_a$, which, by choosing the dyad through
    	\begin{equation}
    		q^a=(\partial_\vartheta)^a+\frac{\mathbbm{i}}{\sin\vartheta}(\partial_\varphi)^a\,,
    	\end{equation}
    	can be written as  
    	\begin{equation}
    		\kk_\vartheta=\tfrac12\,[\kk_aq^a+\kk_a\bar{q}^a]=\tfrac12\,[\kk+\overline{\kk}]\quad {\rm and} \quad \kk_\varphi=\tfrac12\,[\kk_aq^a-\kk_a\bar{q}^a]=\tfrac{\mathbbm{i}}{2}\,\sin\vartheta\,[\kk-\overline{\kk}]\,.
    	\end{equation}
    	
    	Finally, by examining the integral expressions $J[\phi_{{}^{{}^{{}_{{}^{(i)}}}}}\hskip-0.06cm]$ and using the fact that the dominant radial dependence of $\dd$ implies ${\underline{\sqrt{\dd}}}_{\,[ {\tiny{\interior{\phi}_{{}^{{}^{{}_{{}^{(i)}}}}}\hskip-0.06cm}}]}\sim r^2$, we conclude that the quasi-local angular momentum expressions $J[\phi_{{}^{{}^{{}_{{}^{(i)}}}}}\hskip-0.06cm]$, and thus also their asymptotic limit and thus also the Bondi angular momentum, cannot be finite unless for the coefficients of the logarithmic terms in the asymptotic expansion of $\kk$ \eqref{eq: lemmaBJ} holds, which completes the proof of this lemma.
    	\end{proof}
    	
    	\medskip
    	
    	Returning to the main stream of the proof of our theorem, recall that some of the logarithmic terms in \eqref{eq:Nhrseries-0}-\eqref{eq:krseries-0} vanish (see \eqref{eq: lemmaBM5} and \eqref{eq: lemmaBJ}) and for some others additional algebraic relations hold (see \eqref{eq: lemmaBM6} and \eqref{eq: lemmaBM7}) when the Bondi mass and the angular momentum are guaranteed to be finite. However, the non-vanishing and undetermined coefficients in \eqref{eq:Nhrseries-0}-\eqref{eq:krseries-0} are still completely arbitrary, more importantly they are still not subject to the parabolic-hyperbolic system \eqref{eq:phN}-\eqref{eq:phK}, and this is indeed the case we are interested in. To obtain the desired restrictions, we substitute the asymptotic expansions in \eqref{eq:Nhrseries-0}-\eqref{eq:krseries-0} into the parabolic-hyperbolic system \eqref{eq:phN}-\eqref{eq:phK} and sort the terms with respect to powers of $r^{-1}$ and also of $\log\,r^{-1}$. Since the resulting leading order coefficients of the powers of $r^{-1}$ and $\log\,r^{-1}$ are all independent of $r$, the parabolic-hyperbolic system \eqref{eq:phN}-\eqref{eq:phK} holds up to the desired finite orders if these coefficients vanish individually. In this way we obtain a system of algebraic equations for some of the coefficients in the asymptotic expansions \eqref{eq:Nhrseries-0}-\eqref{eq:krseries-0}.
    	
    	The number of terms still involved in \eqref{eq:falloffT12NNom}-\eqref{eq:falloffT12Kcqqom} is optimal to produce the desired system of algebraic equations capable of determining the coefficients involved in \eqref{eq:Nhrseries-0}-\eqref{eq:krseries-0}. Some of these (compact enough) coefficients, obtained from the algebraic system discussed above, are
    	\begin{align}
    		&\hskip-0.2cm\Nhat_0=0, \ \Nhat_1=\kkappa_0^{-1}, \hskip0cm
    		\ \Nhat_2=-\frac{\KK_1+\aaa{}_{^{(-1)}}\kkappa_0}{2\kkappa_0^2}\,, \ \Nhat_3 = -\tfrac32\frac{(\eth\kkappa_0)(\ethb\kkappa_0)}{\kkappa_0^5} + \tfrac12\frac{\eth\ethb\kkappa_0}{\kkappa_0^4}\nonumber\\& \hskip0.9cm + \tfrac18\frac{2 \KK_1^2 + {\Kcqq}{}_{^{(-1)}}\overline{\Kcqq}{}_{^{(-1)}}-4}{\kkappa_0^3} -\tfrac18\frac{8  \aaa_0+2(\eth\overline{\NN_1}+\ethb\NN_1)+4\aaa{}_{^{(-1)}}^2+5 \bb{}_{^{(-1)}}\overline{\bb{}_{^{(-1)}}}}{\kkappa_0} \nonumber\\& \hskip0.9cm +\tfrac14\frac{4 \kkappa_2+2(\NNbar_1\eth\kkappa_0+\NN_1\ethb\kkappa_0)+2\KK_1\,\aaa{}_{^{(-1)}}+(\Kcqq{}_{^{(-1)}}\overline{\bb}{}_{^{(-1)}}+\overline{\Kcqq}{}_{^{(-1)}}\bb{}_{^{(-1)}})}{\kkappa_0^2}\,, \label{eq:rasympNh}\\
    		&\hskip-0.2cm\KK_0=2\kkappa_0, \ \KK_1=\KK_1,\ \KK_2=\frac{\eth\kkappa_0\ethb\kkappa_0}{\kkappa_0^3}+\ethb\eth\kkappa_0^{-1}-(\NNbar_1\eth\kkappa_0+\NN_1\ethb\kkappa_0)-2\kkappa_2 \nonumber\\&\hskip5.3cm-\tfrac12\KK_1\aaa{}_{^{(-1)}}-\tfrac14(\Kcqq{}_{^{(-1)}}\overline{\bb}{}_{^{(-1)}}+\overline{\Kcqq}{}_{^{(-1)}}\bb{}_{^{(-1)}})\,,\label{eq:rasympKK}\\
    		&\hskip-0.2cm\kk_0=\frac{\eth\kkappa_0}{\kkappa_0},\ \kk_1=\frac{[\kkappa_0\,\eth\KK_1-\KK_1\,\eth\kkappa_0]-[\kkappa_0\,\overline{\eth}\Kcqq{}_{^{(-1)}}-\Kcqq{}_{^{(-1)}}\overline{\eth}\kkappa_0]}{2\kkappa_0^{2}}, \ \kk_2=\kk_2,\label{eq:rasympkk}
    	\end{align}
    	Remarkably, on the way to determine these relations, and to close the algebraic system, we also get for the coefficients of the logarithmic terms in the asymptotic expansions \eqref{eq:Nhrseries-0}-\eqref{eq:krseries-0}
    	\begin{align}
    		\Nhat_{3,j}^{[log]} & = \Nhat_{4,j}^{[log]}=\KK_{2,j}^{[log]}=\KK_{3,j}^{[log]}=\KK_{4,j}^{[log]}=\kk_{3,j}^{[log]}=0\,, \label{eq: lemmaPH} 
    	\end{align}
    	and, in addition $\kkappa_{1}=0$ verifying the third relation in \eqref{eq:alg-theorem}.
    	
    	Note that due to the nonlinearity involved, only a finite number of terms of the asymptotic expansions \eqref{eq:Nhrseries-0}-\eqref{eq:krseries-0} can be derived using the algorithm outlined above. However, the coefficients in \eqref{eq:rasympNh}-\eqref{eq:rasympkk} are sufficient to obtain a considerable simplification of the expansion coefficients in \eqref{eq: exp-thetaPM}. As expected, $\Nhat_1$ and $\Nhat_2$ in the equations \eqref{eq: lemmaBM1} and \eqref{eq:rasympNh}, respectively, by virtue of $\KK_0=2\,\kkappa_0$, are consistently equal to each other, so $f_{-2}=f_{-1}=0$ automatically holds. Surprisingly, however, the two forms of $\Nhat_3$, given by the equations \eqref{eq: lemmaBM1} and \eqref{eq:rasympNh}, coincide only if the relation \begin{equation}
    		(\Kcqq{}_{^{(-1)}}+\bb{}_{^{(-1)}}\,\kkappa_0)\overline{(\Kcqq{}_{^{(-1)}}+\bb{}_{^{(-1)}}\, \kkappa_0)}=0\,,
    	\end{equation}
    	which implies that $f_{0}=-4$ can only be satisfied, so the Bondi mass can only be finite if
    	\begin{equation}\label{eq:relKCqq}
    		\Kcqq{}_{^{(-1)}}+\bb{}_{^{(-1)}}\,\kkappa_0=0\,,
    	\end{equation}
    	holds.
        
        Note that the Bondi mass is then determined by the first-order contribution of the product of $\sqrt{\mathcal{A}/16\,\pi}=r/2+\OO{0}$ and the integral of $f_1$ in \eqref{eq: exp-thetaPM}. After assuming the vanishing of $\aaa{}_{^{(-1)}}$, $\bb{}_{^{(-1)}}$ and $\Kcqq{}_{^{(-1)}}$ for simplicity, we get that
    	\begin{align}
    		\hskip-0.cm  m_B & =\frac{1}{32\pi}\hskip-0.1cm\int_{\partial\widetilde{\Sigma}}\bigg[\frac{\KK_1{}^3}{\kkappa_0}+8\,\Nhat_4\,\kkappa_0{}^3
    		-4\,\kkappa_0\,\kkappa_3+12\,\aaa_1\,\kkappa_0{}^2-2\,\KK_1\,\big[\kkappa_0{}^{-1}
    		+3\,\aaa_0\,\kkappa_0 - 4 \,\kkappa_2\big]
    		\bigg.\nonumber\\
    		&\bigg.\hskip0.6cm
    		-\Big\{\kkappa_0{}^5\,\NN_1{}^{-1/2}\,\ethb\,\left(\NN_1{}^{3/2}\kkappa_0{}^{-4}\KK_1\right)+2\,\kkappa_0\,\NN_2{}^2\,\ethb\left(\NN_2{}^{-1}\kkappa_0\right)+\tfrac14\,\kkappa_0{}^{5}\,\eth\ethb\left(\kkappa_0{}^{-6}\,\KK_1\right)
    		\Big.\bigg.\nonumber\\
    		&\bigg.\Big.\hskip5.2cm
    		 -\tfrac32\,\kkappa_0{}^{-2}\,\KK_1\,\eth\ethb\,\kkappa_0+\tfrac14\,\kkappa_0\,\eth\ethb\,\KK_1 +``cc"
    		\Big\}
    		\bigg]\boldsymbol{\epsilon}_q\,.
    	\end{align}
    	This expression depends only on the background fields, and also on $\KK_1$ and $\Nhat_4$---which are part of asymptotic freedom, as we will see in the proof of Theorem \ref{main theorem 2}---, thereby it is manifestly finite.
    	
    	\medskip
    	
    	An analogous explicit formula is also needed for the Bondi angular momentum expression. Before giving it, recall that the quasi-local angular momentum on the topological two-sphere, $\mathscr{S}_r$, starts by choosing a centre-of-mass unit sphere reference system on $\mathscr{S}_r$. One must also find the axial vector field $\phi{}^a$ so that the functional
    	\begin{equation}
    		J[\phi] = - (8\, \pi)^{-1}\int_{{\mathscr{S}}_r} (\phi{}^a{ \rm \bf k}_a) \, \widehat{\bold{\epsilon}} \,,
    	\end{equation}
    	attains its maximum on $\mathscr{S}_r$ for a given choice of $\kk_a$ \cite{Racz:2024giv}.
    	
    	There are two important facts to note here. First, for an axial vector field the divergence $\widehat{D}_a \phi{}^a$ vanishes. This has the important consequence that if $\kk_a=\kk'_a+\widehat{D}_a \chi$, then the gradient $\widehat{D}_a \chi$ of the scalar function $\chi$ does not contribute to the value of $J[\phi]$ (see, e.g., the derivation of equation (1.3) in \cite{Racz:2024giv}). Second, even after dropping all the logarithmic terms $\kk_{1,j}^{[log]}$, $\kk_{2,j}^{[log]}$ and $\kk_{3,j}^{[log]}$ that appear in \eqref{eq:krseries-0}, because of the relation $\widehat{\boldsymbol{\epsilon}}\sim r^2\,{\boldsymbol{\epsilon}}_q$, we have to guarantee that the first two terms, $\kk_0$ and $\kk_1$, in the asymptotic expansion of $\kk$ either vanish or they are entirely or partialy gradients. Note that  \eqref{eq:rasympkk} gives that $\kk_0=\eth(\log\kkappa_0)= q^e\widehat{D}_e (\log\kkappa_0)$ and $\kk_1=\eth\big[\tfrac12\,\kkappa_0^{-1}\KK_1\big]-\overline{\eth}\big[\tfrac12\,\kkappa_0^{-1}\Kcqq{}_{^{(-1)}}\big]=q^e\widehat{D}_e[\tfrac12\,\kkappa_0^{-1}\KK_1\big]-\overline{\eth}\big[\tfrac12\,\kkappa_0^{-1}\Kcqq{}_{^{(-1)}}\big]$. Note also that ${\kk}_a = \tfrac12\, [q_a \bar{q}^e+\bar{q}_a q^e]\,{\kk}_e=\tfrac12\, [\bar{q}_a\,{\kk} + q_a\,\bar{\kk}]$, which implies for $\widetilde{\kk}=\eth\chi=q^e\widehat{D}_e\chi$ that $\widetilde{\kk}_a = \tfrac12\, [q_a \bar{q}^e+\bar{q}_aq^e]\,\widehat{D}_e\chi=\gammahat_a{^e} \,\widehat{D}_e\chi=\,\widehat{D}_a\chi$, where the equivalence of the projectors $q_a{^b}=\tfrac12\, [q_a \bar{q}^b+\bar{q}_a q^b]$ and  $\gammahat_a{^b}$ was also used. These observations imply that $\kk_0$ and, whenever $\Kcqq{}_{^{(-1)}}=0$\,\footnote{To see this, note that $\overline{\eth}\big[\kkappa_0^{-1}\Kcqq{}_{^{(-1)}}\big]=0$ implies the vanishing of the product $\kkappa_0^{-1}\Kcqq{}_{^{(-1)}}$, which, due to the non-vanishing of $\kkappa_0$, gives $\Kcqq{}_{^{(-1)}}=0$. Note also that in principle, by requiring the vanishing of $\kk_1$, we could also end up with the assumption that $\eth\big[\kkappa_0^{-1}\KK_1\big]-\overline{\eth}\big[\kkappa_0^{-1}\Kcqq{}_{^{(-1)}}\big]=0$. However, this would result in a restriction of the constrained field $\KK$, which would lead to a conceptual inconsistency. Therefore, we must discard this case.}, also the remaining part of $\kk_1$ can be given as gradients of scalar fields, respectively, which, in turn, imply that the truly finite Bondi angular momentum read as
    	\begin{equation}\label{eq: Bondi-impmom}
    		J_B[\phi] = - (16\, \pi)^{-1}\int_{\partial\widetilde{\Sigma}} [(\phi^a\overline{q}_a)\,\kk_2 + (\phi^aq_a)\,\overline{\kk}_2] \, {\boldsymbol{\epsilon}}_q \,.
    	\end{equation}
    	Note that, by virtue of \eqref{eq: Bondi-impmom}, the Bondi angular momentum is determined by $\kk_2$, which, like $\KK_1$ and $\Nhat_4$ in the  Bondi mass case, is part of the asymptotic freedom, as we will see in the proof of Theorem \ref{main theorem 2}.
    	Note also that \eqref{eq:relKCqq} together with the vanishing of $\Kcqq{}_{^{(-1)}}$ implies $\bb{}_{^{(-1)}}=0$, i.e., the first and middle relations in \eqref{eq:alg-theorem} also hold.
    		
    	\medskip
    	
    The other direction is self-explanatory, as if all coefficients of the logarithmic terms in \eqref{eq:Nhrseries-0}-\eqref{eq:krseries-0} vanish up to order four and three for $\Nhat,\KK$ and $\kk$, respectively, and, in addition, the relations in \eqref{eq:alg-theorem} hold for smooth solutions of the parabolic-hyperbolic form of the constraint equations that satisfy the assumptions in our theorem, the Bondi mass and angular momentum are automatically well-defined, which completes our proof.
	\end{proof} 
	
	Since in Lemma \ref{lemma: Bondi-mass} it was found that each of the logarithmic terms $\Nhat_{1,j}^{[log]}$ vanishes, by \eqref{eq:lKrr} and \eqref{eq:lKra} of Lemma \ref{lemma1} we  have immediately that the relations $\tildeon{K}{}_{rr}^{[log]}=0$, $\tildeon{K}{}_{ra}^{[log]}=0$ in \eqref{eq: conds-Prop3} hold on $\partial\widetilde{\Sigma}$ if the Bondi mass is  well-defined.\footnote{Note that this in principle would still allow for the presence of log terms, which is in complete accordance with the findings in \cite{Chrusciel:1998he,Hintz:2017xxu}, where the authors  was shown that the Trautman-Bondi energy remains well-defined for a wide class of "polyhomogeneous" metrics. As also proved in \cite{Hintz:2017xxu}, this is true even for the Bondi mass loss formula.} If, in addition, the considered initial data is subject to the parabolic-hyperbolic form of the constraints the remaining third relation ${\tildeon{\KK}}{}{}^\circ_{ab} - \tildeon{\widehat{K}}{}^\circ_{ab}=0$ of Lemma \ref{lemma1}, which by virtue of $\Nhat_1=\kkappa_0{}^{-1}$, \eqref{eq:shear0}, \eqref{eq:relKCqq} also holds on $\partial\widetilde{\Sigma}$.

	Summarizing these observations, together with those made in proving Proposition \ref{prop3}, Lemma \ref{lemma1}, and Theorem \ref{main theorem 1}, we also have the following:
	\begin{corollary}\label{corollary}
		Consider an initial data set $(\Nhat,\KK,\kk; \NN,\aaa,\bb, \kkappa,\Kcqq)$ that satisfies all the conditions applied above in Theorem \ref{main theorem 1}. Suppose also that this initial data set admits a well-defined Bondi mass.
		Then all three relations in \eqref{eq: conds-Prop3} hold on $\partial\widetilde{\Sigma}$, and, in turn, the Cauchy development of the considered asymptotically hyperboloidal initial data admits a smooth conformal boundary.
	\end{corollary}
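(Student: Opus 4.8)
The plan is to assemble the corollary directly from the structural results already in place, verifying each of the three conditions in \eqref{eq: conds-Prop3} on $\partial\widetilde{\Sigma}$ in turn and then invoking Proposition \ref{prop3}. Since all the hypotheses of Theorem \ref{main theorem 1} are assumed, the data are smooth solutions of the parabolic-hyperbolic system admitting the polyhomogeneous expansions \eqref{eq:Nhrseries-0}--\eqref{eq:krseries-0}, and the extra assumption of a well-defined Bondi mass puts the full content of Lemma \ref{lemma: Bondi-mass} at our disposal.

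First I would dispose of the two ``radial'' log-conditions $\tildeon{K}{}_{rr}^{[log]}=0$ and $\tildeon{K}{}_{ra}^{[log]}=0$. By Lemma \ref{lemma: Bondi-mass}, finiteness of the Bondi mass forces $\Nhat_{1,j}^{[log]}=0$ for every $j$, this being part of \eqref{eq: lemmaBM5}. Substituting these vanishing coefficients into the leading-order expressions \eqref{eq:lKrr} and \eqref{eq:lKra} of Lemma \ref{lemma1} makes the bracketed sums collapse identically, so both quantities reduce to $\mathscr{O}(\omega)$ and hence vanish on $\partial\widetilde{\Sigma}$. I would stress that this step uses only the finiteness of the Bondi mass and not the constraints themselves.

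The third condition ${\tildeon{\KK}}{}{}^\circ_{ab} - \tildeon{\widehat{K}}{}^\circ_{ab}=0$ is where the parabolic-hyperbolic equations genuinely enter. Here I would invoke the relation \eqref{eq:relKCqq}, namely $\Kcqq{}_{^{(-1)}}+\bb{}_{^{(-1)}}\,\kkappa_0=0$, which was obtained in the proof of Theorem \ref{main theorem 1} precisely by matching the two independent expressions for $\Nhat_3$, one coming from Bondi-mass finiteness and one from feeding the expansion into the constraints. Combining this with $\Nhat_1=\kkappa_0^{-1}$ and with the already-established $\Nhat_{1,j}^{[log]}=0$, the leading coefficient of \eqref{eq:shear0} becomes $\Kcqq{}_{^{(-1)}}+\bb{}_{^{(-1)}}\,\Nhat_1^{-1}=\Kcqq{}_{^{(-1)}}+\bb{}_{^{(-1)}}\,\kkappa_0=0$, so the shear difference is $\mathscr{O}(\omega)$ and vanishes at the boundary. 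Because $q^a$ is a complex dyad, the vanishing of the $q^aq^b$ contraction is equivalent to the vanishing of the full trace-free tensor, as recorded in Lemma \ref{lemma1}, so the third relation in \eqref{eq: conds-Prop3} holds on $\partial\widetilde{\Sigma}$.

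With all three relations of \eqref{eq: conds-Prop3} verified, Proposition \ref{prop3} applies verbatim and yields a Cauchy development admitting a smooth conformal boundary, completing the argument. There is no deep analytic obstacle, since every ingredient has been prepared upstream; the only genuinely delicate point is the bookkeeping, namely keeping straight that the two radial conditions already follow from Bondi-mass finiteness alone, whereas the shear condition additionally requires the constraint-derived algebraic relation \eqref{eq:relKCqq}. Conflating these would obscure exactly which hypothesis is doing the work in each of the three conditions.
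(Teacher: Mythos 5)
Your proposal is correct and follows essentially the same route as the paper: it uses $\Nhat_{1,j}^{[log]}=0$ from Lemma \ref{lemma: Bondi-mass} together with \eqref{eq:lKrr}--\eqref{eq:lKra} to dispose of the two radial conditions, then combines $\Nhat_1=\kkappa_0^{-1}$, \eqref{eq:shear0} and the constraint-derived relation \eqref{eq:relKCqq} for the shear condition, before invoking Proposition \ref{prop3}. Your explicit bookkeeping of which hypothesis (Bondi-mass finiteness alone versus the parabolic-hyperbolic constraints) drives each of the three conditions matches the paper's own presentation.
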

	
	\subsection{Smooth hyperboloidal initial data}
	\label{sec:main-smooth}
	
	Before presenting our second main result, some clarification of the applied terminology used is in order, which will greatly simplify our discussion here and later. Recall first that we are interested in the asymptotic behavior of the initial data on $\Sigma=(r_0,\infty)\times \mathbb{S}^2$, with $r_0>0$, covered by the local coordinates $(r,\vartheta,\varphi)$. It is worth relabeling the $r=const$ level surfaces with the $t=\omega=1/r$ relation, and also, by slightly abusing the notation, replacing $\omega\circ\Phi$ with `$t$' on $\Sigma$, which replaces the factor $(r_0,\infty)$ in the topological product $\Sigma=(r_0,\infty)\times \mathbb{S}^2$ with $(0,t_0)$, where $t_0=1/r_0$. We can then use as synonyms for $\widetilde{\Sigma}$ and $\partial\widetilde{\Sigma}$ the topological products $[0,t_0)\times \mathbb{S}^2$ and $\{0\}\times \mathbb{S}^2$, respectively. This provides a shorthand notation for the embedding, $\Phi: \Sigma\rightarrow \widetilde{\Sigma}\setminus\partial\widetilde{\Sigma}$, by applying $(r,\vartheta,\varphi)\mapsto (t,\vartheta,\varphi)$. Since the critical direction is the radial one we will henceforth assume that in the tangential directions all the variables are smooth on the $t=const$ level surfaces, which are topological two-spheres, as indicated by using the symbol $C^\infty(\mathbb{S}^2)$. More importantly, $C^k\bigl({(0,t_0)},C^\infty(\mathbb{S}^2)\bigr)$ and $C^k\bigl({[0,t_0)},C^\infty(\mathbb{S}^2)\bigr)$, for some non-negative integer $k$, denotes that a field under consideration is of class $C^k$ in the radial direction on ${\Sigma}$ and $\widetilde{\Sigma}$, respectively. $C^\infty$ replaces $C^k$ whenever the fields are also smooth in the radial direction.
	
	We will now combine the falloff conditions on the free data $(\NN,\aaa,\bb; \kkappa,\Kcqq)$ and on the constrained variables $(\Nhat,\KK,\kk)$ found in Theorem \ref{main theorem 1}, which guarantee the existence of well-defined finite Bondi mass and angular momentum, to spell out the conditions on the initial data used in our second main result. As we will see, these conditions are sufficient to ensure that the solutions of the parabolic-hyperbolic form of the constraint equations are free of logarithmic singularities. Note that the authors imposed rather strong falloff conditions on the free data in Proposition 3 of \cite{Beyer:2021kmi}, and  did so without any real attempt to explain the reason for using these restrictions. Note also that in the proof of our second main result, despite the fact that our setup is more general by allowing the involvement of $\aaa{}_{^{(-1)}}$, $\bb{}_{^{(-1)}}$ and $\Kcqq{}_{^{(-1)}}$, and also the angular dependence of $\kkappa_0$, we will still be able to save a considerable part of the Fuchsian analyses of Proposition 3 of \cite{Beyer:2021kmi} of Beyer and Ritchie, which will be done below, pointing out the differences.
    
    \begin{theorem}\label{main theorem 2}
    Choose a generic asymptotically hyperboloidal set of free data $(\NN,\aaa,\bb, \kkappa,\Kcqq)$ on $\Sigma$ (not necessarily derived from a solution of the vacuum constraints) which satisfies the falloff conditions as in Theorem \ref{main theorem 1} with $\kkappa_0$ being a strictly positive smooth on $\partial\widetilde{\Sigma}$. 
    Suppose that on $\Sigma$ $(\Nhat,\KK,\kk)$ are of class  $C^\infty\bigl({(0,t_0)},C^\infty(\mathbb{S}^2)\bigr)$ solutions of the parabolic-hyperbolic form of the constraints \eqref{eq:phN}-\eqref{eq:phK}, whose coefficients are derived from the chosen free data such that $\Nhat>0$. Then, the constrained fields $(\Nhat,\KK,\kk)$ are all of class  $C^\infty\bigl({[0,t_0)},C^\infty(\mathbb{S}^2)\bigr)$ on the whole of $\widetilde{\Sigma}$, i.e., no logarithmic singularities occur, if and only if the asymptotically hyperboloidal initial data set under consideration admits well-defined Bondi mass and angular momentum, and, in addition, \eqref{eq:alg-theorem} and also the following two relations
    	\begin{align}
     \eth\overline{\eth} \,\aaa{}_{^{(-1)}}&=0\,,\label{eq:extra2}\\
    	\overline{\eth}\left[{\Kcqq}{}_0\cdot\kkappa_0^{-1}-\tfrac{1}{2}\,\eth\eth\,\kkappa_0^{-2}\right]&=0\,,\label{eq:extra1}
    	\end{align}
    	hold on $\partial\widetilde{\Sigma}$. 
    \end{theorem}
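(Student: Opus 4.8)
The plan is to treat the parabolic-hyperbolic system \eqref{eq:phN}-\eqref{eq:phK} as a singular (Fuchsian) system in the radial variable $t=\omega=1/r$ near $t=0$, and to split the argument into a formal part---constructing a logarithm-free power series solution order by order---and an analytic part---upgrading that formal series to a genuine $C^\infty$ solution and identifying it with the given one. First I would rewrite the system using $\partial_r=-t^2\,\partial_t$, so that \eqref{eq:phN}-\eqref{eq:phK} acquire the characteristic Fuchsian form $t\,\partial_t u = \mathcal{M}(\vartheta,\varphi)\,u + (\text{terms of higher order in } t)$, where $u=(\Nhat,\KK,\kk)$ and $\mathcal{M}$ is the indicial operator obtained by linearising the right-hand sides about the leading asymptotics fixed in Proposition \ref{prop2} and in Theorem \ref{main theorem 1}. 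This reproduces, and mildly generalises, the reduction of Beyer and Ritchie \cite{Beyer:2021kmi}, the only novelty being the presence of $\aaa{}_{^{(-1)}}$, $\bb{}_{^{(-1)}}$, $\Kcqq{}_{^{(-1)}}$ and of an angle-dependent $\kkappa_0$ in $\mathcal{M}$ and in the source terms.

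Next I would substitute the most general polyhomogeneous ansatz \eqref{eq:Nhrseries-0}-\eqref{eq:krseries-0} into this Fuchsian system and collect equal powers of $t$ and $\log t$. Since the leading coefficients are $r$-independent, the vanishing of each coefficient yields, at every order $i$, a linear algebraic system for $(\Nhat_i,\KK_i,\kk_i)$ and their logarithmic partners whose homogeneous part is governed by $\mathcal{M}-i\,\mathbbm{1}$ acting on the spin-weighted spherical harmonics. A logarithmic term is forced precisely when $\mathcal{M}-i\,\mathbbm{1}$ fails to be invertible on some harmonic sector, i.e. at a resonance, and only then does logarithm-free solvability require the source to lie in the image of $\mathcal{M}-i\,\mathbbm{1}$. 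I would compute the spectrum of $\mathcal{M}$ explicitly---using that on spin-weight $s$ fields the operators $\eth\ethb$ and $\ethb\eth$ reduce to the spin-weighted Laplacian---and show that resonances occur only for a finite set of low orders $i$, so that for all sufficiently large $i$ the recursion closes uniquely and no logarithm can appear.

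The heart of the argument is then to check that the low-order solvability conditions coincide exactly with those listed in the theorem. The finiteness of the Bondi mass and angular momentum, through Lemma \ref{lemma: Bondi-mass} and Lemma \ref{lemma: Bondi-ang} and the relations \eqref{eq: lemmaBM5}, \eqref{eq: lemmaBJ}, \eqref{eq: lemmaPH} established in Theorem \ref{main theorem 1}, already kills all logarithmic coefficients up to order four in $\Nhat,\KK$ and order three in $\kk$, and forces the algebraic relations \eqref{eq:alg-theorem}. What remains are the resonances that the Bondi conditions do not reach; I expect these to sit in the scalar ($s=0$) sector tied to $\aaa{}_{^{(-1)}}$ and in a mixed sector involving $\Kcqq{}_0$ and $\kkappa_0$, and the corresponding orthogonality conditions should reduce, after inserting the explicit coefficients \eqref{eq:rasympNh}-\eqref{eq:rasympkk} and the free-data expansions \eqref{eq:falloffT12NNom}-\eqref{eq:falloffT12Kcqqom}, to precisely \eqref{eq:extra2} and \eqref{eq:extra1}. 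The operator $\eth\ethb$ in \eqref{eq:extra2} is the image of $\mathcal{M}-i\,\mathbbm{1}$ on the $s=0$ resonant sector, while \eqref{eq:extra1} is the $\ethb$-projection enforcing solvability in the $\kk$-equation at its resonant order.

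Granting a logarithm-free formal series, the analytic upgrade proceeds by the Fuchsian existence-and-regularity theorem employed in \cite{Beyer:2021kmi}: subtracting a sufficiently high-order partial sum of the formal series leaves a remainder satisfying a Fuchsian equation with a smooth, suitably small source, which therefore admits a unique remainder that is $C^\infty$ up to $\{t=0\}$; hence the full formal solution is realised by a genuine solution in $C^\infty\bigl([0,t_0),C^\infty(\mathbb{S}^2)\bigr)$. By the uniqueness of the smooth solution of the parabolic-hyperbolic Cauchy problem \cite{Racz:2015mfa}, this smooth solution must coincide with the given one, which settles the ``if'' direction. The ``only if'' direction is immediate: a solution that is $C^\infty$ up to $\partial\widetilde{\Sigma}$ has vanishing logarithmic coefficients, so Theorem \ref{main theorem 1} yields the well-definedness of the Bondi mass and angular momentum together with \eqref{eq:alg-theorem}, while the logarithm-free solvability of the recursion at the two resonant orders is exactly \eqref{eq:extra2}-\eqref{eq:extra1}. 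The main obstacle I anticipate is the precise determination of the full resonance set of the coupled parabolic-hyperbolic operator $\mathcal{M}$---in particular establishing that beyond the orders already controlled by the Bondi conditions there are only the two further obstructions \eqref{eq:extra2}-\eqref{eq:extra1}---and verifying that the mixed parabolic-plus-hyperbolic character of the system does not spoil the applicability of the Fuchsian regularity theorem in the neighbourhood of those resonant orders.
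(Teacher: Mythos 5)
Your overall strategy---recast the system in the radial variable $t=1/r$ as a Fuchsian problem, determine the expansion coefficients by an order-by-order recursion in which \eqref{eq:alg-theorem}, \eqref{eq:extra2} and \eqref{eq:extra1} appear as the solvability conditions at the obstructed orders, and then upgrade formal regularity to genuine smoothness---is the same in spirit as the paper's, and your treatment of the recursion and of the ``only if'' direction matches what is actually done. The analytic half, however, is organized quite differently. The paper never computes the full resonance set of an indicial operator and never invokes an abstract Fuchsian existence theorem: it peels off only the finitely many explicit coefficients already fixed by Theorem \ref{main theorem 1} (order four for $\Nhat,\KK$, order three for $\kk$), observes that the residual vector $\underline{W}=(w_\KK,w_\kk,w_{\Nhat})$ of the \emph{given} solution satisfies the Fuchsian equation \eqref{eq: Fuchsian1} with constant indicial matrix $\mathrm{diag}[-3,-1,0]$, and then bootstraps: since $\underline{W}$ is continuous and vanishes at $t=0$, it is represented by the integral formula \eqref{eq: Fuchsian2}, the substitution $s=t\cdot\tau$ in \eqref{eq: Fuchsian3} shows $\underline{W}/t$ is regular, hence $\partial_t\underline{W}$ is regular, and induction on the order of the $t$-derivative gives $C^\infty$ up to $\partial\widetilde{\Sigma}$. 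Because every eigenvalue of the indicial matrix for the residual is nonpositive, no higher-order resonance analysis is ever needed; the task you flag as your ``main obstacle'' is thereby dissolved rather than solved.

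The one step of your proposal that does not go through as written is the identification of the solution produced by the Fuchsian existence theorem with the solution you were handed. You appeal to ``the uniqueness of the smooth solution of the parabolic-hyperbolic Cauchy problem \cite{Racz:2015mfa}'', but that uniqueness statement propagates from data on an interior slice $t=t_*>0$, whereas your constructed solution is anchored by asymptotic data at the singular boundary $t=0$; the two solutions are not known to agree on any common $t=\mathrm{const}$ surface, so Cauchy uniqueness cannot be invoked. What is actually needed is uniqueness within a decay class at $t=0$ (remainder $o(t^4)$, resp.\ $o(t^3)$, after subtracting the explicit partial sums), together with a verification that the given solution lies in that class---and for a system whose source $\underline{H}$ contains second-order angular derivatives this requires an energy or contraction argument near the singular boundary, not the interior well-posedness result. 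The paper's route of bootstrapping the given solution directly through \eqref{eq: Fuchsian2}--\eqref{eq: Fuchsian3} avoids constructing a second solution altogether and is precisely the repair your argument needs.
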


    A few comments are in order.
    First, note that throughout \cite{Beyer:2021kmi}, especially in the statement of Proposition 3 of \cite{Beyer:2021kmi}, it was assumed that $\aaa{}_{^{(-1)}}$, $\bb{}_{^{(-1)}}$, and $\Kcqq{}_{^{(-1)}}$ all vanish, and that $\kkappa_0$ is constant over $\partial\widetilde{\Sigma}$. This then implies the vanishing of $\eth\kkappa_0$, and in particular that \eqref{eq:extra1} reduces to $\ethb{\Kcqq}{}_0=0$. Note also that neither the vanishing of the expansion coefficient $\aaa{}_{^{(-1)}}$ nor the constancy of $\kkappa_0$ is required in our result, thus, Theorem \ref{main theorem 2} provides a generalization of Proposition 3 of \cite{Beyer:2021kmi}.
    
    Second, it should also be clear that the \eqref{eq:alg-theorem}, \eqref{eq:extra2}, and \eqref{eq:extra1}, are conditions on the free data only. Note also that \eqref{eq:extra2} implies that
    \begin{equation}
    	\aaa{}_{^{(-1)}}=const \label{eq:extra2b}\\
    \end{equation}		
    throughout $\partial\widetilde{\Sigma}$, while \eqref{eq:extra1} can also be solved for ${\Kcqq}{}_0$ and we get
    \begin{equation}
    	{\Kcqq}{}_0 =\tfrac{1}{2}\,\kkappa_0\cdot\eth\eth\,\kkappa_0^{-2}\,.\label{eq:extra1b}
    \end{equation}
  
  \begin{proof}\hskip-0.35cm({\bf of Theorem \ref{main theorem 2}})
  It follows from Theorem \ref{main theorem 1} that the constrained variables $\Nhat,\KK$ and $\kk$, which are supposed to be smooth on $\Sigma$ take the form of the asymptotic expansions
  \begin{align}
  	\Nhat(t)&=\Nhat_0+\Nhat_1t^{1}+\Nhat_2t^{2}+\Nhat_3t^{3}+\Nhat_4t^{4}+w_{\Nhat}(t)t^{4}\,,\label{eq:Nhseries}\\
  	\KK(t)&=\KK_0+\KK_1t^{1}+\KK_2t^{2}+\KK_3t^{3}+\KK_4t^{4}+w_\KK(t)t^{4}\,,\label{eq:Kseries}\\
  	\kk(t)&=\kk_0+\kk_1t^{1}+\kk_2t^{2}+\kk_3t^{3}+w_\kk(t)t^{3}\,,\label{eq:kseries}
  \end{align}
  where the expansion coefficients are of class $C^\infty(\mathbb{S}^2)$ on $\partial\widetilde{\Sigma}\sim \mathbb{S}^2$, while the residuals, $w_{\Nhat}(t), w_\KK(t),w_\kk(t)$, are at least of class $C^0\bigl({[0,t_0)}, C^\infty(\mathbb{S}^2)\bigr)$ and they all vanish at $\partial\widetilde{\Sigma}$. Note that the asymptotic expansions  in  \eqref{eq:Nhseries}-\eqref{eq:kseries} are compatible with our results in Theorem \ref{main theorem 1}
  \begin{align}
  	\Nhat_{1,j}^{[log]} & = \Nhat_{2,j}^{[log]} = \Nhat_{3,j}^{[log]} = \Nhat_{4,j}^{[log]} =0\,, \label{eq:logNh}\\
  	\KK_{1,j}^{[log]} & =\KK_{2,j}^{[log]}=\KK_{3,j}^{[log]}=\KK_{4,j}^{[log]} =0\,,  \label{eq:logKK}\\
  	\kk_{1,j}^{[log]} & =\kk_{2,j}^{[log]}=\kk_{3,j}^{[log]}=0\,, \label{eq:logkk} 
  \end{align}
  and they still allow the potential involvements of higher order polyhomogeneous terms that can be covered by the residuals terms $w_{\Nhat}(t,p), w_\KK(t,p), w_\kk(t,p)$.  

  Substitute \eqref{eq:Nhseries}-\eqref{eq:kseries}, along with the asymptotic expansions \eqref{eq:falloffT12NNom}-\eqref{eq:falloffT12Kcqqom},  into the parabolic-hyperbolic system \eqref{eq:phN}-\eqref{eq:phK} and sort the terms now with respect to powers of $t$. Since the resulting leading order coefficients of the powers of $t$ are all independent of $t$, the parabolic-hyperbolic system \eqref{eq:phN}-\eqref{eq:phK} holds up to the chosen finite orders if these coefficients vanish individually. In this way we obtain a system of algebraic equations for the primary expansion coefficients used in \eqref{eq:Nhseries}-\eqref{eq:kseries}.

  The above algorithm eventually breaks down, because only a finite number of terms are involved in the asymptotic expansions \eqref{eq:Nhseries}-\eqref{eq:kseries}. This is compensated by the inclusion of the $t$-dependent residual terms $w_{\Nhat}(t), w_\KK(t),w_\kk(t)$.
  
  The number of terms in \eqref{eq:falloffT12NNom}-\eqref{eq:falloffT12Kcqqom} is optimal to produce the desired system of algebraic equations capable of determining the coefficients involved in \eqref{eq:Nhseries}-\eqref{eq:kseries}. As it is expected, in this way we recover the relations in \eqref{eq:rasympNh}-\eqref{eq:rasympkk}.
  
  Note that besides the coefficients in the asymptotic expansions \eqref{eq:Nhseries}-\eqref{eq:kseries} we also need the extra relation  \eqref{eq:extra1} (see also \eqref{eq:extra1b}) to close the system. Inspection of these expressions shows that some of the coefficients in the asymptotic expansions \eqref{eq:Nhseries}-\eqref{eq:kseries} are completely free. Consistent with the results in \cite{Beyer:2021kmi}, there are no restrictions on $\KK_1$, $\kk_2$, and $\Nhat_4$, which thus represent the asymptotic freedom of the initial data set. All other coefficients in the asymptotic expansions \eqref{eq:Nhseries}-\eqref{eq:kseries} are completely determined by the asymptotic freedom $(\KK_1,\kk_2,\Nhat_4)$ and free data $(\NN,\aaa,\bb;\kkappa,\Kcqq)$, together with terms derived from them.

  To prove that the constrained variables $(\Nhat,\KK,\kk)$  do indeed extend smoothly to $\partial\widetilde{\Sigma}$, we can repeat the Fuchsian analysis-based argument of Beyer and Richtie \cite{Beyer:2021kmi}. First, note that by eliminating the expansion coefficients from the parabolic-hyperbolic equations, and imposing \eqref{eq:alg-theorem} and \eqref{eq:extra2} (see also \eqref{eq:extra2b}), we end up with a Fuchsian partial differential equation (see, e.g., \cite{Beyer:2019cwu})  of the form
  \begin{align}\label{eq: Fuchsian1}
  	\partial_t \underline{W}(t,p) = {} & \frac{1}{t} diag\big[-3,-1,0\big]\times \underline{W}(t,p) \nonumber \\ & + \underline{H}\big(t,p; \KK_1(p),\kk_2(p),\Nhat_4(p),\underline{W}(t,p),\eth\underline{W}(t,p),{\bar\eth}\underline{W}(t,p),\eth{\bar\eth}\underline{W}(t,p)
  	\big)    
  \end{align}
  for the vector-valued variable $\underline{W}(t,p)=\big( w_\KK(t,p),w_\kk(t,p),w_{\Nhat}(t,p)\big)$, for all $t\in(0,t_0)$ and $p\in \mathbb{S}^2$, where $\underline{H}$ is a complicated but explicitly known vector-valued function that is smooth in each of its specified arguments and it regularly extends to $\partial\widetilde{\Sigma}$. The solution of \eqref{eq: Fuchsian1} can then be given as
  \begin{equation}\label{eq: Fuchsian2}
  	\underline{W}(t,p) = diag\big[t^{-3},t^{-1},1\big] \times \int_0^t diag\big[s^3,s,1\big]\times \underline{H}(s,p) \,\mathrm{d}s\,.
  \end{equation}
  Since the integrand regularly extends to $s=0$, we can also perform the integral transformation by replacing $s$ with the product $t\cdot\tau$, which by \eqref{eq: Fuchsian2} yields 
  \begin{equation}\label{eq: Fuchsian3}
  	\frac{1}{t}\,\underline{W}(t,p) = \int_0^1 diag\big[\tau^3,\tau,1\big]\times \underline{H}(t\cdot\tau,p) \,\mathrm{d}\tau\,.
  \end{equation}
  Since the integrand on the right hand side of \eqref{eq: Fuchsian3} is also regular over $\widetilde{\Sigma}$, this implies that not only the right hand side but the left hand side of \eqref{eq: Fuchsian3} must also be regular. This then implies that both terms on the right hand side of \eqref{eq: Fuchsian1} are regular and in turn the first order $t$-derivative $\partial_t\underline{W}$ of the vector-valued variable of the residuals $\underline{W}(t,p)=\big( w_\KK(t,p),w_\kk(t,p),w_{\Nhat}(t,p)\big)$ is regular at $t=0$. By repeating this process inductively we can also prove that $t$-derivatives of the vector-valued variable $\underline{W}(t,p)$ up to arbitrary order extend regularly to $\partial\widetilde{\Sigma}$, which in turn implies that the constrained variables $(\Nhat,\KK,\kk)$ extend smoothly to $\partial\widetilde{\Sigma}$, i.e., they are of class $C^\infty\bigl({[0,t_0)},C^\infty(\mathbb{S}^2)\bigr)$ over $\widetilde{\Sigma}$.
  
  \medskip
  
  The other direction is again self-explanatory: if the constrained fields $\Sigma$ $(\Nhat,\KK,\kk)$ are all of class $C^\infty\bigl({[0,t_0)},C^\infty(\mathbb{S}^2)\bigr)$ on the whole $\widetilde{\Sigma}$, then the asymptotically hyperboloidal initial data under consideration admit well-defined Bondi mass and angular momentum, and in addition \eqref{eq:alg-theorem}, \eqref{eq:extra2} and \eqref{eq:extra1} all hold on $\partial\widetilde{\Sigma}$.
  \end{proof}

  \section{Numerical studies}
  \label{sec:numerical}

  The purpose of this section is to present our numerical results, the main role of which is to provide additional evidence in support of our claims in the previous section. First, in Section \ref{sec:background} we present our choice of free data. Then, in Section \ref{sec:numsetup} we give a brief overview of the numerical methods used and the setting of the applied parameters. In Section \ref{sec:asymrel} we verify that various combinations of the constrained fields fall with the rates derived based on their anticipated analytic behavior. In Section \ref{subsec: approx Hawking} we present some results concerning the numerical evaluation of the Hawking and Bondi masses.

  \subsection{The Kerr background}
  \label{sec:background}

  Among the $12$ scalar variables stored in the initial data $(\Nhat,\NN,\aaa,\bb;\kkappa,\kk,\KK,\Kcqq)$ only $4$ are constrained, we have to specify the remaining $8$ as free data. To determine this free data we start with the Kerr metric in $(t,r,\vartheta,\varphi)$ spherical Kerr coordinates (see \cite{Csukas:2023aya}), with mass parameter $M$, and rotation parameter $a$. We construct $T=const$ hyperboloidal time slices of this metric using the relation $t=T+\sqrt{M^2+r^2}$. \footnote{This slice was inspired by leaving out the $\log$ term from the RT coordinate transformation in \cite{Racz:2024jws}, i.e.
  	$$\tau=T+M\,\frac{M^2+R^2}{M^2-R^2}\left(-4M\log(|1-R^2/M^2|)\right),$$
  	$$r=\frac{2R}{1-R^2/M^2}.$$}

  We proceed with a $3+1$ decomposition with respect to $T=const$ surfaces, then a $2+1$ decomposition with respect to $r=const$ surfaces. The explicit expressions for the variables can be found in the accompanying Mathematica notebook \cite{csukas_2025_14779011}.
  The asymptotic behavior of the corresponding background\,\footnote{These are indicated by the front upper zero index in round brackets,  ``${}^{\scriptscriptstyle(\!0\!)}$''.} 
  spin-weighted variables are
  \begin{align}
  	\bgNh&=\frac{\sqrt{M^2-a^2\sin^2\vartheta}}{r}-\frac{M^4-a^4\sin^2\vartheta}{2r^3\sqrt{M^2-a^2\sin^2\vartheta}}+\OO{-4}\label{eqs: bgNh}\\
  	\bgNN&=-\mathbbm{i} a M\sin\vartheta\frac{M^2-2a^2\sin^2\vartheta}{r^3}+\OO{-4} \label{eqs: bgNN}\\
  	\bgaaa&=r^2+\frac{1}{2}a^2(1+\cos^2\vartheta)+\OO{-1}\label{eqs: bgaa}\\
  	\bgbb&=-\frac{1}{2}a^2\sin^2\vartheta-\frac{a^2M}{r}\sin^2\vartheta+\OO{-2} \label{eqs: bgbb}\\
  	\bgkkappa&=\frac{1}{\sqrt{M^2-a^2\sin^2\vartheta}}+\frac{a^4\sin^2(2\vartheta)}{8r^2(M^2-a^2\sin^2\vartheta)^{3/2}}+\OO{-3} \label{eqs: bgkappa}\\
  	\bgkk&=-\frac{a^2\cos\vartheta\sin\vartheta}{M^2-a^2\sin^2\vartheta}+\OO{-2}\\
  	\bgKK&=\frac{2}{\sqrt{M^2-a^2\sin^2\vartheta}}+\OO{-2} \label{eqs: KKbg}\\
  	\bgKcqq&=\frac{a^2\sin^2\vartheta}{\sqrt{M^2-a^2\sin^2\vartheta}}+\frac{3a^2M\sin^2\vartheta}{r\sqrt{M^2-a^2\sin^2\vartheta}}+\OO{-2}\,.\label{eqs: Kcqqbg}
  \end{align}
  It is easy to see, that these expressions satisfy all the requirements posed in \eqref{eq:falloffT12NNom}-\eqref{eq:falloffT12Kcqqom}. It is also straightforward to verify that \eqref{eq:extra1} holds for this background.

  In the remainder of this paper, we aim to provide numerical evidence to support the expectation that the perturbed Kerr initial data constructed using part of this background data as freely specifiable variables does indeed lead to a hyperboloidal initial data that smoothly extends to $\partial\widetilde{\Sigma}$.
  
  \subsection{Applied numerical setup}
  \label{sec:numsetup}

  In constructing the perturbed Kerr initial data we utilize the code \verb|ConstraintSolver| \cite{Csukas:2023aya,noauthor_csukas_2024}. All the data and code used to produce the figures---including an archive of the specific version of \verb|ConstraintSolver|---are available to the public \cite{csukas_2025_14779011}.

  Our code uses a spin-weighted spherical harmonic expansion with cutoff $\ell_{max}=16$ in the angular domain, and a fourth-order accurate adaptive Runge-Kutta-Fehlberg method in the radial direction with tolerance parameter $\epsilon=10^{-9}$.
  
  The integration starts at $r=1$, 
  and ends at $r=10^5$, spanning five orders of magnitude, which gives a good indication of the asymptotic behavior. Note that starting at $r=1$ ensures that the integration along the $T=const$ slice starts in the black hole region. We set the initial data for $\KK$ at $r=1$ by adding $10\cdot Y_2{}^0$ excitation to the background $\bgKK$ indicated by \eqref{eqs: KKbg}. We will see that this excitation of the Kerr background produces a Bondi mass that is about 1.87 of the ADM mass $M$ (which is also the Bondi mass) of the stationary Kerr background.

  Finally, in order to increase the numerical accuracy, we divide the fields $(\KK,\kk,\Nhat)$ into a known background $(\bgKK,\bgkk,\bgNh)$ and deviations from the background $(\dKK,\dkk,\dNh)$ and solve the deviations directly, as it was done earlier in \cite{Nakonieczna:2017eev,Csukas:2019qco,Csukas:2023aya}.

  \subsection{Numerical check for the falloff of various expressions}
  \label{sec:asymrel}

  Note that the analytical part of our second main result provides valuable insights into the interrelations of the constrained fields. More explicitly, the asymptotic behavior of the fields $(\Nhat, \KK,\kk)$, together with the dependence of the expansion coefficients on the asymptotic degrees of freedom, determines the falloff rates for various combinations of the constraint fields, which can be used as an accuracy check of our numerical scheme. 

  The most obvious of these relations follows from the fact that the perturbation does not contribute to the leading order. For example, for $\KK$, the leading order coefficient, $\KK_0=2\kkappa_0$, is completely determined by the free data. The coefficient next to the leading order, $\KK_1$, is the first asymptotic freedom where the perturbations appear. In fact, Fig.\,\ref{fig:falloff:KK} clearly shows that $\dKK$ falls off as $r^{-1}$. Now, considering $\kk$, the leading order coefficient, $\kk_0=\kkappa_0^{-1}\eth\kkappa_0$, is again determined by the free data, and the perturbation appears next to the leading order by the asymptotic freedom represented by $\KK_1$. Fig.\,\ref{fig:falloff:Rekk} shows that this is indeed the case. Since $\KK$ is real, $\KK_1$ is also real, as is $\kkappa_0$. Since in the case of axial symmetry the action of the "eth" operator, $\eth$, is only a derivative with respect to the polar angle $\vartheta$, we have that both $\eth\KK_1$ and $\eth\kkappa_0$ are real. This means that $\kk_1=\eth\big[\tfrac12\,\kkappa_0^{-1}\KK_1\big]$ only contributes to the real part of $\kk$, and perturbations of the imaginary part appear in the next order through the asymptotic freedom $\kk_2$, as shown in Fig.\,\ref{fig:falloff:Imkk}. Finally $\Nhat_0=0$ and $\Nhat_1=\kkappa_0^{-1}$ is determined by the background. Perturbations of $\Nhat$ appear in the second order, through $\Nhat_2=-\kkappa_0^{-2}\KK_1/2$. The plots obtained with the numerical data to support this claim are shown in Fig.\,\ref{fig:falloff:Nh}.
  \begin{figure}
  \centering
    \begin{subfigure}{.45\textwidth}
    \centering
      \includegraphics[width=\textwidth]{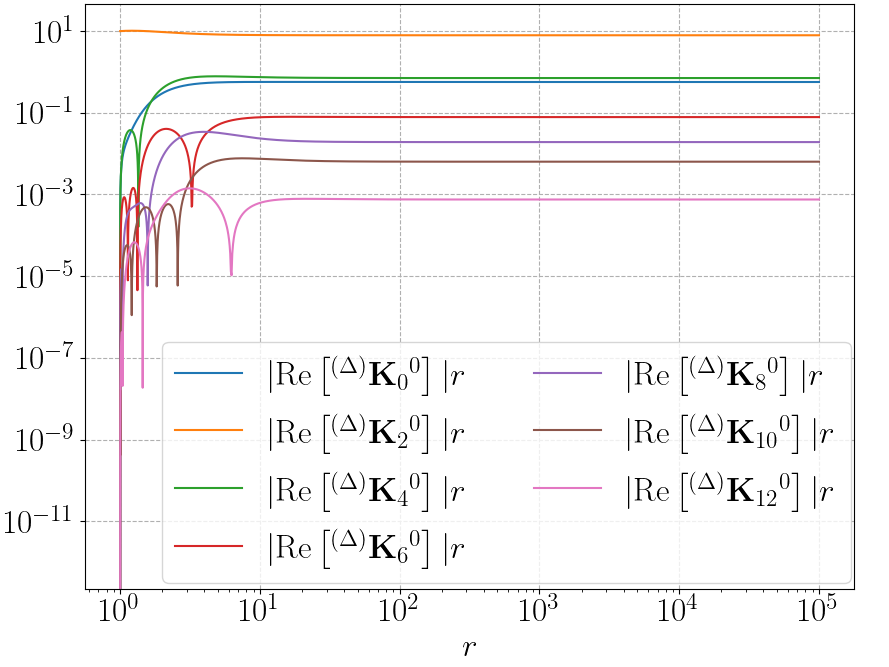}
      \caption{Asymptotic behavior of the product of the dominant modes of $\dKK$ and $r$.}
      \label{fig:falloff:KK}
    \end{subfigure}\hspace{.5cm}
    \begin{subfigure}{.45\textwidth}
    \centering
      \includegraphics[width=\textwidth]{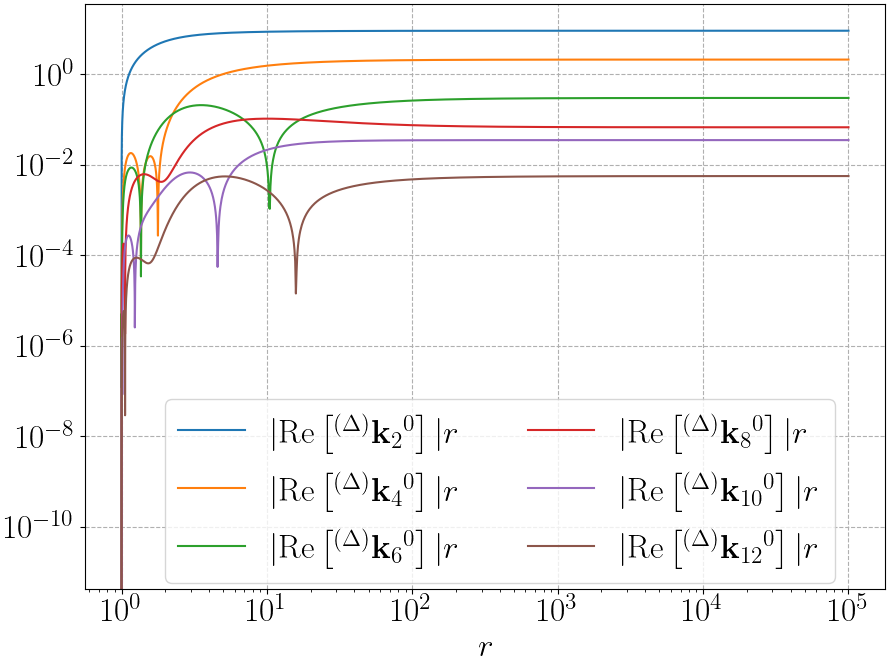}
      \caption{Asymptotic behavior of the product of the dominant modes of $\mathfrak{Re}\dkk$ and $r$.}
      \label{fig:falloff:Rekk}
    \end{subfigure}
    \\
    \begin{subfigure}{.45\textwidth}
    \centering
      \includegraphics[width=\textwidth]{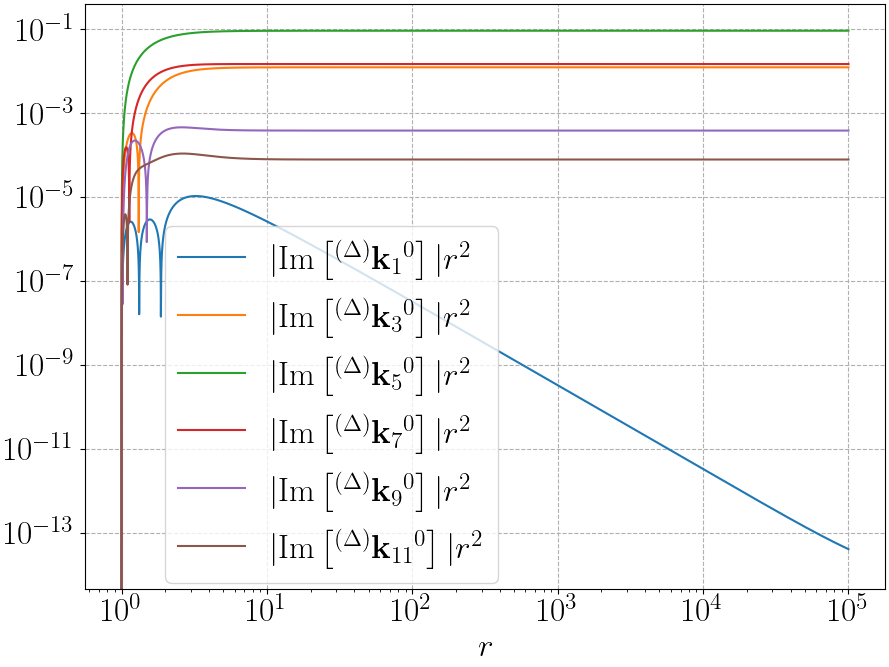}
      \caption{Asymptotic behavior of the product of the dominant modes of $\mathfrak{Im}\dkk$ and $r^2$.}
      \label{fig:falloff:Imkk}
    \end{subfigure}\hspace{.5cm}
    \begin{subfigure}{.45\textwidth}
    \centering
      \includegraphics[width=\textwidth]{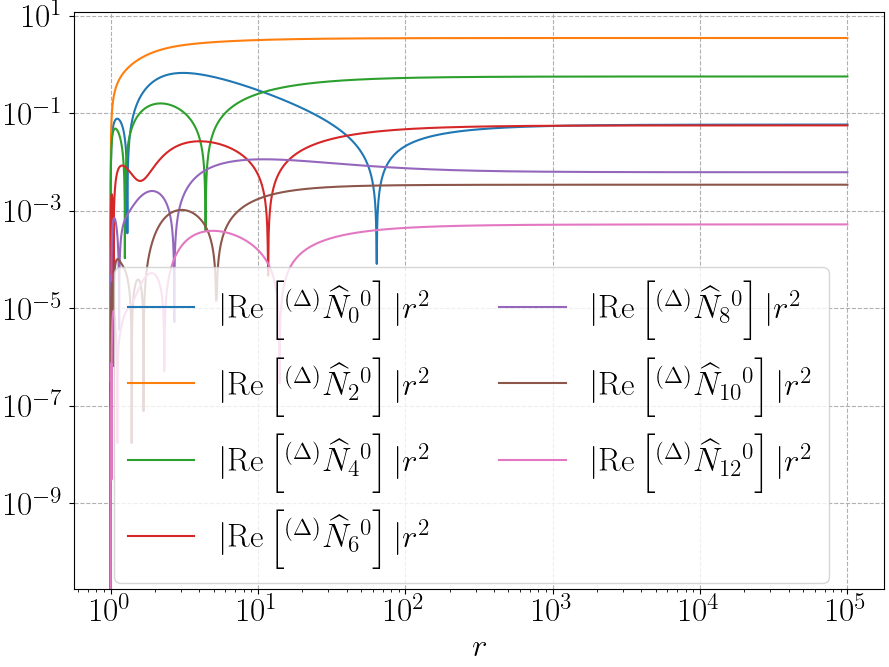}
      \caption{Asymptotic behavior of the product of the dominant modes of $\dNh$ and $r^2$.}
      \label{fig:falloff:Nh}
    \end{subfigure}
    \caption{The asymptotic behavior of the dominant modes of the constrained variables is indicated by the log-log plots through five orders of magnitude of the radial coordinate. Instead of plotting the radial dependence of the modes themselves, we plot these modes multiplied by the expected power of the radial coordinate, which in particular also demonstrates the accuracy level of the applied numerical integrator.}
    \label{fig:falloff}
  \end{figure}
  Summarizing the behavior shown we also get 
  \begin{align}
    \KK&=\bgKK_0+(\bgKK_1+\dKK_1)r^{-1}+\OO{-2}\,,\label{eq:asymK}\\
    \kk&=\bgkk_0+(\bgkk_1+\mathfrak{Re}[\dkk]_1)r^{-1}+\OO{-2}\,,\label{eq:asymk}\\
    \Nhat&=\bgNh_1r^{-1}+(\bgNh_2+\dNh_2)r^{-2}+\OO{-3}\,.\label{eq:asymNh}
  \end{align}
  The property that perturbations appear only in the next to leading order is exploited as follows. The known explicit dependence of the expansion coefficients on the asymptotic degrees of freedom can then be used to determine the falloff properties of various combinations of the constrained fields and the free data. For example, since the leading order behavior is completely determined by the free data, the following three relations can be seen to hold in a straightforward manner.

  The first one is $|\KK-2\kkappa_0|\sim r^{-1}$. In virtue of equation \eqref{eq:asymK} we get
  \begin{equation}
    |\KK-2\kkappa_0|=|\bgKK_0-2\kkappa_0+\OO{-1}|=\OO{-1},
  \end{equation}
  since the background satisfies $\bgKK_0=2\kkappa_0$.
  The second one is $|\kk-\eth\kkappa_0/\kkappa_0|\sim r^{-1}$. It follows from equation \eqref{eq:asymk} that
  \begin{equation}
    |\kk-\eth\kkappa_0/\kkappa_0|=|\bgkk_0-\eth\kkappa_0/\kkappa_0+\OO{-1}|=\OO{-1}\,,
  \end{equation}
  since the background satisfies $\bgkk_0=\eth\kkappa_0/\kkappa_0$.
  The third one is $|\Nhat-(\kkappa_0\,r)^{-1}|\sim r^{-2}$. Then, equation \eqref{eq:asymNh} implies
  \begin{equation}
    |\Nhat-(\kkappa_0\,r)^{-1}|=|(\bgNh_1-\kkappa_0{}^{-1})r^{-1}+\OO{-2}|=\OO{-2}\,,
  \end{equation}
  since the background satisfies $\bgNh_1=\kkappa_0{}^{-1}$.

  The last four interrelations involve somewhat more complicated combinations of the constraint variables, and thus provide a more rigorous test of the numerical accuracy achieved. The fourth interrelation is $|(\KK-4\kkappa_0)r^{-1}+2\kkappa_0{}^2\Nhat|\sim r^{-3}$. Splitting this into background and deviation gives us
  \begin{multline}
    |(\KK-4\kkappa_0)r^{-1}+2\kkappa_0{}^2\Nhat|=\\
    \left|\left[(\bgKK-4\kkappa_0)r^{-1}+2\kkappa_0{}^2\bgNh\right]+\dKK r^{-1}+2\kkappa_0{}^2\dNh\right|\,.
  \end{multline}
  The terms in square brackets are given only in terms of background variables, and the entire term belongs to $\OO{-3}$. The behavior of the remaining two terms can be seen on one of the plots in Fig.\,\ref{fig:smoothnessallinone}. We see that the deviation satisfies this relation with reasonable accuracy.
  The fifth one is $|\kk-\eth\kkappa_0/\kkappa_0-\frac{1}{2}\eth(\KK/\kkappa_0)|\sim r^{-2}$. Separating the background and the deviations we arrive at
  \begin{multline}
    \left|\kk-\eth\kkappa_0/\kkappa_0-\frac{1}{2}\eth(\KK/\kkappa_0)\right|=\\
    \left|\left[\bgkk-\eth\kkappa_0/\kkappa_0-\frac{1}{2}\eth(\bgKK/\kkappa_0)\right]+\dkk-\frac{1}{2}\eth(\dKK/\kkappa_0)\right|\,.
  \end{multline}
  The bracketed term again contains only background quantities and the whole belongs to $\OO{-2}$. The behavior of the sum of the remaining two terms is shown by one of the plots in Fig.\,\ref{fig:smoothnessallinone}. This verifies that $|\kk-\eth\kkappa_0/\kkappa_0-\frac{1}{2}\eth(\KK/\kkappa_0)|$ decays at the expected rate.

  The sixth relation is $|\eth(\kkappa_0\Nhat)+(\kk-\eth\kkappa_0/\kkappa_0)r^{-1}|\sim r^{-3}$. Separating the background and the deviation terms as before, we get
  \begin{multline}
    |\eth(\kkappa_0\Nhat)+(\kk-\eth\kkappa_0/\kkappa_0)r^{-1}|=\\
    \left|\left[\eth(\kkappa_0\bgNh)+(\bgkk-\eth\kkappa_0/\kkappa_0)r^{-1}\right]+\eth(\kkappa_0\dNh)+\dkk r^{-1}\right|\,.
  \end{multline}
  The terms in parentheses are only background variables and belong to $\OO{-3}$. The behavior of the remaining terms is shown by one of the plots in Fig.\,\ref{fig:smoothnessallinone}, which verifies the expected decay rate.

  The seventh relation is $|\eth(\kkappa_0\Nhat)+\frac{1}{2}\eth(\KK/\kkappa_0)r^{-1}|\sim r^{-3}$. Separating the background and the deviations we get 
  \begin{multline}
    \left|\eth(\kkappa_0\Nhat)+\frac{1}{2}\eth(\KK/\kkappa_0)r^{-1}\right|=\\
    \left|\left[\eth(\kkappa_0\bgNh)+\frac{1}{2}\eth(\bgKK/\kkappa_0)r^{-1}\right]+\eth(\kkappa_0\dNh)+\frac{1}{2}\eth(\dKK/\kkappa_0)r^{-1}\right|\,.
  \end{multline}
  The term in parentheses contains only background variables and is $\OO{-3}$. The behavior of the additional terms is shown by one of the plots in Fig.\,\ref{fig:smoothnessallinone}. We see that the deviation satisfies this relation with reasonable accuracy.

  \begin{figure}[H]
  \centering
    \includegraphics[width=.8\textwidth]{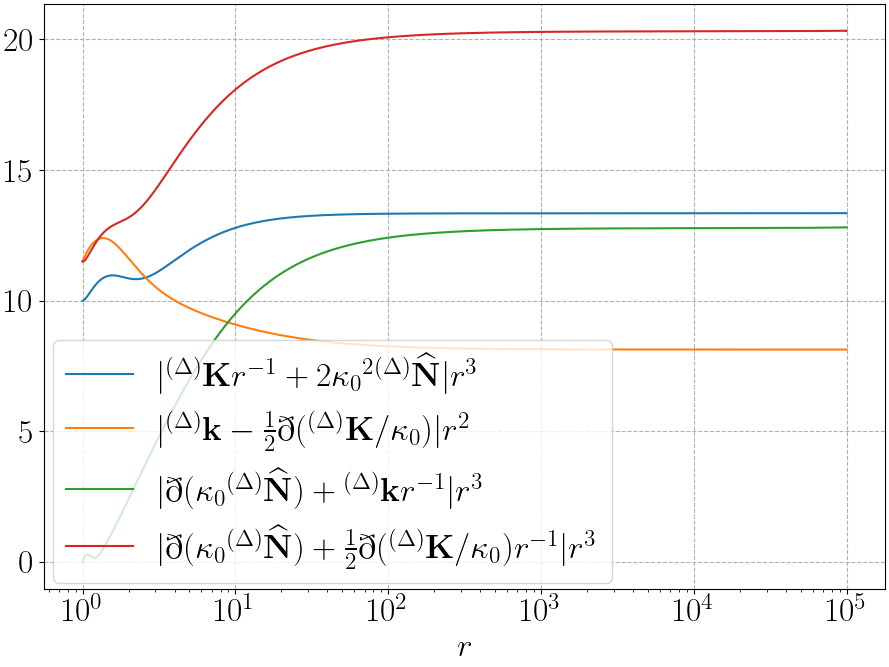}
    \caption{The asymptotic behavior of various combinations of the constrained fields and the free part of the data are plotted. Similar to Fig.\,\ref{fig:falloff}, instead of plotting the radial dependence of the modes themselves, we plot these modes multiplied by the expected power of the radial coordinate, which in particular also demonstrates the accuracy level of the applied numerical integrator. All of these plots confirm that the relevant combinations decay at the expected rates, as discussed in the text above. }
    \label{fig:smoothnessallinone}
  \end{figure}

  \subsection{Numerical evaluation of the Hawking and Bondi masses}
  \label{subsec: approx Hawking}
  
  The proof of Theorem \ref{main theorem 1} already indicated that a numerical evaluation of the Hawking mass can be tricky. This is so because if the desired cancellations of the higher order terms do not occur then the numerical evaluation of the Hawking mass will be contaminated by these errors introduced by the lower order terms.\,\footnote{Note that we do not have the analogous problem in the asymptotically flat case, since $\Nhat\rightarrow 1$ in the asymptotic limit. The asymptotic expansion of the critical integrand is then written as
  	\begin{align}
  		\Theta^{(+)}\Theta^{(-)}\sqrt{\dd} = \, & \KK_0^2\cdot r^2+2 \KK_0 \KK_1 \cdot r
  		+\left(\aaa_0 \KK_0^2+2 \KK_0 \KK_2+\KK_1^2-4\Nhat_0^{-2}\right)+\OO{-1},
  	\end{align}
  	with $\Nhat_0=1$, $\KK_0=\KK_1=0$, which automatically gives the desired asymptotic form $-4+\OO{-1}$. Of course, in this case the Hawking mass asymptotes to the ADM mass.}
  
  To remedy these difficulties, motivated by ideas used in \cite{Beyer:2021kmi}, we used the radial variation of the Hawking mass given by \eqref{eq:Hexp}, which can be written as
  \begin{equation}
  	\mathscr{L}_r m_H=\frac{1}{2}m_H\frac{\mathscr{L}_r\mathcal{A}}{\mathcal{A}}+\frac{1}{16\pi}\sqrt{\frac{\mathcal{A}}{16\pi}}\int_{\mathscr{S}_r}\mathscr{L}_r\left(\big[\KK^2-\Kh^2\big]\,\epshat\right)\,,
  \end{equation}
  where $\Theta^{(+)}\Theta^{(-)}=\KK^2-\Kh^2$ was used. Then, using the relations 
  \begin{equation}
  	\mathscr{L}_r\mathcal{A}=\int_{\mathscr{S}_r}\Kstar\,\epshat\,,\quad {\rm and} \quad \mathscr{L}_r\,\epshat= \big[\Nhat\Kh+(\Dhat_i\Nhat^i)\big]\,\epshat\,,
  \end{equation}
  together with $\Kstar=\Nhat\Kh$ and $r^a=\Nhat n^i + \Nhat^i$ \cite{Racz:2017krc}, we get  
  \begin{align}\label{eq:var0}
  	\int_{\mathscr{S}_r}\mathscr{L}_r\left(\big[\KK^2-\Kh^2\big]\,\epshat \right)
    &=\int_{\mathscr{S}_r}\left(2\KK\mathscr{L}_r\KK-2\Kh\mathscr{L}_r\Kh\right)\epshat+\int_{\mathscr{S}_r}\big[\KK^2-\Kh^2\big]\,\left(\Kstar+(\Dhat_i\Nhat^i)\right)\,\epshat\,,
  \end{align}
  Noticing then that $\mathscr{L}_r\Kh=\mathscr{L}_r(\Kstar \Nhat^{-1})=\mathscr{L}_r(\Kstar) \Nhat^{-1}-\Kstar\Nhat^{-2} \mathscr{L}_r(\Nhat)$ we have that the integrands in \eqref{eq:var0} can be given by the free data and by the terms $\mathscr{L}_r(\Nhat)$ and $\mathscr{L}_r\KK$ determined by the parabolic-hyperbolic equations \eqref{eq:phN} and \eqref{eq:phK}, respectively. Based on this, and dropping total divergences whose integral does not contribute, we get 
  \begin{align}
    \mathscr{L}_r m_H&=\frac{1}{2}\frac{m_H}{\mathcal{A}}\int_{\mathscr{S}_r}\Kstar\,\epshat+\frac{1}{16\pi}\sqrt{\frac{\mathcal{A}}{16\pi}}\int_{\mathscr{S}_r}\left(2\KK\,\partial_r\KK-2\Kstar\Nhat^{-2}\big[\mathscr{L}_{r}\Kstar -\Nhat^{-1} \mathscr{L}_{r}\Nhat \big])\right)\epshat\nonumber\\
    &+\frac{1}{16\pi}\sqrt{\frac{\mathcal{A}}{16\pi}}\int_{\mathscr{S}_r}\left(\KK^2-\Kh^2\right)\Kstar\,\epshat\nonumber\\
    &-\frac{1}{16\pi}\sqrt{\frac{\mathcal{A}}{16\pi}}\int_{\mathscr{S}_r}\left[\KK\,(\NNt\,\ethb\KK+\NNtbar\,\eth\KK)-\Kh\,(\NNt\,\ethb\Kh+\NNtbar\,\eth\Kh)\right]\epshat\,.
    \label{eq:drmH}
  \end{align}
  Note that although \eqref{eq:drmH} explicitly includes $m_H$, this is not a technical problem, since \eqref{eq:drmH} is of first order and requires the value of $m_H$ given by \eqref{eq:Hexp} only on the initial surface. The value of $m_H$ on other $r=const$ level surfaces is determined by integration. 

  Although the procedure outlined above was inspired by \cite{Beyer:2021kmi}, there are some important differences. The authors in \cite{Beyer:2021kmi} kept only the leading order terms from the asymptotic expansions $\sqrt{\dd}=r^2+\OO{0}$ and $\sqrt{\mathcal{A}/16\,\pi}=r/2+\OO{0}$. Accordingly, in \cite{Beyer:2021kmi} the evolution equation for the Hawking mass was derived from
  \begin{equation}
  	m_H=\frac{r}{2}+\frac{r^3}{32\pi}\int_{\mathscr{S}_r}(\KK^2-\Kstar{}^2\Nhat{}^{-2})\, \boldsymbol{\epsilon}_q\,,
  	\label{eq:Hleading}
  \end{equation}
  where $\boldsymbol{\epsilon}_q$ is the volume element of the unit sphere. Note that this formula is only exact if the free data $\widehat{\gamma}_{ab}$ is spherically symmetric. However, as shown on Fig.\,\ref{fig:HawkingComp}, the Hawking mass determined using \eqref{eq:Hleading} is insensitive to nonlinear contributions from the strong field regime. On Fig.\,\ref{fig:HawkingComp} are the plots obtained by integrating \eqref{eq:drmH}, with the Hawking mass evaluated by integrating (3.48) of \cite{Beyer:2021kmi}, and with the sphere-by-sphere evaluation of \eqref{eq:Hexp}. While the first two methods produce stable asymptotic values, the sphere-by-sphere evaluation of \eqref{eq:Hexp} quickly leads to inaccurate values due to rounding errors. The two asymptotic values differ significantly because the integration of (3.48) of \cite{Beyer:2021kmi} is insensitive to the strong field region contributions in the Kerr case. Note that the similarities in the initial behavior of the sphere-by-sphere evaluation of \eqref{eq:Hexp} and that obtained by the integration of \eqref{eq:drmH} are convincing. The Hawking mass tends to an asymptotic value surprisingly fast, but at the very end, after $r=10^4$, we lose precision and numerical noise dominates the data. A polynomial fit ignoring the last $100$ points (end of fit around $r=17500$) tells us that the specific value of the Bondi mass is $m_B=1.868$, almost twice the ``Bondi mass'' of the Kerr background, which also confirms that the applied perturbation is highly non-linear.
  \begin{figure}[H]
  \centering
  \begin{subfigure}{.465\textwidth}
    \centering
    \includegraphics[width=.9\textwidth]{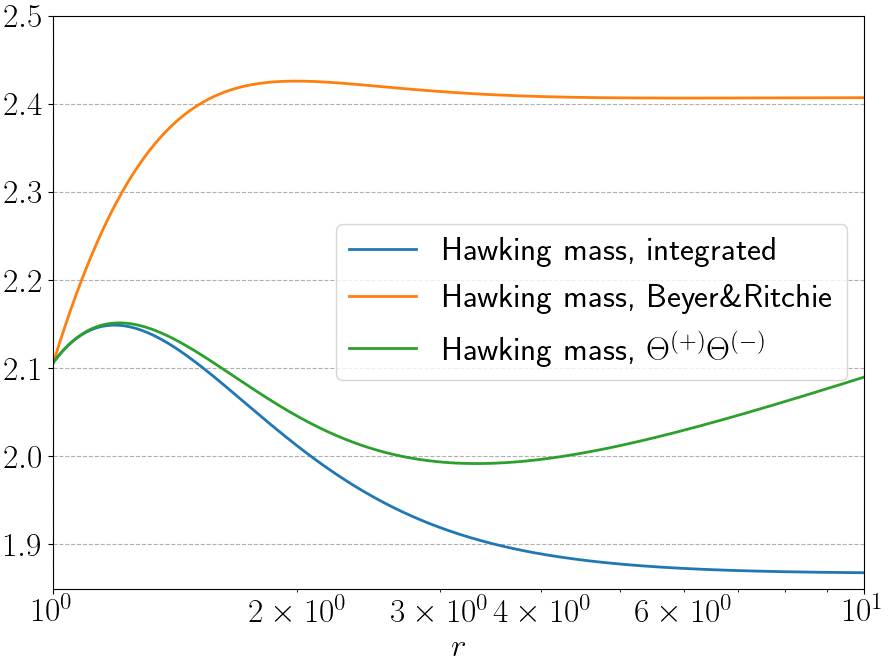}
    \caption{The radial dependence of the Hawking mass is shown using the three different methods described in the text above.}
    \label{fig:HawkingComp}
  \end{subfigure}\hspace{.5cm}
  \begin{subfigure}{.465\textwidth}
    \centering
    \includegraphics[width=.9\textwidth]{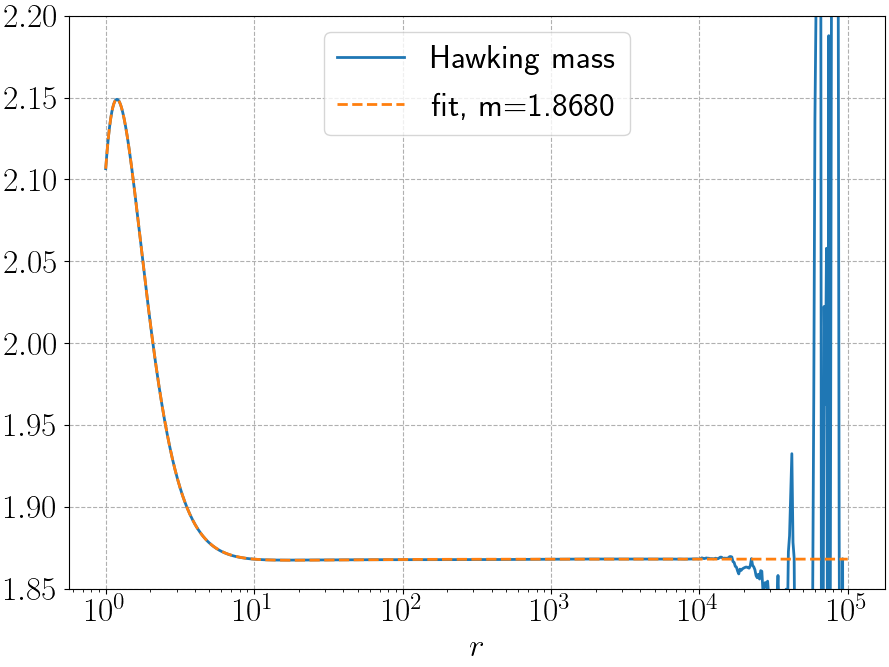}
    \caption{The asymptotic behavior of the Hawking mass, obtained by integrating \eqref{eq:drmH}, is shown along with a polynomial fit.}
    \label{fig:HawkingWhole}
  \end{subfigure}
  \caption{The radial dependence of the Hawking mass of the perturbed initial data is plotted. The left panel compares the result obtained by three different methods as discussed in the text above. The blue graph shows the result obtained by integrating \eqref{eq:drmH}, the orange graph shows the behavior determined by the simplified formula used in \cite{Beyer:2021kmi}, and finally the green graph is obtained by a simple evaluation of the integral in \eqref{eq:Hexp}. The right panel shows the asymptotic behavior of the Hawking mass resulting from the integration of \eqref{eq:drmH}, along with a polynomial fit that was used to read the value of the Bondi mass.}
  \label{fig:Hawking}
  \end{figure}

  \section{Summary}
  \label{sec:summ}
  This paper consists of two essentially separate parts. The dominant part of the paper provides a careful analysis of analytic issues related to the space of generic properties of asymptotically hyperboloidal initial data configurations. These investigations were motivated by the fact that the asymptotic expansions of generic asymptotically hyperboloidal initial data solutions to the elliptic form of the constraint equations are long known to involve polyhomogeneous functions \cite{Andersson:1992yk,Andersson:1993we,Andersson:1994ng,Andersson:1996xd}.
  
  Our first main result is that once the considerations are restricted to the asymptotically hyperboloidal initial data subject to the parabolic-hyperbolic form of the constraint equations which also admit a  well-defined Bondi mass the corresponding asymptotically hyperboloidal initial data sets are free of logarithmic singularities. Note that the essential difference between the evolutionary and elliptic forms of the constraints is that the Lichnerowicz-York method makes extensive use of conformal rescaling of the constrained variables, whereas no such conformal rescaling occurs when the parabolic-hyperbolic form of the constraints is used. It seems plausible that this key difference accounts for the efficiency of the evolutionary method in the present case.
  
  Our second main result provides a substantial generalization of a recent result of Beyer and Ritchie \cite{Beyer:2021kmi} by demonstrating that the existence of well-defined Bondi mass and angular momentum, together with some mild consistently solvable algebraic restrictions on a part of the free data, implies that the generic solutions of the parabolic-hyperbolic form of the constraint equations are smooth, thereby entirely free of logarithmic singularities. Combining these results with those of the corresponding hyperboloidal initial value problem \cite{Friedrich:1991nn,Friedrich:2015bax,Frauendiener:2000mk,Kroon:2016ink} we can conclude that the Cauchy developments of the corresponding asymptotically hyperboloidal initial data specifications must admit smooth conformal boundary as assumed in the original definition of Penrose \cite{Penrose:1962ij}.
  
  \medskip
  
  The analytical study discussed above is complemented by some related numerical investigations. In particular, we show that highly nonlinear perturbations of near Kerr initial data can be constructed numerically. It is also shown that our analytical results clearly manifest themselves in these precise numerical studies. For example, various combinations of the numerical data fall off exactly at the rate that can be deduced from our analytic considerations. A careful numerical evaluation of the Hawking mass of the resulting initial data, together with the asymptotic behavior of the Hawking mass, and thus the value of the Bondi mass, is also determined. Finally, note that all the conditions on the free data, given by \eqref{eq:alg-theorem}, \eqref{eq:extra2} and \eqref{eq:extra1}, automatically hold for the considered nonlinear perturbations of the Kerr initial data because the free data, as given in \eqref{eqs: bgNh}-\eqref{eqs: Kcqqbg}, satisfies them.

  \section*{Acknowledgments}

  The authors would like to thank Ingemar Bengtsson, Piotr Chrusciel, Gábor Tóth, and Jeffrey Winicour for their helpful comments.
  This work was supported in part by the Hungarian Scientific Research Fund NKFIH Grant No. K-142423 and NSF CAREER Award PHY-2047382.
  Data presented in this paper was produced on the Maple cluster of Mississippi Center for Supercomputing Research.

  \section*{Software and data availability}

  Data was produced by \verb|constraintSolver| \cite{noauthor_csukas_2024} available under MIT license. Figures were produced using \verb|matplotlib| \cite{Hunter:2007}, \verb|numpy| \cite{harris2020array}, and \verb|pandas| \cite{reback2020pandas,mckinney-proc-scipy-2010}. The raw data along with scripts producing the figures are available to the public \cite{csukas_2025_14779011}.
  
  \appendix
  \numberwithin{equation}{section}
  \setcounter{equation}{0}
  
  
  \section{Terms involving the two-metric ${\widehat \gamma}{}_{kl}$}

  The metric that is induced on the $\mathscr{S}_r$ level surfaces can be given as 
  \begin{equation}\label{ind_metr}
  	\widehat\gamma_{ab}={\bf a}\, q_{ab}+\tfrac12\left[ \mathbbm{\bf b} \,\overline q_a \overline q_b + \,\overline{\mathbbm{\bf b}}\, q_a q_b \right]\,, 
  \end{equation}
  where 
  \begin{equation}
  	\mathbbm{\bf a}=\tfrac12\,\widehat\gamma_{ab}\,q^a \,\overline q^b \hskip0.5cm  {\rm and} \hskip0.5cm \mathbbm{\bf b}=\tfrac12\,\widehat\gamma_{ab}\,q^a q^b\,.
  \end{equation}
  It is straightforward to verify that the inverse $\widehat\gamma^{ab}$ metric can be written as
  \begin{equation}\label{inv_ind_metr}
  	\widehat\gamma^{ab}=\mathbbm{\bf d}^{-1}\left\{\mathbbm{\bf a}\, q^{ab}-\tfrac12\left[ \mathbbm{\bf b} \,\overline q^a \overline q^b
  	+ \,\overline{\mathbbm{\bf b}}\, q^a q^b \right]\right\}\,, 
  \end{equation}
  where 
  \begin{equation}
  	\mathbbm{\bf d}=\mathbbm{\bf a}^2-\mathbbm{\bf b}\,\overline{\mathbbm{\bf b}}
  \end{equation}
  denotes for the ratio  $\det(\widehat\gamma_{ab})/\det(q_{ab})$ of the determinants of $\widehat\gamma_{ab}$ and $q_{ab}$, which is a function on the $\mathscr{S}_r$ level surfaces.

  \section{The trace-free part of ${\widehat K}{}_{kl}$}
   
  The extrinsic curvature $\widehat K_{ij}$ of $\mathscr{S}_r$ is given as
  \begin{align}\label{hatextcurv2}
  	\widehat K_{ij}=  \tfrac12\,\mathscr{L}_{\widehat n} {\widehat \gamma}_{ij}={}& \tfrac12\,{\widehat N}^{-1}[\,\mathscr{L}_{r}{\widehat \gamma}_{ij}
  	- ( \widehat D_i\widehat N_j + \widehat D_j\widehat N_i )] \nonumber \\ = {}& \tfrac12\,{\,\widehat{N}}^{-1}[(\partial_r\mathbbm{\bf a})\,q_{ij}
  	+\tfrac12\,[\left(\partial_r\mathbbm{\bf b}\right)\,\overline q_i\,\overline q_j + \left(\partial_r\,\overline{\mathbbm{\bf b}}\right) q_i q_j] 
  	- (\widehat D_i\widehat N_j + \widehat D_j\widehat N_i )]\,,
  \end{align}
  where in the last step the Lie-invariance of the dyad field $q^i$ and the explicit form of (\ref{ind_metr}) were used.  We also get
  \begin{align}\label{hatextcurv_trace}
  	{}& \,\widehat{{K}}  = \tfrac12\,({\widehat{{N}}\,\mathbbm{\bf d}})^{-1} \biggl[\Bigl\{\mathbbm{\bf a}\,[\,(\partial_r\mathbbm{\bf a}) - (\,\overline\eth\,\mathbf{N}) 
  	+ \,\overline{\mathbf{B}}\,\mathbf{N} \,] \Bigr. \nonumber 
  	\\ {}& \hskip4.1cm \Bigl.  - \mathbbm{\bf b}\,[\,(\partial_r\,\overline{\mathbbm{\bf b}}) - \,(\,\overline{\eth}\,\overline{\mathbf{N}}) 
  	+\tfrac12\, \,\overline{\mathbf{C}}\,\mathbf{N} +\tfrac12\, \,\overline{\mathbf{A}}\,\overline{\mathbf{N}} \,]  \Bigr\} + ``\,cc\,"\biggr] \, ,
  \end{align}
  and using \eqref{hatextcurv2} that
  \begin{equation}\label{hatextcurv_qq}
  	\,\widehat{{K}}{}_{qq} =  q^i q^j\widehat K_{ij} 
  	=\tfrac12{\widehat{{N}}}^{-1}\left\{2\,\partial_r\mathbbm{\bf b} - 2\,\eth\,\mathbf{N}
  	+ {\mathbf{C}}\,\overline {\mathbf{N}} +\mathbf{A} {\mathbf{N}} \,  \right\} \,, 
  \end{equation}
    where 
    \begin{align} \label{ABC}
    	\mathbf{A} = {}&  q^a q^b {C^e}{}_{ab}\,\overline q_e = \mathbf{d}^{-1}\left\{ \mathbf{a}\left[2\,\eth\,\mathbf{a} -\,\overline{\eth}\,\mathbf{b}\right] 
    	-  \,\overline{\mathbf{b}}\,\eth\,\mathbf{b} \right\} \nonumber \\
    	\mathbf{B} = {}&  \,\overline q^a q^b {C^e}{}_{ab}\,q_e = \mathbf{d}^{-1}\left\{ \mathbf{a}\,\overline{\eth}\,\mathbf{b}
    	- \mathbf{b}  \,\eth\,\overline{\mathbf{b}}\right\} \\
    	\mathbf{C} = {}&  q^a q^b {C^e}{}_{ab}\,q_e = \mathbf{d}^{-1}\left\{ \mathbf{a}\,\eth\,\mathbf{b} -  \mathbf{b}\left[2\,\eth\,\mathbf{a}
    	-\,\overline{\eth}\,\mathbf{b}\right] \right\} \,, \nonumber
    \end{align}
    where the $(1,2)$-type tensor field ${C^e}{}_{ab}$ relates the covariant derivative operators $\widehat D_{a}$ and ${\mathbb D}_{a}$ with respect to the metrics, $\widehat{\gamma}_{ab}$ and $q_{ab}$, respectively, (hence, in particular, $\widehat D_{i}\widehat N_j={\mathbb D}_{i} \widehat N_j-{C^k}{}_{ij}\widehat N_k$ holds), and also where $``\,cc\,"$ abbreviates the complex conjugate of the preceding term at the appropriate level of the hierarchy.

    Then, using the trace free part of ${\widehat K}_{ij}$ 
    \begin{equation}\label{eq: intKhat}
    	\interior{\widehat K}_{ij}={\widehat K}_{ij}-\tfrac12\,\widehat \gamma_{ij}{\widehat K}\,,
    \end{equation}
    we get
    \begin{equation}\label{eq: intKhat2}
    	\interior{\widehat K}_{qq}= {\widehat K}_{qq} - \mathbbm{\bf b} \, {\widehat K}\,.
    \end{equation}
    which along with \eqref{ind_metr}, \eqref{hatextcurv_trace} and \eqref{eq: intKhat2} gives finally that
  	\begin{align}\label{inthatextcurv_qq}
  	\,\interior{\widehat{K}}{}_{qq} = {}& 
 	 \tfrac12{\widehat{{N}}}^{-1}\Bigl\{2\,\partial_r\mathbbm{\bf b} - 2\,\eth\,\mathbf{N}
   	+ {\mathbf{C}}\,\overline {\mathbf{N}} +\mathbf{A} {\mathbf{N}} \,  \Bigr\} \nonumber  \\ 
   	{}& + 
   	\mathbbm{\bf b}\,({\widehat{{N}}\,\mathbbm{\bf d}})^{-1} \biggl[ \Bigl\{\mathbbm{\bf a}\,\bigl[\,(\partial_r\mathbbm{\bf a}) - (\,\overline\eth\,\mathbf{N})  + \,\overline{\mathbf{B}}\,\mathbf{N} \,\bigr] \Bigr.\biggr. \nonumber \\ {}& \hskip2.5cm \biggl. \Bigl. - \mathbbm{\bf b}\,\bigl[\,(\partial_r\,\overline{\mathbbm{\bf b}}) - \,(\,\overline{\eth}\,\overline{\mathbf{N}}) +\tfrac12\, \,\overline{\mathbf{C}}\,\mathbf{N} +\tfrac12\, \,\overline{\mathbf{A}}\,\overline{\mathbf{N}} \,\bigr]  \Bigr\} + ``\,cc\,"
   	\biggr] \,. 
 	 \end{align}
 	 
 	Note also that for any symmetric trace-free tensor field $\interior{T}_{ab}$ the algebraic contents of $\interior{T}_{ab}$ and $\interior{T}_{qq}=\interior{T}_{ab}q^aq^b$ are equivalent. To see this, recall that  
 	 \begin{align}\label{eq: intTqq}
 	 	\interior{T}_{ab}=\tfrac14\,\interior{T}_{ef}(q_a \overline q^e + \overline q_a q^e)(q_b \overline q^f + \overline q_b q^f) = \tfrac14\,\big[q_aq_b \interior{T}_{\overline q \,\overline q}+2\,q_{ab}\interior{T}_{q \overline q} + \overline q_a\overline q_b \interior{T}_{q q}\big]\,
 	 \end{align}
	holds, where 
	\begin{equation}
		\interior{T}_{\overline q \,\overline q}=\overline{\interior{T}_{q q}}\quad {\rm and}\quad \interior{T}_{q \,\overline q}=(2\,\aaa)^{-1}\big[\,\bb\,\overline{\interior{T}_{q q}}+\overline{\bb}\,\interior{T}_{q q}\big]\,.
	\end{equation}

	\section{Useful relations for conformal rescaling operations}\label{appendix:decomp}
	
	\begin{itemize}
		\item ${\widetilde n}_a=\Omega\,n_a$\,, \ 
		 ${\widetilde h}_{ab}=\Omega^2\,h_{ab}$\,, \ ${\widetilde h}^{ab}=\Omega^{-2}\,h^{ab}$ \ \ projectors \ \ ${\widetilde h}_{a}{^b}= {\widetilde h}_{ae}{\widetilde h}^{eb}= h_{ae}h^{eb}= h_{a}{^b}$
		\item ${\widetilde K}_{ab}=\Omega\,K_{ab} - \mathscr{L}_{n} \Omega\,{h}_{ab} $
		\item ${\tildeon \nhat}_a=\Omega\,\nhat_a$\,, \  ${\tildeon \gammahat}_{ab}=\Omega^2\,\gammahat_{ab}$\,, \ ${\tildeon \gammahat}{}^{ab}=\Omega^{-2}\,\gammahat^{ab}$ \ \ projectors \ \ ${\tildeon \gammahat}_{a}{^b}= {\tildeon \gammahat}_{ae}{\tildeon \gammahat}{}^{eb}= \gammahat_{ae}\gammahat^{eb}= \gammahat_{a}{^b}$
		\item ${\tildeon \KK}_{ab}=\Omega\,\KK_{ab} - \mathscr{L}_n \Omega\,\gammahat_{ab}$\,, \ 
		 $\circon{\tildeon \KK}_{ab}=\Omega\,\circon\KK_{ab}$
		\item ${\tildeon{\widehat K}}_{ab}=\Omega\,{\widehat K}_{ab} - \mathscr{L}_{\widehat{n}} \Omega\,\gammahat_{ab}$\,, \ 
		${{\tildeon {\widehat K}}}{}^\circ_{ab}=\Omega\,\circon{\widehat K}_{ab}$
		
		\item  Some components of the conformally rescaled (non-physical) extrinsic curvature are
			\begin{align}	
			\hskip-3.2cm\widetilde{K}_{rr} & = \Omega\,K_{rr} - \mathscr{L}_n \Omega\,\big({\widehat N}^2+{\widehat N}_A {\widehat N}^A \big)\\  & = \Omega\,\big[{\widehat N}^2 \kkappa  + 2\,{\widehat N} \kk_A {\widehat N}^A + \KK_{AB}{\widehat N}^A{\widehat N}^B\big] - \,\mathscr{L}_n \Omega\,\big({\widehat N}^2+{\widehat N}_A {\widehat N}^A \big) \nonumber\\
			\widetilde{K}_{r A} & = \Omega\,\big[{\widehat N} \kk_A  + \KK_{AB}\gammahat^{BF}{\widehat N}_F\big] - \mathscr{L}_n \Omega\,{\widehat N}_A \,,
		    \end{align}
			where the $(2+1)$-decomposition of the three-metric
				\begin{equation}
					h_{ab}=\left(
					\begin{array}{cc}
								{\widehat N}^2+{\widehat N}_A {\widehat N}^A  & {\widehat N}_A \\
								{\widehat N}_B & \gammahat_{AB}
					\end{array}
					\right)\,,
		    	\end{equation}
	       and the reconstruction of the physical extrinsic curvature 
	    	\begin{enumerate}
	    		\item[(i)] $K_{rr} = {\widehat N}^2 \kkappa  + 2\,{\widehat N} \kk_A {\widehat N}^A + \KK_{AB} {\widehat N}^A{\widehat N}^B $
	    		\item[(ii)] $K_{r A} = {\widehat N} \kk_A  + \KK_{AB}{\widehat N}^B $
	    		\item[(iii)]  $K_{AB} = \KK_{AB}\,,$
	        \end{enumerate}
            were used.
		    
		    Finally, some contractions
            \begin{enumerate}
            	\item[(i)] 
            	\begin{equation}\label{appendix:gammaNN}
            		\gammahat_{EF}{\widehat N}^E{\widehat N}^F={\widehat N}_E {\widehat N}^E =\tfrac12\,(q_F \overline q^E + \overline q_F q^E) {\widehat N}_E {\widehat N}^F=\tfrac12\,({\mathbf{N}} \overline{\tildeon {\mathbf{N}}}+\overline{\mathbf{N}} \tildeon {\mathbf{N}}) 
            	\end{equation}
            	\item[(ii)] 
            	\begin{equation}
            		{\mathbf{k}}_A=\tfrac12\,(q_A \overline q^E + \overline q_A q^E)\,{\mathbf{k}}_E= \tfrac12\,(q_A\overline{\mathbf{k}}+\overline q_A{\mathbf{k}})
            	\end{equation}
            	\item[(iii)] 
            	\begin{equation}
            		\KK_{AB}= \interior{\KK}_{AB} + \tfrac12\,\gammahat_{AB}\,\KK
            	\end{equation} 
            	\item[(iv)]    
            	\begin{align} 
            		\interior{\KK}_{AF}{\widehat N}^F & =\tfrac12\,(q_A \overline q^E + \overline q_A q^E)\,\interior{\KK}_{EF}{\widehat N}^F=\tfrac12\,(q_A \interior{\KK}_{{\overline q}F} + \overline q_A \interior{\KK}_{qF})\,{\widehat N}^F \\ \nonumber & = \tfrac12\,(q_A \interior{\KK}_{{\overline q}F} + \overline q_A \interior{\KK}_{qF})\,\tfrac12\,\big(q_H \overline q^F + \overline q_H q^F\big)\,\,{\widehat N}^H \\  \nonumber & =\tfrac14\,\big(q_A \big[\interior{\KK}_{{\overline q}{\overline q}}{\tildeon {\mathbf{N}}}+\interior{\KK}_{{ q}{\overline q}}\overline{\tildeon {\mathbf{N}}}\big] + \overline q_A \big[\interior{\KK}_{{ q}{\overline q}}\,{\tildeon {\mathbf{N}}}+\interior{\KK}_{{ q}{ q}}\,\overline{\tildeon {\mathbf{N}}}\big]\big) \nonumber \\ 
            		\tfrac12\,\gammahat_{AF}\,{\widehat {N}}^{F} & = \tfrac14\,\big[\aaa\, (q_A \overline{\tildeon {\mathbf{N}}} + \overline q_A {\tildeon {\mathbf{N}}})+(\overline q_A \bb\, \overline{\tildeon {\mathbf{N}}} + q_A \overline \bb \,{\tildeon {\mathbf{N}}})\big]
             	\end{align}
            	\item[(v)] 
            	\begin{align} \label{appendix:KNN}
            		\KK_{AB}{\widehat N}^A{\widehat N}^B & = \tfrac14\,\big(q_E \overline q^A + \overline q_E q^A\big)\big(q_F \overline q^B + \overline q_F q^B\big)\,\KK_{AB}{\widehat N}^E{\widehat N}^F  \\ \nonumber & = \tfrac14\,\big(\interior{\KK}_{{\overline q}{\overline q}}{\tildeon {\mathbf{N}}}^2 +  2\,\interior{\KK}_{{ q}{\overline q}}{\tildeon {\mathbf{N}}}\overline{\tildeon {\mathbf{N}}} + \interior{\KK}_{{ q}{ q}}\overline{\tildeon {\mathbf{N}}}^2\big)\nonumber
            		\end{align}
            \end{enumerate}
	\end{itemize}

  \printbibliography
\end{document}